\newcommand{\be}{\begin{equation}}
\newcommand{\ee}{\end{equation}}
\newcommand{\ctr}{\gamma}
\newcommand{\diam}{\mathrm{diam}}
\newcommand{\dis}{\mathrm{dist}}
\newcommand{\I}{\mathrm{I}}
\newcommand{\Sp}{\mathrm{sp}}
\newcommand{\Z}{\mathbb{Z}}
\newcommand{\lab}{\mathrm{lab}}
\newcommand{\inB}{\partial_{\text{in}}}
\numberwithin{equation}{section}
\newcommand{\ctrb}{\Gamma}
\newcommand{\extb}{\Gamma^e}
\newcommand{\ctrbex}{\extb}
\newtheorem*{theorem*}        {Theorem}
\newtheorem*{conjecture*}   {Conjecture}
\newtheorem{theorem}{Theorem}[section]
\newtheorem{lemma}[theorem]{Lemma}
\newtheorem*{lemma*}          {Lemma}
\newtheorem{corollary}[theorem]{Corollary}
\newtheorem{proposition}[theorem]{Proposition}
\newtheorem{remark}          {Remark}[section]
\theoremstyle{definition}
\newtheorem{definition}[theorem]{Definition}
\def\moverlay{\mathpalette\mov@rlay}
\def\mov@rlay#1#2{\leavevmode\vtop{%
		\baselineskip\z@skip \lineskiplimit-\maxdimen
		\ialign{\hfil$\m@th#1##$\hfil\cr#2\crcr}}}
\newcommand{\charfusion}[3][\mathord]{
	#1{\ifx#1\mathop\vphantom{#2}\fi
		\mathpalette\mov@rlay{#2\cr#3}
	}
	\ifx#1\mathop\expandafter\displaylimits\fi}
\newcommand{\cupdot}{\charfusion[\mathbin]{\cup}{\cdot}}
\newcommand{\bigcupdot}{\charfusion[\mathop]{\bigcup}{\cdot}}
\begin{document}

\begin{center}
{\LARGE Phase Transition in Long-Range $q-$state Models via Contours. Clock and Potts models with Fields.}
\vskip.5cm
Lucas Affonso, Rodrigo Bissacot, Gilberto Faria, Kelvyn Welsch
\vskip.3cm
\begin{footnotesize}
Institute of Mathematics and Statistics (IME-USP), University of S\~{a}o Paulo, Brazil\\
\end{footnotesize}
\vskip.1cm
\begin{scriptsize}
emails: lucas.affonso.pereira@gmail.com, rodrigo.bissacot@gmail.com, gilberto.araujo@ifmt.edu.br, kelvyn.emanuel@gmail.com
\end{scriptsize}

\end{center}

\begin{abstract}
    Using the group structure of the state space of $q-$state models, a new definition of contour for long-range spin-systems in $\Z^d$ ($d\geq 2$), and a multidimensional version of Fr\"{o}hlich-Spencer contours, we prove phase transition for a class of ferromagnetic long-range systems which includes the Clock and Potts models. Our arguments work for the entire region of exponents of regular power-law interactions, namely $\alpha > d$, and for any $q \geq 2$. As an application, we prove phase transition for Potts models with decaying fields when the field decays fast enough and in the presence of a random external field. 
\end{abstract}

\section{Introduction}

After the Ising model \cite{ising1925}, one of the most studied models in statistical mechanics is its natural generalization when we have a $q$-state space ($q\geq 2$), the Potts model \cite{Potts1952} (for applications in several different areas of science see \cite{Rozikov2021}). Since its appearance, a good amount of the literature was produced about the Potts model (we will mention a non-exhaustive list of papers), using several different tools like reflection positivity \cite{Biskup2003, Biskup2006, Biskup2006_rp, Koteck1982, vanEnter1995}, mean-field theory \cite{Biskup2003, Biskup2006, Gobron2007, Pearce1980}, random-cluster model  \cite{Aizenman1988, Beffara2011, Biskup2000, bjornberg, Cioletti2015, Coquille2013, DUMINILCOPIN2021, DuminilCopin2016, Ray2020}, and contours \cite{Park.88.II, Pirogov.Sinai.75,Zahradnik.84}. Many of the results have as their primary goal the description of the Gibbs measures at low temperatures and at the critical temperature, but, additionally, they have to put further restrictions, such as assuming that the dimension is $d=2$, that the number of states $q$ is big enough (with respect to the dimension $d$), or that the dimension $d$ is sufficiently large. Most of the results consider nearest-neighbor interactions. When long-range interactions are considered, in the case of power-law decay, they do not cover the entire region of the exponents of regular interactions \cite{Biskup2006, Park.88.I}.

Since the emergence of Peierls' argument \cite{Peierls.1936}, contours have proven to be one of the most useful tools to get information about lattice systems at low temperature, culminating in the celebrated Pirogov-Sinai theory \cite{Pirogov.Sinai.75,Zahradnik.84}. In recent years, contour-based techniques in statistical mechanics and disordered systems have gained fresh impetus \cite{Ding2023}. Nonetheless, the dependence between different spins in long-range systems has a much more complex and rich structure, while the usual notion of connected contours has limited power to treat them due to the difficulty to control the interaction between these contours \cite{Park.88.II}.

In this paper, we will use the contours defined in \cite{Johanes} (see also \cite{cluster}), which were inspired by the generalization of the one-dimension contours introduced by Fröhlich and Spencer \cite{Frohlich.Spencer.82} to dimension $d \geq 2$ performed in \cite{Affonso.2021}. Adopting these contours, the aforementioned control of the interactions between them is feasible (see section \ref{energy}). With such control, we can study the phase transition phenomenon for long-range lattice models over a finite state space with mild restrictions. Our strategy combines our new definition of contour for long-range systems with an old approach which considers the group structure of the state space $\Z_q = \{1, \dots, q\}$ as in Ginibre \cite{Ginibre1970}, and also in Gruber, Hintermann and Merlini  \cite{Gruber}. The formalism allows us to employ the theory of Fourier analysis on finite groups and deal with a large class of interactions, including the Potts and Clock models.

\underline{\textit{Clock model}}: Also known as the vector Potts model, the clock model was introduced by Renfrey Potts in his PhD thesis \cite{Potts1952}, based on a suggestion by his advisor, Cyril Domb. The model generalizes the Ising model to describe situations where spins are not confined to a single direction but instead the $q$ states are uniformly distributed over the circle $S^1$. The formal Hamiltonian is given by

\begin{equation}
    H = - \sum_{x, y} J_{xy} \cos\left(\frac{2\pi}{q}(\sigma_x - \sigma_y)\right).
\end{equation}

\vspace{0.4cm}

\underline{\textit{Potts model}}: Also appeared for the first time in \cite{Potts1952}. The formal Hamiltonian is given by

\begin{equation}
    H = -\sum_{x, y} J_{xy} \mathbbm{1}_{\{\sigma_x = \sigma_y\}}.
\end{equation}

In this paper, we are only concerned with the symmetry-breaking phase transition, that occurs in low temperature. Such phase transition is characteristic of, for example, the Ising model, which corresponds to the case $q = 2$. Nevertheless, the behavior can change drastically in an arbitrary $q-$state Potts model. When $q$ is large enough, there is another kind of phase transition, which is a first-order transition in the temperature --- the $q$ ordered phases coexist with a disordered one. This was first proved by \cite{Koteck1982}, using reflection-positivity and was also accomplished by \cite{Bricmont1985} by means of contours and a refined version of Pirogov-Sinai theory. This version is an adaptation of \cite{Pirogov.Sinai.75, Pirogov1976, Sinai2014-pe} in which the ground states are replaced by a more general object, named \emph{restricted ensembles}. Other references tackling this type of phase transition are \cite{Biskup2006, DUMINILCOPIN2021, DuminilCopin2016, Laanait1991, Martirosian1986, Ray2020}.

Most of the results concerning the Potts and Clock models are restricted to short-range interactions --- that is, to cases where there exists $R > 0$ such that $J_{xy} = 0$ if $|x - y|>R$. Some exceptions are \cite{Aizenman1988, Biskup2006, Imbrie.Newman.88, Park.88.I, Park.88.II}.

In fact, the methods presented in this paper allow us to prove phase transition for any model in $\mathbb{Z}^d$ with $d \geq 2$ whose formal Hamiltonian can be written as 

\begin{equation}\label{Potts_sem_campo}
    H(\sigma) = - \sum_{x, y} J_{xy} \varphi(\sigma_x - \sigma_y),
\end{equation}
where $\varphi: \Z_q \to \mathbb{R}$ is any function such that $\varphi(0) > \varphi(n), \forall n \neq 0$ (ferromagnetism), and $J_{xy}\geq 0$ decaying polynomially with \emph{any} exponent $\alpha > d$. 

The phase transition results for these models are stated as follows.

\begin{theorem}\label{thm:phase_transition}
    Let $q \geq 2$ be a natural number. Consider the Hamiltonian 
    \begin{equation}
        H_{\Lambda}^\eta(\sigma) = -\sum_{\substack{\{x, y\} \subset \Lambda}}\hspace{-0.25cm} J_{xy}\varphi(\sigma_x - \sigma_y) - \sum_{\substack{x \in \Lambda \\ y \notin \Lambda}} J_{xy} \varphi(\sigma_x - \eta_y)
    \end{equation}
    defined on the configuration space $\{1, ..., q\}^{\Z^d}$. As above, $\varphi: \Z_q \to \mathbb{R}$ is such that $\varphi(0) > \varphi(n), \forall n \neq 0$. The interactions are given by 

     \begin{equation}\label{Long-Range Interaction}
    J_{xy} \coloneqq \begin{dcases}
                   \frac{J}{|x-y|^\alpha} &\text{ if }x\neq y,\\[0.2cm]
                   0                &\text{otherwise}, 
              \end{dcases}
\end{equation}
    for any $\alpha > d$ and $J>0$. Then, for every $C \in [0, 1)$, there is $\beta_0 = \beta_0(C, \alpha, d, \varphi, q, J)$ such that the finite-volume Gibbs measure defined by Equation \eqref{PM} satisfies
    \begin{equation}\label{eq_thm1}
        \mu_{\Lambda, \beta}^r(\sigma_0 = r) > C, \quad \forall \beta > \beta_0, \quad \forall r \in \Z_q.
    \end{equation}
    
\end{theorem}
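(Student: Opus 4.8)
The plan is to run a Peierls-type argument adapted to the long-range contours of \cite{Johanes,Affonso.2021}, combined with the Fourier-analytic treatment of the finite group $\Z_q$ in the spirit of \cite{Ginibre1970,Gruber}. Fix the monochromatic boundary condition $\eta_y \equiv r$. The first step is to express the event $\{\sigma_0 \neq r\}$ in terms of the occurrence of a contour surrounding the origin: if $\sigma_0 \neq r$, then $0$ is separated from the boundary by at least one contour $\gamma$ (in the sense of the contour decomposition of the configuration relative to the $r$-boundary condition), so
\begin{equation}
    \mu_{\Lambda,\beta}^r(\sigma_0 \neq r) \;\leq\; \sum_{\gamma \,\ni\, 0} \mu_{\Lambda,\beta}^r\big(\gamma \text{ occurs}\big),
\end{equation}
where the sum is over contours whose interior contains the origin. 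It then suffices to show that the right-hand side can be made smaller than $1-C$ uniformly in $\Lambda$ by taking $\beta$ large.

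The second and central step is the Peierls bound $\mu_{\Lambda,\beta}^r(\gamma \text{ occurs}) \leq e^{-c\beta\, \|\gamma\|}$ for a suitable notion of contour size $\|\gamma\|$ (incorporating both the "surface" cost and the long-range tails, as in the energy estimates of Section \ref{energy}). Here the group structure enters: flipping the spins inside a contour by a global shift $g \in \Z_q$ is a measure-class-preserving involution on configurations, and writing $\varphi$ in its Fourier expansion $\varphi(n) = \sum_{k} \widehat{\varphi}(k)\,\chi_k(n)$ on $\Z_q$, the ferromagnetic hypothesis $\varphi(0) > \varphi(n)$ for $n \neq 0$ guarantees that the energy difference produced by erasing the contour (choosing the shift that aligns the contour's exterior-facing spins with the surrounding $r$-region) is bounded below by a positive multiple of $\|\gamma\|$, controlled by $\min_{n\neq 0}(\varphi(0)-\varphi(n)) > 0$. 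This is exactly where the multidimensional Fröhlich--Spencer contours pay off: their geometric properties (the triangle-type inequalities and the summability of interactions between distinct contours established earlier) let one carry out the flip and bound the resulting Radon--Nikodym factor by $e^{-c\beta\|\gamma\|}$ times a term absorbing the interaction with all other contours.

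The third step is the entropy bound: the number of contours $\gamma \ni 0$ with $\|\gamma\| = n$ grows at most exponentially in $n$ — this is precisely one of the combinatorial estimates for this class of contours available from \cite{Johanes,Affonso.2021} and recalled in the excerpt's framework (it is the analogue, for $\alpha > d$, of the classical $3^n$-type count, valid for all $\alpha>d$ because the contour definition already encodes the long-range structure). Combining steps two and three,
\begin{equation}
    \mu_{\Lambda,\beta}^r(\sigma_0 \neq r) \;\leq\; \sum_{n \geq 1} (\text{number of }\gamma\ni 0,\ \|\gamma\|=n)\, e^{-c\beta n} \;\leq\; \sum_{n\geq 1} e^{(c' - c\beta) n},
\end{equation}
which tends to $0$ as $\beta \to \infty$; choosing $\beta_0$ so that this sum is below $1-C$ yields \eqref{eq_thm1}, uniformly in $\Lambda$, hence also in the thermodynamic limit. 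The main obstacle is the second step: unlike in the short-range case, a contour is not a connected surface and its "interior" interacts with \emph{every} other contour and with the boundary through the polynomially decaying tails, so the Peierls estimate must be set up so that these interaction energies are summable and do not destroy the one-contour bound — this is the technical heart inherited from the long-range contour machinery, and the role of the present proof is to check that the flip argument via the group $\Z_q$ is compatible with it and that the ferromagnetic gap $\min_{n\neq 0}(\varphi(0)-\varphi(n))$ provides the needed coercivity for arbitrary $q$ and arbitrary $\varphi$.
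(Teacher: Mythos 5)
Your proposal follows essentially the same route as the paper's proof: a Peierls argument with the multiscale long-range contours, the one-contour energy bound $H^q_{\Lambda}(\sigma)-H^q_{\Lambda}(\tau_{\gamma}(\sigma))\geq c_2|\gamma|$ coming from the ferromagnetic gap $\min_{n\neq 0}(\varphi(0)-\varphi(n))>0$ (Proposition \ref{BE}), the exponential entropy bound $|\mathcal{C}_0(n)|\leq e^{(c_1+\log q)n}$ (Proposition \ref{Bound_on_C_0_n}), and the resulting geometric series which is made smaller than $1-C$ by taking $\beta$ large. One small correction to your description of the key step: the erasing transformation is not a single global shift of $\Z_q$ applied inside the contour (which could not erase a contour whose support is non-constant and whose interior components carry distinct labels), but the map $\tau_{\gamma}$ that sets $\Sp(\gamma)$ to the boundary color and shifts each interior component $\I_n(\gamma)$ by its own label $-n$; also, the Fourier expansion of $\varphi$ plays no role in this theorem --- it is only needed for the Griffiths-inequality argument behind Corollary \ref{corol_phase_transition}.
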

\begin{corollary}\label{corol_phase_transition}
    Suppose that the Fourier transform $\widehat{\varphi}$ is non-negative. Then, for every $r, \ell \in \{1, ..., q\}$, $r \neq \ell$ implies that the thermodynamic limits $\mu^r_\beta$ and $\mu^\ell_\beta$ obtained with monochromatic boundary conditions do exist and are different for every $\beta > \beta_0$.
\end{corollary}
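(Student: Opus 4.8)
The plan is to separate the statement into two parts: the \emph{existence} of the limits, where the hypothesis $\widehat\varphi\geq 0$ is essential, and the fact that they are \emph{distinct}, which will follow almost immediately from Theorem~\ref{thm:phase_transition}. I would dispose of distinctness first. Fix $C=\tfrac34$ and let $\beta_0=\beta_0(\tfrac34,\alpha,d,\varphi,q,J)$ be the threshold of Theorem~\ref{thm:phase_transition}; then for $\beta>\beta_0$ and every finite $\Lambda$ one has $\mu^{r}_{\Lambda,\beta}(\sigma_0=r)>\tfrac34$ for all $r\in\Z_q$. Since $\{\sigma_0=r\}$ is a cylinder event, once the weak limit $\mu^{r}_{\beta}=\lim_{\Lambda\uparrow\Z^d}\mu^{r}_{\Lambda,\beta}$ is known to exist it satisfies $\mu^{r}_{\beta}(\sigma_0=r)\geq\tfrac34$ for every $r$. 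If $r\neq\ell$ and $\mu^{r}_{\beta}=\mu^{\ell}_{\beta}$, this common measure would give mass $\geq\tfrac34$ to each of the disjoint events $\{\sigma_0=r\}$ and $\{\sigma_0=\ell\}$, which is impossible; hence $\mu^{r}_{\beta}\neq\mu^{\ell}_{\beta}$.

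For existence I would exploit that $\widehat\varphi\geq 0$ puts the model in a ferromagnetic class in the sense of Ginibre~\cite{Ginibre1970} and Gruber--Hintermann--Merlini~\cite{Gruber}. Expanding in the character basis of $\Z_q$, $\varphi(a)=\tfrac1q\sum_{k}\widehat\varphi(k)\chi_k(a)$, the Hamiltonian $H^{\eta}_{\Lambda}$ with monochromatic boundary condition $\eta\equiv r$ becomes, up to a constant, a pair interaction with couplings proportional to $J_{xy}\widehat\varphi(k)\geq 0$ plus on-site terms $\big(\sum_{y\notin\Lambda}J_{xy}\big)\,\varphi(\sigma_x-r)$ coming from the boundary; the boundary sums are finite because $\alpha>d$, and since $\varphi$ has its unique maximum at $0$ and $\widehat\varphi\geq 0$, these on-site terms are ferromagnetic fields favouring the value $r$. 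Hence $H^{\eta}_{\Lambda}$ is covered by the Ginibre/Gruber--Hintermann--Merlini correlation inequalities. (The global $\Z_q$-shift invariance $H(\sigma+g)=H(\sigma)$, which sends the $r$-boundary condition to the $(r+g)$-one, shows moreover that the various $\mu^{r}_{\beta}$ are images of one another under the shift, so it suffices to carry out the argument for a single value of $r$.)

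Then I would obtain monotonicity in the volume in the usual way. For $\Lambda\subset\Lambda'$, the conditional law of $\sigma_{\Lambda}$ under $\mu^{r}_{\Lambda',\beta}$ given $\{\sigma_y=r:\ y\in\Lambda'\setminus\Lambda\}$ is exactly $\mu^{r}_{\Lambda,\beta}$, and conditioning on $\sigma_y=r$ is the $h\to\infty$ limit of adding the ferromagnetic on-site field $h\,\varphi(\sigma_y-r)$ at $y$. By the second Ginibre inequality, each finite-volume correlation $\langle\chi_{k_1}(\sigma_{x_1})\cdots\chi_{k_m}(\sigma_{x_m})\rangle$ is monotone along these field increments; so along any exhaustion $\Lambda_n\uparrow\Z^d$ all such correlations are monotone and bounded, hence convergent. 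Since products of characters span every function on a finite product space, $\mu^{r}_{\Lambda_n,\beta}$ converges weakly to a limit $\mu^{r}_{\beta}$, independent of the exhaustion (any two exhaustions interleave). Together with the first paragraph, this yields the corollary.

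I expect the main obstacle to be the careful application of the Ginibre/Gruber--Hintermann--Merlini inequalities: one must check that the positivity hypotheses behind them survive the addition of the boundary on-site fields and the conditioning (infinite-field) step, and that the observables used lie in the relevant positive cone --- all of which rest on $\widehat\varphi\geq 0$. If one wanted to avoid the conditioning argument, an alternative is to compare monochromatic with free boundary conditions through GKS-type monotonicity under adding couplings and then identify the two infinite-volume limits via the shift symmetry; but the conditioning route is the most direct.
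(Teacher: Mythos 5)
Your proposal is correct and follows essentially the same route as the paper: distinctness comes from the Peierls bound of Theorem \ref{thm:phase_transition} applied to the cylinder event $\{\sigma_0=r\}$, and existence is exactly the Ginibre/Griffiths machinery the paper has already packaged in Proposition \ref{monotone} and Corollary \ref{thermo_limit} (monotonicity in on-site fields and in the volume, character decomposition, and the shift $\tau_r$ to reduce to a single monochromatic boundary condition), which the paper's proof simply cites. The only cosmetic difference is that you track correlations of character products directly --- where the imaginary/odd parts do not lie in the positive cone, an issue you flag but do not fully resolve --- whereas the paper splits a general even local function as $f=f_+-f_-$ with both parts having non-negative Fourier transform and disposes of odd functions by the evenness of $H^q_\Lambda$.
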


Although it is possible to deduce Theorem \ref{thm:phase_transition} using information about the short-range case and correlation inequalities (like Griffiths' Inequalities presented in section \ref{prem}), we adopted a direct strategy to show the existence of phase transition by means of contours and the Peierls' argument. It is undeniable that using contours brings many advantages, providing much information about the system, such as the typical configurations. Another advantage of this approach, explored in section $6$, is the stability with respect to perturbations, like external fields, which cannot be completely studied by correlation inequalities. 

\underline{\textit{External Fields.}} For the Ising model, it is well-known by Lee-Yang theory \cite{Lee-Yang.II.1952}, that phase transition is destroyed by any non-zero uniform field, no matter how small its strength. For $q > 2$, it is instructive to consider external fields affecting each color in a distinct way. The full Hamiltonian then reads 

\begin{equation}\label{potts_com_campo}
    H^q_{\Lambda, h}(\sigma) = -\sum_{\{x, y\} \subset \Lambda} J_{xy}\mathbbm{1}_{\{\sigma_x = \sigma_y\}} - \sum_{\substack{x \in \Lambda \\ y \in \Lambda^c}} J_{xy}\mathbbm{1}_{\{\sigma_x = q\}} - \sum_{x \in \Lambda} h_{x, \sigma_x},
\end{equation}
where $h = (h_{x, r})_{\substack{x \in \Z^d, r \in \mathbb{Z}_q}}$ is any family of real numbers.

For the ferromagnetic short-range case, Pirogov-Sinai tells us that when $\beta$ is large enough, the phase diagram mimics the one for $\beta = \infty$. For example, if $h_{x, \sigma} = \lambda \delta_{\sigma, 1}$ for some $\lambda \neq 0$, then the number of extremal translation-invariant measures depends on the sign of $\lambda$. If $\lambda > 0$, there is only one measure in the thermodynamic limit, which gives a high probability to the event $\sigma_0 = 1$. If $\lambda < 0$, there is the coexistence of $q - 1$ extremal measures, each one giving a high probability to the event $\sigma_0 = r$, $r = 2, .., q$.

The situation is much more complex when $q$ is large enough and $\beta$ is near the critical value $\beta_c$.  As already said, there is the coexistence of ordered and disordered phases at $\beta_c$. It is expected that an external field does not destroy the disordered phase, giving origin to a line of coexistence in the $\beta - h$ plane \cite{Biskup2000_PRL, Biskup2000,Goldschmidt1981}. This can be proven using Pirogov-Sinai theory \cite{menino_maluqinho} or chessboard estimates (see section 4 of \cite{vanEnter1995}). The coexistence between ordered and disordered phases is known to exist not only when $q$ is large enough but also for any $q$ when $d$ is large enough (see \cite{Biskup2003}) or when the interactions are sufficiently \emph{smeared out}. When $d = 3$, this already happens for finite-range interactions \cite{Gobron2007}. For $d = 2$, to our best knowledge, one needs to ask a polynomial decay with $2 < \alpha < 4$, see \cite{Biskup2006}. Although it is a common belief among some experts that this coexistence already happens for $d = 3$, $q = 3$, and nearest-neighbors interactions, we are not aware of any rigorous results.

In the case of a non-translation invariant field, some results are known for the Ising model \cite{velenik_field, Bis2, Bis1, Cioletti2015, Pfister1991LargeDA}. In the case of a decaying field, the modification in the Hamiltonian does not change the free energy since the graph $\Z^d$ is amenable. This class of fields was introduced for the Ising model ($q = 2$) in \cite{Bis2}, a collection of results for decaying fields in $\Z^d$ is \cite{Affonso_Bissacot_Corsini_Welsch_2024, Affonso.2021, Bis2, Bis1, Bissacot.Endo.18}. There are some papers on trees with fields as well, see \cite{BEE, Bogachev2019, georgii.gibbs.measures}.

To show the robustness of our methods, we prove phase transition for the ferromagnetic Potts model with random and deterministic decaying fields. First, we consider an external field with a sufficiently fast decay, both in the long-range (Theorem~\ref{thm:potts_decaying_long}) and in the short-range case (Corollary \ref{thm:Potts_decaying_short}). The proof produces the same region of exponents as in \cite{Affonso.2021}, but we use the contours defined in \cite{Johanes} to prove the following results.

\begin{theorem}\label{thm:potts_decaying_long}
    Suppose that there is $h^\ast \geq 0$ and $\delta > (\alpha - d) \wedge 1$ such that 
    \begin{equation}\label{field_hypothesis}
        h_{x, n} \leq \frac{h^\ast}{|x|^\delta}, \forall x \in \Z^d, n \in \Z_q. 
    \end{equation}
    Then, there is phase transition (see definition \ref{def:phase_transition}) for the Hamiltonian \eqref{potts_com_campo}, when $\beta>0$ is large enough. If $\delta = (\alpha - d) \wedge 1$, there is phase transition if $h^\ast$ is small enough, and $\beta>0$ is large enough.
\end{theorem}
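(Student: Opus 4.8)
The plan is to run a Peierls argument with the long-range contours of \cite{Johanes}, treating the decaying field \eqref{field_hypothesis} as a perturbation that alters the weight of a contour $\gamma$ by only a small fraction of its intrinsic energy $\consts{\gamma}$. The boundary condition built into \eqref{potts_com_campo} is $\eta\equiv q$; the other $q-1$ phases are obtained by taking $\eta\equiv r$ instead, equivalently by relabelling the colours --- an operation that merely permutes each family $(h_{x,n})_{n}$ and hence preserves \eqref{field_hypothesis}. One should not expect $\mu^{q}_{\Lambda,\beta}(\sigma_{0}=q)>C$ to hold: for large $h^{\ast}$ the field near the origin may favour some colour $r\neq q$ strongly enough that the origin flips. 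Instead I would show that, for every site $x_{0}$ with $|x_{0}|$ large enough (depending on $C$, $h^{\ast}$, $\delta$, $\dots$), $\mu^{q}_{\Lambda,\beta}(\sigma_{x_{0}}=q)>C$ uniformly in $\Lambda$; together with the analogous bounds for the other colours this forces the (subsequential) infinite-volume limits coming from distinct monochromatic boundary conditions to be pairwise different, i.e.\ the phase transition of Definition~\ref{def:phase_transition}.

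If $\sigma_{x_{0}}\neq q$, the contour family $\Gamma(\sigma)$ contains a contour $\gamma$ with $x_{0}\in\I_{-}(\gamma)$, so it suffices to estimate $\sum_{\gamma}\mu^{q}_{\Lambda,\beta}(\gamma\in\Gamma(\sigma))$ over such contours. For the field-free model the energy bounds of Section~\ref{energy} provide, uniformly in $\Lambda$ and in the remaining contours of $\sigma$, a Peierls bound $\mu^{q}_{\Lambda,\beta}(\gamma\in\Gamma(\sigma))\le e^{-\beta\consts{\gamma}}$ (after absorbing the proportionality constant into $c_{2},c_{3},c_{4}$), together with the fact that, for $\beta$ large, $\sum_{\gamma\,:\,x_{0}\in\I_{-}(\gamma)}e^{-\beta\consts{\gamma}}$ is finite and tends to $0$ as $\beta\to\infty$. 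Switching on the field, erasing $\gamma$ resets $\sigma$ to $q$ on $\I_{-}(\gamma)$ and therefore changes $-\sum_{x}h_{x,\sigma_{x}}$ by $\sum_{x\in\I_{-}(\gamma)}\bigl(h_{x,q}-h_{x,\sigma_{x}}\bigr)\ge -h^{\ast}\sum_{x\in\I_{-}(\gamma)}|x|^{-\delta}$, whence
\begin{equation}\label{eq:weight_with_field}
	\mu^{q}_{\Lambda,\beta}\bigl(\gamma\in\Gamma(\sigma)\bigr)\;\le\;\exp\!\Bigl(-\beta\,\consts{\gamma}+\beta h^{\ast}\!\!\sum_{x\in\I_{-}(\gamma)}\frac{1}{|x|^{\delta}}\Bigr).
\end{equation}

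The heart of the matter --- and the step I expect to be the main obstacle --- is the purely geometric estimate: there is a constant $\kappa=\kappa(\alpha,d,\delta,q,J)$ with
\begin{equation}\label{eq:field_vs_energy}
	\sum_{x\in\I_{-}(\gamma)}\frac{1}{|x|^{\delta}}\;\le\;\kappa\,\consts{\gamma}\qquad\text{for every contour }\gamma,
\end{equation}
and, moreover, $\kappa$ can be made arbitrarily small once $|x_{0}|$ is large when $\delta>(\alpha-d)\wedge1$ strictly, while for $\delta=(\alpha-d)\wedge1$ one only gets \eqref{eq:field_vs_energy} with a fixed $\kappa$. That the exponent is $(\alpha-d)\wedge1$ is heuristically clear: if $\I_{-}(\gamma)$ were a ball of radius $R$, the left-hand side of \eqref{eq:field_vs_energy} would be $O(1)$ for $\delta>d$ and of order $R^{\,d-\delta}$ for $\delta<d$, whereas the term $c_{3}F_{\I_{-}(\gamma)}$ in $\consts{\gamma}$ --- which, unlike $c_{2}|\gamma|$, does not degenerate for contours spread over many scales --- is of order $R^{\,d-1}$ when $\alpha>d+1$ and of order $R^{\,2d-\alpha}$ when $d<\alpha<d+1$ (with a logarithm at $\alpha=d+1$), so that $d-\delta<(d-1)\vee(2d-\alpha)$ is exactly $\delta>(\alpha-d)\wedge1$. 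Making this bound uniform over all contours is the technical core: one must replace the ball picture by the genuine multi-scale data of a Fr\"ohlich--Spencer contour, decompose $\I_{-}(\gamma)$ into the blocks attached to the scales of $\gamma$, bound their number and sizes at each scale in terms of $|\gamma|$, $F_{\I_{-}(\gamma)}$ and $F_{\Sp{\gamma}}$ exactly as in Section~\ref{energy}, and sum the resulting dyadic series against those quantities; the strict inequality $\delta>(\alpha-d)\wedge1$ is what makes the series summable, and the equality case is what keeps it bounded by a fixed multiple of $\consts{\gamma}$.

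Granting \eqref{eq:field_vs_energy}, the argument closes at once. In the strict case choose $|x_{0}|$ large enough that $h^{\ast}\kappa\le\tfrac12$ (no smallness of $h^{\ast}$ is needed); in the borderline case $\delta=(\alpha-d)\wedge1$, require in addition $h^{\ast}\le 1/(2\kappa)$. Either way \eqref{eq:weight_with_field} yields $\mu^{q}_{\Lambda,\beta}(\gamma\in\Gamma(\sigma))\le e^{-\frac{\beta}{2}\consts{\gamma}}$, so
\begin{equation*}
	\mu^{q}_{\Lambda,\beta}(\sigma_{x_{0}}\neq q)\;\le\;\sum_{\gamma\,:\,x_{0}\in\I_{-}(\gamma)}e^{-\frac{\beta}{2}\consts{\gamma}}\;\longrightarrow\;0\qquad(\beta\to\infty),
\end{equation*}
uniformly in $\Lambda$; hence $\mu^{q}_{\Lambda,\beta}(\sigma_{x_{0}}=q)>C$ for $\beta$ large, and the phase transition follows. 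The short-range Corollary~\ref{thm:Potts_decaying_short} is the same argument with genuine codimension-one contours, for which $\consts{\gamma}\gtrsim|\gamma|\gtrsim R^{\,d-1}$ and the required condition reduces to $\delta>1$; it is also the $\alpha\to\infty$ reading of the above.
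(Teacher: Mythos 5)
Your overall strategy --- a Peierls argument with the multiscale contours, absorbing the field term into the contour energy, with the threshold $\delta>(\alpha-d)\wedge 1$ arising from comparing $\sum_{x\in\Lambda}|x|^{-\delta}$ with $F_\Lambda$ --- is indeed the paper's strategy, and your scaling heuristic for the exponent is correct. The genuine gap is that everything rests on the geometric estimate you yourself call ``the heart of the matter'' and never prove: that the field sum over the interiors (and, note, also over $\Sp(\gamma)$, which your weight bound omits --- $\tau_\gamma$ rewrites the support to colour $q$ as well, so the field changes there too) is at most a small fraction of the contour energy, uniformly over all contours. As stated your inequality is not even true: hypothesis \eqref{field_hypothesis} imposes no bound at $x=0$ and only a bound of order $h^\ast$ near the origin, so for large $h^\ast$ a moderate-size contour enclosing the origin gains more from the field than any fixed small fraction of its energy. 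Your proposed fix --- that the constant $\kappa$ becomes arbitrarily small once $|x_0|$ is large --- is itself a nontrivial claim: it needs an argument that any contour with $x_0\in V(\gamma)$ whose support or interior reaches the strong-field region near the origin must be large (a diameter-versus-cardinality statement for $(M,a)$-partitions), together with control of the intermediate sizes; none of this is in the sketch, and the pointer that the dyadic summation would be done ``exactly as in Section \ref{energy}'' is misplaced, since Section \ref{energy} contains no such covering or coarse-graining argument.

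The paper resolves precisely this point by a different, cheaper device: it first truncates the field to zero on a finite ball $B_R$, which is a local modification of the Hamiltonian and, by Theorem 7.33 of \cite{georgii.gibbs.measures}, does not affect whether phase transition occurs; choosing $R^\delta>2h^\ast/c_2$ absorbs the surviving field on $\Sp(\gamma)$ into the term $c_2|\gamma|$, and the interior terms are handled by invoking Proposition 4.7 of \cite{Affonso.2021} (built on $F_\Lambda\gtrsim|\Lambda|^{(2-\alpha/d)\vee(1-1/d)}$ and $\sum_{x\in\Lambda}\widehat h_x\lesssim|\Lambda|^{1-\delta/d}$), which yields $c_2F_{\I_n(\gamma)}\geq\sum_{x\in\I_n(\gamma)}\widehat h_{x,\sigma_x}$ for \emph{all} finite sets; this is exactly where $\delta>(\alpha-d)\wedge 1$, or $\delta=(\alpha-d)\wedge 1$ with $h^\ast$ small, enters. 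Your ``test only distant sites $x_0$'' idea is a plausible substitute for the truncation, but until you either import the truncation-plus-Proposition-4.7 route or genuinely prove your uniform field-versus-energy estimate (including the support contribution and the behaviour at and near the origin), the proof is incomplete at its central step.
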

\begin{corollary}\label{thm:Potts_decaying_short}
    Consider the Hamiltonian \eqref{potts_com_campo} with short-range interactions given by

\begin{equation*}
    J_{xy} = 
\begin{cases}
    J & \text{ if } |x - y| = 1,\\[0.2cm]
    0 & \text{ otherwise. }
\end{cases}
\end{equation*}
As always, $J > 0$. Suppose, again, that there is $h^\ast \geq 0$ and $\delta > 1$ such that 
    \begin{equation}
        h_{x, n} \leq \frac{h^\ast}{|x|^\delta}, \forall x \in \Z^d, n \in \Z_q. 
    \end{equation}
Then, there is phase transition when $\beta>0$ is large enough. If $\delta = 1$, there is a phase transition when $h^\ast$ is small enough, and $\beta>0$ is large enough.
\end{corollary}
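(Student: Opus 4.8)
\noindent\textbf{Proof plan (Corollary~\ref{thm:Potts_decaying_short}).}
The plan is to upgrade the Peierls argument behind Theorem~\ref{thm:phase_transition} --- which already handles the $q$-state entropy and yields \eqref{eq_thm1} for the field-free interaction --- to the Hamiltonian \eqref{potts_com_campo} with a decaying field, in the simplest, nearest-neighbour geometry. For nearest-neighbour $J_{xy}$ the contours of \cite{Johanes} reduce to classical connected Peierls contours, and the inter-contour energy bounds of Section~\ref{energy} degenerate to the triviality that disjoint Peierls contours interact only through bonds already charged to their self-energy; so the one new ingredient is the field term. Fixing a box $\Lambda$ and the $q$-wired Hamiltonian \eqref{potts_com_campo} ($\eta\equiv q$), the field enters the Peierls weight of a contour $\gamma$ surrounding a fixed site $x$ as an additive cost: correcting $\gamma$ by recolouring its interior to $q$ costs, in field energy, at most
\[
\sum_{y\in\Int(\gamma)}\bigl(h_{y,\sigma_y}-h_{y,q}\bigr)\ \le\ h^{\ast}\sum_{y\in\Int(\gamma)}\frac{1}{\max(|y|,1)^{\delta}}
\]
by \eqref{field_hypothesis}, whereas the interaction part still gains the usual Peierls amount $c_{d}J\,|\gamma|$ from the bonds straddling $\gamma$. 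Thus the Peierls estimate of Theorem~\ref{thm:phase_transition}, amended by this cost, reads
\[
\mu^{q}_{\Lambda,\beta}(\sigma_x\neq q)\ \le\ \sum_{n\ge1} n\,K^{n}\,
\exp\!\Big(-\beta c_{d}J\,n\;+\;\beta h^{\ast}\!\!\sup_{\substack{\gamma:\,x\in\Int(\gamma)\\ |\gamma|=n}}\ \sum_{y\in\Int(\gamma)}\frac{1}{\max(|y|,1)^{\delta}}\Big),
\]
where $K$ absorbs both the combinatorial count of contours of size $n$ through $x$ and the $q$-state entropy, exactly as in the field-free proof.

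The crux is now a purely geometric estimate on the interior of a Peierls contour: for every such $\gamma$,
\[
\sum_{y\in\Int(\gamma)}\frac{1}{\max(|y|,1)^{\delta}}\ \le\ c_{d}\,|\gamma|^{\theta},\qquad
\theta:=\frac{(d-\delta)\vee 0}{\,d-1\,}\ (<1\ \text{for}\ \delta>1),
\]
with $|\gamma|^{\theta}$ replaced by $\log|\gamma|$ in the borderline $\delta=d$. I would prove this from the discrete isoperimetric inequality $|\Int(\gamma)|\le c_{d}\,|\gamma|^{d/(d-1)}$ together with the bathtub/rearrangement principle: as $y\mapsto\max(|y|,1)^{-\delta}$ is radially non-increasing, the sum over $\Int(\gamma)$ is maximised, for a given cardinality, by the centred lattice ball, whose radius is then $\lesssim|\gamma|^{1/(d-1)}$; an elementary bound on $\sum_{|y|\le R}\max(|y|,1)^{-\delta}$ finishes it.

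For $\delta>1$ the right-hand side is $o(|\gamma|)$, so for sizes $n$ beyond some $n_{0}=n_{0}(h^{\ast},\delta,d)$ the field cost is at most $\tfrac12 c_{d}J\,n$; the contributions of the finitely many smaller sizes are made negligible by taking $|x|$ large (small contours around such $x$ lie far from the origin, so carry field cost $\le h^{\ast}c_{d}n_{0}^{d/(d-1)}|x|^{-\delta}\to0$). Hence $\mu^{q}_{\Lambda,\beta}(\sigma_x\neq q)\to0$ as $\beta\to\infty$, uniformly in $\Lambda$, for every $x$ outside a ball whose radius depends on $h^{\ast}$ --- for every $x$ if $h^{\ast}$ is small --- which is the Peierls bound required by Definition~\ref{def:phase_transition}. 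For $\delta=1$ one only gets $\sum_{y\in\Int(\gamma)}\max(|y|,1)^{-1}\le c_{d}|\gamma|$, i.e.\ a field cost $\le c_{d}h^{\ast}|\gamma|$, and imposing $c_{d}h^{\ast}<\tfrac12 c_{d}J$ (small $h^{\ast}$) restores the conclusion, with $\beta_{0}$ now also depending on $h^{\ast}$. ($\delta>1$ is exactly $\delta>(\alpha-d)\wedge1$ with $\alpha\to\infty$, matching Theorem~\ref{thm:potts_decaying_long}, and $\delta=1$ the corresponding borderline.)

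The step I expect to be the main obstacle --- essentially routine here, but the technical core of the long-range Theorem~\ref{thm:potts_decaying_long} --- is making the field cost uniform and compatible with the entropy bookkeeping: one must check that correcting $\gamma$ recolours only sites of $\Int(\gamma)$ (so that the isoperimetric inequality applies with $|\gamma|$ itself) and that the small-contour corrections really do not accumulate once summed against $n\,K^{n}$. For nearest-neighbour interactions these are standard Pirogov--Sinai manoeuvres, but they are exactly where the exponent $\delta$ is felt and where the dichotomy ``$\delta>1$, any $h^{\ast}$'' versus ``$\delta=1$, small $h^{\ast}$'' is born.
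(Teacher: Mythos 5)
Your argument is essentially correct, but it takes a genuinely different route from the paper. The paper proves this corollary by literally rerunning the proof of Theorem~\ref{thm:potts_decaying_long} with the same $(M,a)$-partition contours: in the nearest-neighbour case $F_{\Lambda}=J|\partial\Lambda|$, the refined energy bound of Proposition~\ref{BE} becomes $H^q_\Lambda(\sigma)-H^q_\Lambda(\tau)\geq c_2'\bigl(|\gamma|+\sum_{n}|\partial \I_n(\gamma)|\bigr)$, the field is first truncated inside a finite ball (invoking Theorem~7.33 of \cite{georgii.gibbs.measures} to justify that this does not affect the phase-transition question), and the interior field cost is then beaten \emph{interior by interior} via the inequality $c_2'|\partial\I_n(\gamma)|\geq \sum_{x\in\I_n(\gamma)}\widehat h_x$, exactly as in \cite{Affonso.2021}. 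You instead drop the interior terms altogether and charge the entire field cost against the $|\gamma|$-proportional gain, using the isoperimetric bound $|\I(\gamma)|\lesssim|\gamma|^{d/(d-1)}$ plus a bathtub/rearrangement estimate to get $\sum_{y\in\I(\gamma)}\max(|y|,1)^{-\delta}\lesssim|\gamma|^{(d-\delta)_+/(d-1)}$, and you dispose of the problematic small contours near the origin by running the Peierls bound at a distant reference site rather than truncating the field. Both routes yield the same dichotomy ($\delta>1$ for all $h^\ast$; $\delta=1$ for small $h^\ast$), and yours is arguably more self-contained (no appeal to Georgii's local-modification theorem and no need for the interior terms of Proposition~\ref{BE}); the price is that your Peierls estimate is only uniform at sites far from the origin (still sufficient, since distinct monochromatic boundary conditions then give distinct limit points), and that charging everything to $|\gamma|$ would not recover the sharper long-range threshold $\delta>(\alpha-d)\wedge1$ when $\alpha-d<1$ — irrelevant here, but it is why the paper keeps the interior terms. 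One small inaccuracy: the contours of \cite{Johanes} do not ``reduce to'' classical connected Peierls contours for nearest-neighbour interactions (the $(M,a)$-partition is defined independently of the interaction and may still have disconnected supports); what you are really doing is switching to the classical contour definition for the short-range model, which is legitimate but should be stated as a choice, not a consequence.
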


\begin{figure}
    \centering
 
\tikzset{
pattern size/.store in=\mcSize, 
pattern size = 5pt,
pattern thickness/.store in=\mcThickness, 
pattern thickness = 0.3pt,
pattern radius/.store in=\mcRadius, 
pattern radius = 1pt}
\makeatletter
\pgfutil@ifundefined{pgf@pattern@name@_83vppj9ci}{
\pgfdeclarepatternformonly[\mcThickness,\mcSize]{_83vppj9ci}
{\pgfqpoint{0pt}{0pt}}
{\pgfpoint{\mcSize+\mcThickness}{\mcSize+\mcThickness}}
{\pgfpoint{\mcSize}{\mcSize}}
{
\pgfsetcolor{\tikz@pattern@color}
\pgfsetlinewidth{\mcThickness}
\pgfpathmoveto{\pgfqpoint{0pt}{0pt}}
\pgfpathlineto{\pgfpoint{\mcSize+\mcThickness}{\mcSize+\mcThickness}}
\pgfusepath{stroke}
}}
\makeatother
\tikzset{every picture/.style={line width=0.75pt}} 

\begin{tikzpicture}[x=0.75pt,y=0.75pt,yscale=-1,xscale=1]

\draw  [draw opacity=0][fill={rgb, 255:red, 128; green, 128; blue, 128 }  ,fill opacity=0.71 ] (159.2,62) -- (366.42,62) -- (366.42,240.07) -- (159.2,240.07) -- cycle ;
\draw  [draw opacity=0][pattern=_83vppj9ci,pattern size=6pt,pattern thickness=0.75pt,pattern radius=0pt, pattern color={rgb, 255:red, 88; green, 88; blue, 88}] (158.2,62) -- (365.42,62) -- (365.42,240.07) -- (158.2,240.07) -- cycle ;
\draw  [draw opacity=0][fill={rgb, 255:red, 255; green, 255; blue, 255 }  ,fill opacity=1 ] (238.79,159) -- (366.67,159) -- (366.67,240.07) -- (238.79,240.07) -- cycle ;
\draw  [draw opacity=0][fill={rgb, 255:red, 255; green, 255; blue, 255 }  ,fill opacity=1 ] (158.2,240.07) -- (239.79,158.92) -- (239.79,240.07) -- cycle ;
\draw    (159.2,250.47) -- (159.89,44.97) ;
\draw [shift={(159.9,42.97)}, rotate = 90.19] [color={rgb, 255:red, 0; green, 0; blue, 0 }  ][line width=0.75]    (10.93,-3.29) .. controls (6.95,-1.4) and (3.31,-0.3) .. (0,0) .. controls (3.31,0.3) and (6.95,1.4) .. (10.93,3.29)   ;
\draw  [dash pattern={on 4.5pt off 4.5pt}]  (159.2,240.07) -- (239.79,158.92) ;
\draw  [dash pattern={on 4.5pt off 4.5pt}]  (240.83,159.42) -- (435.17,158.73) ;
\draw [shift={(438.17,158.72)}, rotate = 179.8] [fill={rgb, 255:red, 0; green, 0; blue, 0 }  ][line width=0.08]  [draw opacity=0] (10.72,-5.15) -- (0,0) -- (10.72,5.15) -- (7.12,0) -- cycle    ;
\draw    (162,160.71) -- (157,160.71) ;
\draw    (240.5,236.92) -- (240.5,243.46) ;
\draw    (148.6,240.6) -- (424.1,240.1) ;
\draw [shift={(426.1,240.1)}, rotate = 179.9] [color={rgb, 255:red, 0; green, 0; blue, 0 }  ][line width=0.75]    (10.93,-3.29) .. controls (6.95,-1.4) and (3.31,-0.3) .. (0,0) .. controls (3.31,0.3) and (6.95,1.4) .. (10.93,3.29)   ;
\draw  [fill={rgb, 255:red, 255; green, 255; blue, 255 }  ,fill opacity=1 ] (204,105.14) .. controls (204,102.49) and (206.15,100.33) .. (208.81,100.33) -- (321.91,100.33) .. controls (324.57,100.33) and (326.72,102.49) .. (326.72,105.14) -- (326.72,119.58) .. controls (326.72,122.23) and (324.57,124.39) .. (321.91,124.39) -- (208.81,124.39) .. controls (206.15,124.39) and (204,122.23) .. (204,119.58) -- cycle ;
\draw  [fill={rgb, 255:red, 255; green, 255; blue, 255 }  ,fill opacity=1 ] (269.39,190.14) .. controls (269.39,188.22) and (270.95,186.67) .. (272.87,186.67) -- (353.58,186.67) .. controls (355.5,186.67) and (357.06,188.22) .. (357.06,190.14) -- (357.06,200.58) .. controls (357.06,202.5) and (355.5,204.06) .. (353.58,204.06) -- (272.87,204.06) .. controls (270.95,204.06) and (269.39,202.5) .. (269.39,200.58) -- cycle ;
\draw   (444.5,146.54) .. controls (444.5,142.38) and (447.88,139) .. (452.04,139) -- (547.62,139) .. controls (551.79,139) and (555.17,142.38) .. (555.17,146.54) -- (555.17,169.18) .. controls (555.17,173.34) and (551.79,176.72) .. (547.62,176.72) -- (452.04,176.72) .. controls (447.88,176.72) and (444.5,173.34) .. (444.5,169.18) -- cycle ;
\draw    (366.3,237.62) -- (366.3,244.16) ;

\draw (420.8,243.97) node [anchor=north west][inner sep=0.75pt]    {$\alpha - d$};
\draw (138.5,50.48) node [anchor=north west][inner sep=0.75pt]    {$\delta $};
\draw (235.08,247.61) node [anchor=north west][inner sep=0.75pt]    {$1$};
\draw (141.33,151.69) node [anchor=north west][inner sep=0.75pt]    {$1$};
\draw (210.81,105.33) node [anchor=north west][inner sep=0.75pt]  [font=\small] [align=left] {Phase Transition};
\draw (275.33,189.33) node [anchor=north west][inner sep=0.75pt]   [align=left] {{\footnotesize Uniqueness?}};
\draw (499.69,158.61) node  [font=\footnotesize] [align=left] {Phase Transition\\ \ \ \ \ for small $\displaystyle h^{\ast }$};
\draw (358.8,245.47) node [anchor=north west][inner sep=0.75pt]    {$\infty $};
\draw (327.67,259.67) node [anchor=north west][inner sep=0.75pt]  [font=\tiny] [align=left] {\textit{(nearest-neighbors)}};

\end{tikzpicture}
    \caption{Phase Diagram.}
    \label{phase_diagram}
\end{figure}

The random field long-range Potts model is defined as the system with Hamiltonian \eqref{potts_com_campo}, where the external field is a family $\{ \varepsilon h_{x,n}\}_{x\in\Z^d,  n\in\Z_q}$ of i.i.d. random variables instead of real numbers. Each $h_{x, n}$ has a standard normal distribution. This is a generalization for general $q$ of the random field long-range Ising model, studied in \cite{Johanes}, where phase transition was proved for all $\alpha>d$ and $d\geq 3$. The argument was based on the recent new proof of Ding and Zhuang \cite{Ding2023} of the corresponding result for the random field nearest neighbor Ising model, based on a Peierls argument. Their argument greatly simplifies the previously available result of Bricmont and Kupiainen \cite{Bricmont.Kupiainen.88}, which uses the Renormalization Group Method. Ding and Zhuang also showed that phase transition holds for the corresponding random field Potts model. With our results, we can prove the following result:

\begin{theorem}\label{thm:random_field_potts}
    Given $d,q\geq 3$, $\alpha>d$, there exists $\beta_c\coloneqq\beta(d, \alpha, q)$ and $\varepsilon_c\coloneqq\varepsilon(d, \alpha, q)$ such that, for $\beta> \beta_c$ and $\varepsilon\leq \varepsilon_c$ the long-range random field Potts model presents phase transition $\mathbb{P}$-almost surely.
\end{theorem}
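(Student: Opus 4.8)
The plan is to combine the long-range contour machinery of \cite{Johanes} (whose multidimensional Fröhlich–Spencer contours and interaction estimates of Section \ref{energy} we take for granted) with the Peierls-type argument of Ding and Zhuang \cite{Ding2023} for the random field, exploiting the group structure $\Z_q$ exactly as in the proof of Theorem \ref{thm:phase_transition}. Fix the monochromatic boundary condition $\eta \equiv q$ and work in a finite volume $\Lambda$. As in the deterministic case, the event $\{\sigma_0 \neq q\}$ forces the origin to be encircled by a contour $\polymer$; the goal is to show that, $\mathbb{P}$-almost surely, for $\beta$ large and $\varepsilon$ small the expected (over spins) probability of the existence of such a contour is uniformly small, and then transfer this to a statement about distinct infinite-volume Gibbs states. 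The first step is to record the deterministic energy cost: by the contour estimates of Section \ref{energy}, flipping the interior of a contour $\polymer$ back to the boundary color changes the interaction part of the Hamiltonian by at least $c\, |\polymer|$ for a constant $c = c(\alpha,d,\varphi,q) > 0$, uniformly in the long-range tails, which is the content that makes the $\alpha > d$ regime accessible.

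Second, I would control the random field contribution. The relevant quantity is $\sum_{x \in \Int \polymer} (\varepsilon h_{x,\sigma_x} - \varepsilon h_{x,q})$, a sum of at most $2\varepsilon$-subgaussian increments over the interior of $\polymer$; its fluctuations are of order $\varepsilon \sqrt{|\Int\polymer|}$, which is dominated by $c|\polymer|$ once one has the isoperimetric-type bound $|\Int \polymer| \le C |\polymer|^{?}$ available for these contours. Here one must be careful: for long-range contours $\Int\polymer$ can be much larger than a power of $|\polymer|$, so instead of a naive volume bound one uses the concentration/entropy argument of \cite{Ding2023}, i.e. a union bound over contours weighted by $e^{-c\beta|\polymer|}$ against the Gaussian tail $\exp(-c'|\polymer|^2/(\varepsilon^2 |\Int\polymer|))$, combined with the fact (again from \cite{Johanes,Ding2023}) that the number of contours of a given "size" containing the origin grows only exponentially in that size. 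This is where the restriction $d \geq 3$ enters: it is needed so that the Ding–Zhuang coarse-graining and the summability of the resulting series go through, exactly as in the random field Ising case of \cite{Johanes}.

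Third, I would assemble the Peierls bound: summing over all contours $\polymer$ surrounding the origin, the spin-averaged, field-averaged probability of $\{\sigma_0 \neq q\}$ is bounded by
\begin{equation}
    \sum_{\polymer \ni 0} \mathbb{E}\Big[ e^{-\beta c |\polymer| + \beta \varepsilon |\sum_{x\in\Int\polymer}(h_{x,\sigma_x}-h_{x,q})|}\Big] \le \sum_{\polymer \ni 0} e^{-\tfrac{\beta c}{2}|\polymer|},
\end{equation}
valid $\mathbb{P}$-a.s. for $\beta$ large and $\varepsilon \le \varepsilon_c$, with the right-hand side made $<1-C$ by taking $\beta > \beta_c$; a Borel–Cantelli / subadditivity argument over an exhaustion of $\Z^d$ upgrades this to an almost sure statement uniform in $\Lambda$. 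Finally, as in Corollary \ref{corol_phase_transition}, the bound $\mu^q_{\Lambda,\beta}(\sigma_0 = q) > C$ for all large $\beta$, together with the corresponding bounds for the other $q-1$ colors obtained by the $\Z_q$-symmetry of the interaction, yields $q$ distinct infinite-volume Gibbs measures, i.e. phase transition. The main obstacle is the second step: unlike the nearest-neighbor or finite-range setting, the interior of a long-range contour is not geometrically tied to its size, so the Gaussian-field estimate cannot rely on a simple isoperimetric inequality and must instead be run through the contour-counting and coarse-graining bounds of \cite{Johanes} adapted to the Ding–Zhuang scheme; making the two estimates — exponential decay in $|\polymer|$ versus Gaussian fluctuations over $\Int\polymer$ — compatible for every $\alpha > d$ is the delicate point.
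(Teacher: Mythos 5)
There is a genuine gap in your second step, and it is precisely the step on which the whole theorem hinges. You propose to control the random field contribution by bounding the Gaussian sum over the interior, $\varepsilon\sum_{x\in\Int\gamma}(h_{x,\sigma_x}-h_{x,q})$, via its tail $\exp(-c'|\gamma|^2/(\varepsilon^2|\Int\gamma|))$ and a union bound against the entropy $e^{c|\gamma|}$ of contours of a given size. This does not close: since $|\Int\gamma|$ can be of order $|\gamma|^{d/(d-1)}$, the tail exponent is only $|\gamma|^{(d-2)/(d-1)}=o(|\gamma|)$, which is beaten by the exponential contour entropy for every $d$ — this is exactly the classical failure of the naive Peierls-plus-union-bound argument for the random field model in $d=3$, and it is why Ding--Zhuang (and the paper) do something structurally different. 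Likewise your displayed annealed bound $\mathbb{E}[e^{-\beta c|\gamma|+\beta\varepsilon|\sum_{x\in\Int\gamma}(\cdots)|}]\le e^{-\beta c|\gamma|/2}$ is false when $|\Int\gamma|\gg|\gamma|$, since the Gaussian moment generating function produces a factor $e^{c\beta^2\varepsilon^2|\Int\gamma|}$.

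What the paper actually does, and what your proposal is missing, is: (i) work with the joint spin--field measure $\mathbb{Q}^q$ and, instead of estimating the field in the interior, transform it away by a measure-preserving map $\theta_\gamma$ that \emph{permutes the field colors} inside each $\I_n(\gamma)$ consistently with the spin relabeling $\tau_\gamma$ — this is the genuinely Potts-specific point (a sign flip, as in the Ising case of \cite{Johanes}, does not work), and after it the only explicit field term left is over $\Sp(\gamma)$, where a crude Gaussian tail plus Proposition \ref{Bound_on_C_0_n} suffices (the event $\mathcal{E}_0$); (ii) the price is the free-energy difference $\Delta_\gamma(h)=-\beta^{-1}\log\bigl(Z(h)/Z(\theta_\gamma(h))\bigr)$, which is $\varepsilon$-Lipschitz in each field coordinate of the interior and hence subgaussian with variance proxy $q\varepsilon^2|\I'(\gamma)|$ — same bad scaling as your direct sum, but now one controls $\sup_{\gamma\in\mathcal{C}_0(n)}\Delta_\gamma(h)$ by Talagrand's chaining bound (Theorem \ref{Theo: Theo_2.2.27_Talagrand}) on the metric space of interior tuples with metric $\d(A,A')=\varepsilon\sum_n|A_n\Delta A'_n|^{1/2}$, and the coarse-graining of interiors by admissible cubes from \cite{Johanes} gives $\gamma_2(T_n,\d)\le\varepsilon L_1 n$ (Proposition \ref{Prop: Bound.gamma_2}), which is what makes the bound linear in $|\gamma|$ and summable for $d\ge 3$. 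Your proposal names the right references and correctly flags that the interior is not tied to $|\gamma|$, but it never introduces the joint measure, the color-permutation map, $\Delta_\gamma$, or the chaining/$\gamma_2$ estimate, so the argument as written would not prove the theorem; the concluding passage from the finite-volume probability estimate to $\mathbb{P}$-a.s. phase transition (Markov inequality plus ergodicity of the i.i.d. field, rather than Borel--Cantelli over an exhaustion) is a comparatively minor discrepancy.
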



 This paper is divided as follows. In section \ref{prem}, we present the relevant definitions. We also revisit correlation inequalities and the thermodynamic limit for $q$-state spin systems. The new contours are the protagonist of section \ref{contour_section}, where the exponential growth in the number of possible contours is an important feature and can be found in \cite{Johanes}. The main computation is the energetic bound presented in section \ref{energy}. These two ingredients are combined in section \ref{pt}, which consists of the proof of Theorem \ref{thm:phase_transition}. The applications for models with decaying and random fields are proved in section \ref{application}. We finished the paper with section \ref{concluding}, where we mention possible consequences and problems for which this new notion of multi-scaled disconnected contours can be useful.

\section{Preliminaries}\label{prem}

Given $\Lambda \subset \mathbb{Z}^d$, we define the local configuration space as $\Omega_\Lambda := (\Z_q)^\Lambda$. When $\Lambda = \mathbb{Z}^d$, we simply put $\Omega := (\Z_q)^{\mathbb{Z}^d}$. Fixed $\eta \in \Omega$, we also define $\Omega_{\Lambda}^\eta$ as the subset of $\Omega$ consisting of configurations such that $\sigma_x = \eta_x $ for each $x \notin \Lambda$. Finally, we write $\Lambda \Subset \mathbb{Z}^d$ to indicate that $\Lambda$ is finite. 

 Given $\eta \in \Omega$ and $\Lambda \Subset \Z^d$, we will be interested in models whose Hamiltonian can be written as follows:

\begin{equation}\label{hamiltonian}
H_{\Lambda, h}^\eta(\sigma) = -\sum_{\substack{\{x, y\} \subset \Lambda}}J_{xy}\varphi(\sigma_x - \sigma_y) - \sum_{\substack{x \in \Lambda \\ y \notin \Lambda}} J_{xy} \varphi(\sigma_x - \eta_y) - \sum_{x \in \Lambda} h_x \varphi(\sigma_x),
\end{equation}
where $h = (h_x)_{x \in \Z^d}$ is any family of real numbers and  $\{J_{xy}\}_{x,y\in\Z^d}$ is defined by Equation \eqref{Long-Range Interaction}
for some $J > 0$ and $\alpha > d$.  Furthermore, we ask the function $\varphi: \Z_q \to \mathbb{R}$ to be such that $\varphi(0) > \varphi(n), \forall n \neq 0$ (ferromagnetism). We will restrict our attention to monochromatic boundary conditions, that is, when $\eta_x = r, \forall x \in \mathbb{Z}^d$, for some $r \in \Z_q$, in which case we will simply write $H^r_{\Lambda, h}$.

Denote by $\mathcal{F}_\Lambda$ the $\sigma-$algebra generated by the cylindrical sets supported on $\Lambda$ and write $\mathcal{F}=\mathcal{F}_{\mathbb{Z}^d}$.

\begin{definition}
For any $\Lambda \Subset \mathbb{Z}^d$, $\eta \in \Omega$ and $\beta > 0$, we define the corresponding finite-volume Gibbs measure on $(\Omega, \mathcal{F})$ by
 \begin{equation}\label{PM}
        \mu_{\Lambda, \beta, h}^\eta(\sigma) \coloneqq \mathbbm{1}_{\Omega_\Lambda^\eta}(\sigma)\frac{e^{-\beta H_{\Lambda,h}^{\eta}(\sigma)}}{Z_{\Lambda, \beta}^{\eta}(h)},
    \end{equation}
where $\beta$ has the physical meaning of the inverse temperature and the normalization factor $Z_{\Lambda, \beta}^\eta(h)$ is known as the partition function, defined by
$$
Z_{\Lambda, \beta}^\eta(h):=\sum_{\sigma\in\Omega_\Lambda} e^{-\beta H_{\Lambda, h}^\eta(\sigma)}.
$$

\end{definition}


Similarly to the Hamiltonian, we write $Z^r_{\Lambda, \beta}(h)$ and $\mu^r_{\Lambda, \beta, h}$ for monochromatic boundary conditions. Moreover, when $h \equiv 0$, we will omit the subscript $h$. Notice that the collection of all finite subsets of $\mathbb{Z}^d$, $\mathcal{P}_f(\mathbb{Z}^d)$, has the structure of a directed set given by the inclusion.

\begin{definition}\label{def:phase_transition}
    Fixed $\beta > 0$ and $r \in \Omega_0$, the limit points of the net of the finite-volume Gibbs measures $(\mu_{\Lambda, \beta, h}^\eta)_{\Lambda \in \mathcal{P}_f(\mathbb{Z}^d)}$, with respect to the weak-$\ast$ topology, are called the \emph{thermodynamic limits}. The set of all thermodynamic limits for all possible boundary conditions $\eta$ will be denoted by $\mathscr{G}_{\beta, h}$. We say that a model undergoes phase transition when $|\mathscr{G}_{\beta, h}| > 1$.
\end{definition}


Since the set of all probability measures in this case is compact, there exists some thermodynamic limit Gibbs measure for any $\beta > 0$ and $r \in \Z_q$. As a consequence of theorem 1.1, we know that limit points for different monochromatic boundary conditions must be different. In itself, this result already implies the existence of (at least) $q$ different Gibbs measure. In some cases, however, it is possible to know uniqueness of the limit points for each boundary condition. For the Potts model, this statement was proven in \cite{Aizenman1988} (see also \cite{bjornberg}) using the representation in terms of the random-cluster model. A more general approach is to use the Griffiths inequalities, in the framework provided by Ginibre \cite{Ginibre1970}, this will be the subject until the end of this section.
Before presenting the result, let's introduce some notation. Denote by $\mathcal{C}(\Omega_\Lambda)$ the set of all complex continuous functions on $\Omega_\Lambda$. 

\begin{definition}[Convex Cone] Let $Q$ be a subset of a vector space $V$. The set $Q$ is called a convex cone if, for every $v_1, v_2 \in V$ and every scalars $\lambda_1, \lambda_2 \geq 0$, we have $\lambda_1 v_1 + \lambda_2 v_2 \in Q$.
\end{definition}

\begin{remark}
    In what follows, we are going to use some basic facts about harmonic analysis on locally compact Abelian groups. We refer the reader to \cite{folland} for a good exposition on the subject.
\end{remark}

\begin{definition}[Positive Semi-Definite Function]
    Given a group $G$, a function $\varphi: G \to \mathbb{R}$ is said positive semi-definite if, for any finite family $g_1, ..., g_n \in G$, the matrix $(\varphi(g^{-1}_i g_j))_{ij}$ is positive semi-definite, that is, denoting by $B$ the corresponding bilinear form, then $B(v, v) \geq 0$, for any $v \in \mathbb{R}^n$.
\end{definition}

Given $S \subset \mathcal{C}(\Omega_\Lambda)$, denote by $Q(S)$ the closure of the intersection of all convex cones in $\mathcal{C}(\Omega_\Lambda)$ containing $S$ and closed under multiplication. Given $H \in \mathcal{C}(\Omega_\Lambda)$ real, define

\begin{equation*}
    \langle f \rangle_H := \left[\sum_{\omega \in \Omega_\Lambda} e^{-\beta H(\omega)} \right]^{-1}\sum_{\omega \in \Omega_\Lambda} f(\omega) e^{-\beta H(\omega)}.
\end{equation*}

\begin{theorem}[Ginibre, 1970 \cite{Ginibre1970}]\label{ginibre} Let $S \subset \mathcal{C}(\Omega_\Lambda)$ be a self-conjugate set and $-H \in Q(S)$. If, for any finite collection $f_1, ..., f_n \in S$ and any finite sequence $s_1, ..., s_n \in \{0, 1\}$,

\begin{equation}\label{q3}
    \sum_{\sigma \in \Omega_\Lambda} \sum_{\omega \in \Omega_\Lambda} \prod_{i = 1}^n \left(f_i(\sigma) + (-1)^{s_i}f_i(\omega) \right) \geq 0,
\end{equation}

then the two Griffiths' inequalities hold. That is, 

\begin{enumerate}
    \item $\langle f \rangle_H \geq 0$, $\forall f \in Q(S)$,
    \item $\langle fg \rangle_H - \langle f \rangle_H \langle g \rangle_H \geq 0$, $\forall f, g \in Q(S)$.
\end{enumerate}
\end{theorem}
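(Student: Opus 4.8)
The plan is to prove the statement along the lines of Ginibre's original argument, which proceeds by a duplication-of-variables trick combined with a structural induction on the cone $Q(S)$. First I would set up the duplicated system on $\Omega_\Lambda \times \Omega_\Lambda$ with the doubled Hamiltonian $\widetilde H(\sigma,\omega) = H(\sigma) + H(\omega)$, and introduce the two families of ``symmetrized'' and ``antisymmetrized'' functions attached to any $f \in \mathcal C(\Omega_\Lambda)$, namely $f_+(\sigma,\omega) = f(\sigma) + f(\omega)$ and $f_-(\sigma,\omega) = f(\sigma) - f(\omega)$. The key elementary observation is that the hypothesis \eqref{q3} says exactly that
\begin{equation*}
    \sum_{(\sigma,\omega) \in \Omega_\Lambda \times \Omega_\Lambda} \prod_{i=1}^n g_i(\sigma,\omega) \geq 0
\end{equation*}
for every finite choice of generators $g_i \in \{ (f)_+ , (f)_- : f \in S\}$, and that this class of products spans (under nonnegative linear combinations and closure) the cone $Q(S_+ \cup S_-)$ in $\mathcal C(\Omega_\Lambda \times \Omega_\Lambda)$, where $S_\pm := \{ f_\pm : f \in S\}$.

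Next I would verify that this positivity passes to the whole cone. The crucial algebraic identity is that for $f, g \in \mathcal C(\Omega_\Lambda)$ one has the ``parallelogram'' factorization
\begin{equation*}
    (fg)_+ = \tfrac12\big( f_+ g_+ + f_- g_- \big), \qquad (fg)_- = \tfrac12\big( f_+ g_- + f_- g_+ \big),
\end{equation*}
so the set of functions on the doubled space that are nonnegative integrals of products of elements of $S_+ \cup S_-$ is closed under the operations $f \mapsto f_+$, $f \mapsto f_-$ applied to products; since $S$ is self-conjugate, $S_+ \cup S_-$ is stable under complex conjugation, and the cone it generates is closed under multiplication by construction. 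Because $-H \in Q(S)$, the weight $e^{-\beta H(\sigma)}e^{-\beta H(\omega)}$ — which equals $\sum_{k} \tfrac{\beta^k}{k!}(-\widetilde H)^k$ and $-\widetilde H = (-H)_+ \in Q(S_+\cup S_-)$ by the identities above — lies in the closure of that cone. Therefore, for any $f \in Q(S)$, the product $f_+ \cdot e^{-\beta\widetilde H}$ also lies in it, and hence
\begin{equation*}
    \sum_{\sigma,\omega} \big(f(\sigma) + f(\omega)\big) e^{-\beta H(\sigma)} e^{-\beta H(\omega)} \geq 0.
\end{equation*}
By the symmetry $\sigma \leftrightarrow \omega$ this sum is $2 Z_\Lambda \sum_\sigma f(\sigma) e^{-\beta H(\sigma)}$, which gives the first Griffiths inequality $\langle f\rangle_H \geq 0$. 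For the second, I would run the same argument with the product $f_- g_- \cdot e^{-\beta\widetilde H}$, which is again in the cone; expanding
\begin{equation*}
    \sum_{\sigma,\omega} \big(f(\sigma)-f(\omega)\big)\big(g(\sigma)-g(\omega)\big) e^{-\beta H(\sigma)}e^{-\beta H(\omega)} \geq 0
\end{equation*}
and dividing by $2Z_\Lambda^2$ yields exactly $\langle fg\rangle_H - \langle f\rangle_H\langle g\rangle_H \geq 0$.

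The step I expect to be the main obstacle is the bookkeeping needed to justify that positivity of the ``elementary'' sums in \eqref{q3} — stated only for products of generators — propagates to all of $Q(S)$ after multiplication by the transcendental weight $e^{-\beta\widetilde H}$. This requires (i) a clean statement that the set of functions on $\Omega_\Lambda\times\Omega_\Lambda$ with nonnegative total sum is a closed convex cone, (ii) checking the stability of $S_+\cup S_-$ and the generated multiplicative cone under conjugation (using self-conjugacy of $S$), and (iii) a limiting argument to pass from the polynomial partial sums of $e^{-\beta\widetilde H}$ to the exponential, using that $\Omega_\Lambda$ is finite so all sums are finite and convergence is automatic. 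I would isolate (i)–(iii) as short lemmas so the main line of the proof stays transparent, and I would take care to state the \eqref{q3} hypothesis precisely enough that $s_i \in \{0,1\}$ corresponds to choosing $f_i \mapsto f_{i,+}$ or $f_i \mapsto f_{i,-}$.
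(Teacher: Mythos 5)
The paper itself offers no proof of this statement: it is quoted verbatim from Ginibre's 1970 paper and used as a black box (the text only notes that hypothesis \eqref{q3} is Ginibre's condition $(Q3)$ and that it holds for the set of positive semi-definite functions by Ginibre's Example 4). So there is nothing in the paper to compare your argument against line by line; what you have done is reconstruct Ginibre's original duplicate-variables proof, and the reconstruction is sound. The central identities are correct: $(fg)_{+}=\tfrac12\bigl(f_{+}g_{+}+f_{-}g_{-}\bigr)$ and $(fg)_{-}=\tfrac12\bigl(f_{+}g_{-}+f_{-}g_{+}\bigr)$, which by induction send all of $Q(S)$ into the multiplicative cone generated by $S_{+}\cup S_{-}$ on the doubled space; hypothesis \eqref{q3} is exactly nonnegativity of the total double sum on generator products, and this passes to the full cone by linearity and by continuity of the total-sum functional (here $\Omega_\Lambda$ is finite, so closure causes no trouble); $-\widetilde H=(-H)_{+}$ lies in that cone since $-H\in Q(S)$, hence so does $e^{-\beta\widetilde H}$ via the exponential series; and the final expansions of $\sum_{\sigma,\omega}f_{+}e^{-\beta\widetilde H}$ and $\sum_{\sigma,\omega}f_{-}g_{-}e^{-\beta\widetilde H}$ do yield $2Z_\Lambda\sum_\sigma f e^{-\beta H}$ and $2Z_\Lambda^{2}\bigl(\langle fg\rangle_H-\langle f\rangle_H\langle g\rangle_H\bigr)$ respectively. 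The items you flag as needing care are the right ones; the only one I would insist on spelling out is the role of self-conjugacy, which is what guarantees that the relevant quantities (in particular $\langle f\rangle_H$ for $f\in Q(S)$ built from complex elements of $\mathcal C(\Omega_\Lambda)$) are real, so that the nonnegativity statements make sense. With those lemmas written out, your outline is a complete and correct proof of the cited theorem.
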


The condition \eqref{q3} is called $(Q3)$ in \cite{Ginibre1970}. By example $4$ of \cite{Ginibre1970}, \eqref{q3} holds if we take $S = S_\Lambda$ as the set of real positive semi-definite functions in $\Omega_\Lambda$. Since $S_\Lambda = Q(S_\Lambda)$, the Theorem above tells us that the Griffiths' Inequalities hold provided that $-H$ is positive semi-definite. The following lemma gives us another characterization for positive semi-definite functions.
    
\begin{lemma}
   Let $\varphi: G \to \mathbb{C}$ be a function in $L^1(G)$. If the Fourier Transform $\widehat{\varphi}$ is in $L^1(\widehat{G})$ and $\widehat{\varphi} \geq 0$, then $\varphi$ is positive semi-definite.
\end{lemma}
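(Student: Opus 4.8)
The plan is to use Fourier inversion to write $\varphi$ as a superposition of characters with nonnegative weights, and then recognize that each character is positive semi-definite, so that nonnegative combinations of them remain so. Concretely, since $\varphi \in L^1(G)$ and $\widehat{\varphi} \in L^1(\widehat{G})$ with $\widehat{\varphi}\geq 0$, the Fourier inversion theorem (see \cite{folland}) gives, for a.e.\ $g\in G$ (and everywhere if $\varphi$ is taken continuous, which is the relevant case here since $\Omega_\Lambda$ and the groups in play are finite/discrete),
\begin{equation*}
    \varphi(g) = \int_{\widehat{G}} \widehat{\varphi}(\chi)\, \chi(g)\, d\chi,
\end{equation*}
with a suitable normalization of Haar measure on $\widehat{G}$. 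Here I would remark that in our application $G = \Z_q$ is finite, so all integrability hypotheses are automatic and the inversion formula is just a finite sum $\varphi(g) = \frac{1}{|G|}\sum_{\chi} \widehat{\varphi}(\chi)\chi(g)$; I would state the lemma in the general locally compact abelian setting but note this simplification.

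Next I would verify that each character $\chi \in \widehat{G}$ is itself positive semi-definite. Given $g_1,\dots,g_n \in G$, the matrix $\big(\chi(g_i^{-1}g_j)\big)_{ij} = \big(\overline{\chi(g_i)}\,\chi(g_j)\big)_{ij}$ has the form $v v^*$ where $v = (\chi(g_1),\dots,\chi(g_n))^\top$, hence for any vector $w$ one has $w^* (vv^*) w = |\langle v, w\rangle|^2 \geq 0$. (One should be slightly careful that the paper's definition of positive semi-definite is phrased for real $\varphi$ and real test vectors $v \in \mathbb{R}^n$; since $\varphi$ is assumed $\mathbb{C}$-valued in the lemma but the conclusion is used for the real symmetric $\varphi$ of the Hamiltonian, this is harmless — one either extends the definition to Hermitian forms and complex $v$, or, when $\varphi$ is real, pairs up $\chi$ with $\overline{\chi}$ so that the imaginary parts cancel and one works with $\cos$-type real characters.)

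Finally, I would assemble the pieces: for fixed $g_1,\dots,g_n$ and any $v$,
\begin{equation*}
    \sum_{i,j} v_i v_j\, \varphi(g_i^{-1}g_j) = \int_{\widehat{G}} \widehat{\varphi}(\chi) \Big(\sum_{i,j} v_i v_j\, \chi(g_i^{-1}g_j)\Big) d\chi = \int_{\widehat{G}} \widehat{\varphi}(\chi)\, \Big|\sum_i v_i \chi(g_i)\Big|^2 d\chi \geq 0,
\end{equation*}
where the interchange of sum and integral is justified by $\widehat{\varphi}\in L^1$ and the finiteness of the family $g_1,\dots,g_n$, and the final inequality uses $\widehat{\varphi}\geq 0$ pointwise together with the nonnegativity of the integrand's other factor. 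This proves $\varphi$ is positive semi-definite. The only genuinely delicate point is the bookkeeping between the real-valued formulation in the paper's definition and the complex Fourier-analytic argument; everything else (Fourier inversion, Fubini) is standard and, in the finite-group case actually used later, entirely elementary. I would close by noting that this lemma is exactly what lets us apply Ginibre's theorem whenever $\widehat{\varphi}\geq 0$, which is the hypothesis of Corollary \ref{corol_phase_transition}.
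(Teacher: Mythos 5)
Your proposal is correct and follows essentially the same route as the paper: both apply the Fourier inversion formula to write $\varphi(g_i^{-1}g_j)$ as an integral of characters weighted by $\widehat{\varphi}\geq 0$, and then observe that the resulting quadratic form is $\int_{\widehat{G}}\widehat{\varphi}(\chi)\bigl|\sum_i v_i\chi(g_i)\bigr|^2\,d\widehat{\mu}(\chi)\geq 0$ — the paper merely packages this as the Gram matrix of the positive semi-definite form $\langle u,v\rangle_{\widehat{\varphi}}$ on $\mathrm{span}\{\widehat{g_1},\dots,\widehat{g_n}\}$, while you write the same computation out directly. Your added remarks on the real-versus-complex bookkeeping and the finite-group simplification are harmless and do not change the argument.
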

\begin{proof}
    Let $g_1, ..., g_n$ be a finite collection of elements in $G$. We want to show that the matrix $(\varphi(g^{-1}_i g_j))_{ij}$ is positive semi-definite. Since $\widehat{\varphi} \in L^1(\widehat{G})$, we can use the Fourier inversion formula, which tells us that

    \begin{equation*}
        \varphi(g^{-1}_i g_j) = \int_{\widehat{G}} \widehat{\varphi}(\chi) \widehat{g^{-1}_i g_j}(\chi) d\widehat{\mu}(\chi) = \int_{\widehat{G}} \widehat{\varphi}(\chi) \widehat{g^{-1}_i}(\chi) \widehat{g_j}(\chi) d\widehat{\mu}(\chi),
    \end{equation*}
where $\widehat{\mu}$ is the Pontryagin dual measure of some Haar measure $\mu$ in $G$, and $\widehat{g}(\chi) = \chi(g)$ is the evaluation map. Now, notice that the bilinear form $\langle \cdot, \cdot \rangle_{h}$ defined on $\text{span}\{\widehat{g_1}, ..., \widehat{g_n}\}$ by 

\begin{equation*}
    \langle u, v \rangle_h = \int_{\widehat{G}} h(\chi) u(\chi) \overline{v(\chi)} d\widehat{\mu}(\chi)
\end{equation*}
is positive semi-definite provided that $h(\chi) \geq 0$, $\forall \chi \in \widehat{G}$. Recalling that a matrix given by $(\langle v_i, v_j\rangle)_{ij}$ is positive semi-definite if $\langle \cdot, \cdot \rangle$ is so, we have that the matrix $(\varphi(g^{-1}_i g_j))_{ij} = \langle \widehat{g_i}, \widehat{g_j}\rangle_{\widehat{\varphi}})_{ij}$ is positive semi-definite.

\end{proof}

\begin{remark}
    By the Bochner's Theorem (see Theorem 4.18 of \cite{folland}), a function is positive semi-definite if, and only if, its Fourier transform is non-negative.
\end{remark}

\begin{proposition}
    The Fourier transforms of the functions $\varphi_{\text{cl}}(n) = \cos(2\pi n/q)$ and $\varphi_{\text{p}}(n) = \mathbbm{1}_{\{n = 0\}}$ are non-negative.
\end{proposition}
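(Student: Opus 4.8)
The plan is to compute the Fourier transforms on the finite cyclic group $\Z_q$ directly, since in this case the dual group is again $\Z_q$ and the Fourier transform is just a finite sum of $q$-th roots of unity. Concretely, for $f : \Z_q \to \mathbb{C}$ the transform is $\widehat{f}(k) = \sum_{n \in \Z_q} f(n)\, e^{-2\pi i nk/q}$ for $k \in \Z_q$, and we must show $\widehat{\varphi_{\mathrm{cl}}}(k) \geq 0$ and $\widehat{\varphi_{\mathrm{p}}}(k) \geq 0$ for every $k$.

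For the Potts function $\varphi_{\mathrm{p}}(n) = \mathbbm{1}_{\{n=0\}}$ the computation is immediate: $\widehat{\varphi_{\mathrm{p}}}(k) = \sum_{n} \mathbbm{1}_{\{n=0\}} e^{-2\pi i nk/q} = 1 \geq 0$ for all $k$, so there is nothing to do beyond writing this line. For the clock function $\varphi_{\mathrm{cl}}(n) = \cos(2\pi n/q)$, I would write $\cos(2\pi n/q) = \tfrac12(e^{2\pi i n/q} + e^{-2\pi i n/q})$ and then use the orthogonality relation $\sum_{n \in \Z_q} e^{2\pi i n(m-k)/q} = q\,\mathbbm{1}_{\{k \equiv m \ (q)\}}$. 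This gives $\widehat{\varphi_{\mathrm{cl}}}(k) = \tfrac{q}{2}\big(\mathbbm{1}_{\{k \equiv 1\}} + \mathbbm{1}_{\{k \equiv -1\}}\big)$, which is manifestly non-negative (equal to $q/2$ when $k \equiv \pm 1 \pmod q$ — taking care of the degenerate case $q \le 2$ where $1 \equiv -1$ and the value is $q$ — and $0$ otherwise). One should be mildly careful about the normalization convention for the Haar measure on a finite group, but since positivity is unaffected by a positive multiplicative constant, any convention works; I would just fix counting measure to keep things clean and consistent with the lemma above.

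There is essentially no obstacle here — both functions are real, even (invariant under $n \mapsto -n$), and their transforms are elementary. The only thing requiring a moment's thought is phrasing: one could alternatively invoke Bochner's theorem from the preceding remark and exhibit explicitly, for each function, a spectral representation $\varphi(n) = \sum_{k} c_k \chi_k(n)$ with $c_k \geq 0$ (which is exactly what the Fourier computation produces), thereby making the connection to positive semi-definiteness transparent. I would present the direct computation and then remark that, combined with the lemma (or Bochner), this shows $-H$ is positive semi-definite for the Clock and Potts Hamiltonians, so Ginibre's Theorem \ref{ginibre} applies and the Griffiths inequalities hold for these models, which is the use to which this proposition will be put.
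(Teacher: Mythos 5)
Your proposal is correct and follows essentially the same route as the paper: expanding $\cos(2\pi n/q)$ as $\tfrac12\chi_1 + \tfrac12\chi_{q-1}$ (your orthogonality computation just makes the resulting non-negative coefficients explicit, up to the harmless normalization you note) and observing directly that $\widehat{\varphi_{\mathrm{p}}} \equiv 1$. Your extra remarks on the degenerate case $q \le 2$ and the link to Bochner/Ginibre are fine but not needed beyond what the paper already records.
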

\begin{proof}
Recall that every character of $\mathbb{Z}_q$ can be written in the form $\chi_k(n) = \exp\left(i \frac{2\pi}{q}kn \right)$, for some $k \in \mathbb{Z}_q$. For $\varphi_{\text{cl}}$, we can write

\begin{equation*}
    \varphi_{\text{cl}}(n) = \cos\left(\frac{2\pi}{q}n\right) = \frac{1}{2}\exp\left(i\frac{2\pi}{q}n\right) + \frac{1}{2}\exp\left(-i\frac{2\pi}{q}n\right) = \frac{1}{2} \chi_{1}(n) + \frac{1}{2}\chi_{q-1}(n).
\end{equation*}

Since the Fourier Transform must be proportional to these coefficients, we conclude that it must be non-negative.

For the Potts one, we have $\widehat{\varphi_{\text{p}}}(k) = \sum_{n} \varphi_{\text{p}}(n) \overline{\chi_{k}(n)} = \overline{\chi_{k}(0)} = 1$, so $\widehat{\varphi_{\text{p}}}$ is also non-negative.
\end{proof}

\begin{corollary}\label{clock}
    Let $\Lambda \Subset \mathbb{Z}^d$ and $S_{\Lambda}$ be the set of all real positive semi-definite functions on $\Omega_{\Lambda}$. Then, for any Hamiltonian of the form \eqref{hamiltonian}, if $\varphi$ is positive semi-definite, we have  
    \begin{enumerate}
    \item $\langle f \rangle_{\Lambda, \beta, h}^q \geq 0$;
    \item $\langle f g \rangle_{\Lambda, \beta, h}^q  - \langle f \rangle_{\Lambda, \beta, h}^q  \langle g \rangle_{\Lambda, \beta, h}^q  \geq 0$,
\end{enumerate}
for any $f, g \in S_{\Lambda}$.
\end{corollary}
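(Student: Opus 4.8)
The plan is to apply Ginibre's Theorem~\ref{ginibre} with the self-conjugate set $S=S_\Lambda$ of real positive semi-definite functions on $\Omega_\Lambda$. Three things are required: that $S_\Lambda$ is self-conjugate, which is immediate since its elements are real-valued; that condition \eqref{q3} holds for $S_\Lambda$, which is Example~4 of \cite{Ginibre1970} as recalled right after the statement of Theorem~\ref{ginibre}; and that $-H^q_{\Lambda,h}\in Q(S_\Lambda)$. Since $S_\Lambda=Q(S_\Lambda)$, this last point reduces to checking that $-H^q_{\Lambda,h}$, regarded as a function on the finite abelian group $\Omega_\Lambda=(\Z_q)^\Lambda$, is positive semi-definite. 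Granting this, the two conclusions of Theorem~\ref{ginibre} applied to $f,g\in Q(S_\Lambda)=S_\Lambda$ are exactly items~(1) and~(2) of the corollary.

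The technical heart, which I would record as a one-line lemma, is that positive semi-definiteness is stable under pullback by group homomorphisms: if $\pi:\Omega_\Lambda\to\Z_q$ is a homomorphism and $\psi:\Z_q\to\mathbb{R}$ is positive semi-definite, then $\psi\circ\pi$ is positive semi-definite on $\Omega_\Lambda$. Indeed, for any finite family $\sigma^{(1)},\dots,\sigma^{(n)}\in\Omega_\Lambda$, writing $n_i:=\pi(\sigma^{(i)})\in\Z_q$ and using additive notation (the groups are abelian), the matrix $\big(\psi(\pi(\sigma^{(i)}-\sigma^{(j)}))\big)_{ij}=\big(\psi(n_i-n_j)\big)_{ij}$ is positive semi-definite by the hypothesis on $\psi$.

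I would then apply this with $\psi=\varphi$ (positive semi-definite by assumption) and with the homomorphisms $\sigma\mapsto\sigma_x-\sigma_y$ and $\sigma\mapsto\sigma_x$ from $\Omega_\Lambda$ to $\Z_q$, which shows that each $\sigma\mapsto\varphi(\sigma_x-\sigma_y)$ and each $\sigma\mapsto\varphi(\sigma_x)$ is positive semi-definite on $\Omega_\Lambda$. Taking the monochromatic boundary condition to be the identity of $\Z_q$ (the ``$q$'' boundary condition), the interaction with the exterior rewrites as $\sum_{x\in\Lambda}\big(\sum_{y\notin\Lambda}J_{xy}\big)\varphi(\sigma_x)$, i.e.\ a combination of such functions with non-negative coefficients $\sum_{y\notin\Lambda}J_{xy}\geq 0$; together with $J_{xy}\geq 0$ in the bulk term and $h_x\geq 0$ in the (ferromagnetic) field term, $-H^q_{\Lambda,h}$ is a non-negative linear combination of positive semi-definite functions. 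Since $S_\Lambda$ is a convex cone, $-H^q_{\Lambda,h}\in S_\Lambda=Q(S_\Lambda)$, which completes the verification and hence the proof.

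The main obstacle is precisely this membership $-H\in Q(S_\Lambda)$, and inside it the homomorphism-pullback fact is the only real point; everything else (self-conjugacy, \eqref{q3} for $S_\Lambda$, the identity $S_\Lambda=Q(S_\Lambda)$, and closure of $S_\Lambda$ under sums, non-negative scalings, products and pointwise limits) has already been supplied. One should be careful with the boundary and field terms: the argument uses that the field is ferromagnetic, $h_x\geq 0$, and that the monochromatic boundary value is the group identity, so that the resulting single-site term carries a non-negative prefactor rather than being a genuine translate of $\varphi$ (a translate of a positive semi-definite function need not be positive semi-definite). For a general monochromatic boundary condition $\eta\equiv r$ one first applies the change of variables $\sigma\mapsto\sigma+r$, under which $\varphi(\sigma_x-\sigma_y)$ and $\varphi(\sigma_x-\eta_y)$ are invariant, to reduce to the identity boundary condition.
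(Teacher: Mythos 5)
Your proof is correct, and it reaches the same reduction as the paper (Ginibre's Theorem \ref{ginibre} with $S=S_\Lambda$, plus $S_\Lambda=Q(S_\Lambda)$ and the convex-cone property, so everything hinges on showing $-H^q_{\Lambda,h}$ is positive semi-definite), but the way you establish that membership is genuinely different. The paper works on the Fourier side: it invokes its lemma that a non-negative Fourier transform implies positive semi-definiteness, and then computes, via characters and the inversion formula on product groups, that $\varphi\circ\delta$ (with $\delta(g,g')=g-g'$) and $\psi\circ\pi_1$ have non-negative Fourier transforms, from which each term $J_{xy}\varphi(\sigma_x-\sigma_y)$ is positive semi-definite. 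You instead prove a single elementary lemma --- positive semi-definiteness is preserved under pullback by a group homomorphism --- and apply it to the homomorphisms $\sigma\mapsto\sigma_x-\sigma_y$ and $\sigma\mapsto\sigma_x$ from $(\Z_q)^\Lambda$ to $\Z_q$; this subsumes both of the paper's Fourier computations (since $\delta$ and $\pi_1$ are themselves homomorphisms), works directly from the hypothesis that $\varphi$ is positive semi-definite without passing through Bochner or character orthogonality, and is shorter. Your version is also more explicit on two points the paper leaves implicit: the field term only enters the cone with non-negative coefficients, so one needs $h_x\geq 0$ (the corollary as literally combined with \eqref{hamiltonian} allows real $h_x$, for which the statement would fail in general), and the boundary term is a single-site $\varphi(\sigma_x)$ precisely because the monochromatic value $q$ is the group identity; your closing remark on reducing a general monochromatic boundary condition to the identity by translation invariance of $\varphi(\sigma_x-\sigma_y)$ is not needed for the statement as given (only the $q$-boundary appears) but is consistent with how the paper later handles other boundary conditions in Corollary \ref{thermo_limit}.
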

\begin{proof}
In first place notice that, although the results in Theorem \ref{ginibre} (according to \cite{Ginibre1970}) are restricted to finite volumes only, for any $\Lambda$-local function $f: \Omega \to \mathbb{C}$, we have

\begin{equation*}
    \langle f \rangle_{\Lambda, \beta, h}^q  = \langle f \rangle_{H^q_{\Lambda, h}} :=  \left[\sum_{\omega \in \Omega_\Lambda} e^{-\beta H^q_{\Lambda, h}(\omega)} \right]^{-1}\sum_{\omega \in \Omega_\Lambda} f(\omega) e^{-\beta H^q_{\Lambda, h}(\omega)},
\end{equation*}
where on the right-hand side, both $f$ and $H^q_{\Lambda, h}$ are being regarded as functions on $\Omega_{\Lambda}$, since they are $\Lambda$-local.


Due to the previous lemma, we only need to show that $\widehat{-H_{\Lambda, h}^q}$ is non-negative. Given an Abelian and finite group $G$, define $\delta: G \times G \to G$ by $\delta(g, h) = g - h$. If $\varphi : G \to \mathbb{C}$ has a non-negative Fourier transform, then $\varphi \circ \delta: G \times G \to \mathbb{C}$ has a non-negative Fourier transform as well. Indeed, recall that the dual of the product of two groups is the product of the respective dual groups (see Proposition 4.6 of \cite{folland}). Thus,

\begin{align*}
    \widehat{\varphi \circ \delta}(\chi_1, \chi_2) &= \sum_{n_1, n_2} \varphi(n_1 - n_2) \overline{\chi_1(n_1)}\ \overline{\chi_2(n_2)}.
\end{align*}

Using the Inversion Formula,

\begin{equation*}
    \varphi(n) = \sum_{\xi \in \widehat{G}} \alpha_\xi \xi(n),
\end{equation*}

where $\alpha_\xi \geq 0$, by the hypothesis that $\varphi$ has a non-negative Fourier transform. Substituting,

\begin{align*}
    \widehat{\varphi \circ \delta}(\chi_1, \chi_2) &= \sum_{n_1, n_2} \sum_{\xi \in \widehat{G}} \alpha_\xi \xi(n_1 - n_2) \overline{\chi_1(n_1)}\ \overline{\chi_2(n_2)} \\
    & = \sum_{\xi \in \widehat{G}} \alpha_\xi \left[\sum_{n_1 \in G} \xi(n_1) \overline{\chi_1(n_1)}  \right] \left[\sum_{n_2 \in G} \xi^{-1}(n_2) \overline{\chi_2(n_2)}  \right].
\end{align*}

Recall that the characters of an Abelian finite group satisfy the following orthogonality relation:

\begin{equation*}
    \sum_{g \in G} \chi_1(g) \overline{\chi_2(g)} =
    \begin{dcases}
        |G| &\text{ if } \chi_1 = \chi_2, \\[0.2cm]
        0 &\text{ otherwise.}
    \end{dcases}
\end{equation*}

This means that the only term of the summation over $\xi$ that will be non-zero is the term $\xi = \chi_1$, provided that $\chi_1 = \chi_2^{-1}$. In summary,

\begin{equation*}
     \widehat{\varphi \circ \delta}(\chi_1, \chi_2) =
     \begin{dcases}
         \alpha_{\chi_1}|G|^2 &\text{ if } \chi_1 = \chi_2^{-1}, \\[0.2cm]
         0 &\text{ otherwise.}
     \end{dcases}   
\end{equation*}

This shows that $\widehat{\varphi \circ \delta}(\chi_1, \chi_2) \geq 0$, as desired.

Finally, we need to show that, if $\psi: G_1 \to \mathbb{C}$ has a non-negative Fourier transform, and $\pi_1(g_1, g_2) = g_1$, then $\psi \circ \pi_1: G_1 \times G_2 \to \mathbb{C}$ has a non-negative Fourier transform as well. In fact, 

\begin{align*}
    \widehat{\psi \circ \pi_1}(\chi_1, \chi_2) &= \sum_{n_1 \in G_1} \sum_{n_2 \in G_2} (\psi \circ \pi_1)(n_1, n_2) \overline{\chi_1(n_1)}\ \overline{\chi_2(n_2)}\\[0.3cm]
    &= \sum_{n_2 \in G_2} \left( \sum_{n_1 \in G_1} \psi(n_1) \overline{\chi(n_1)} \right) \overline{\chi_2(n_2)} \\[0.4cm]
    &= \widehat{\psi}(\chi_1) \sum_{n_2 \in G_2} \overline{\chi_2(n_2)}.
\end{align*}
Then, 
\begin{equation*}
     \widehat{\psi \circ \pi_1}(\chi_1, \chi_2) =
    \begin{dcases}
        |G_2| \widehat{\psi}(\chi_1) &\text{ if } \chi_2 \equiv 1, \\[0.3cm]
        0 &\text{ otherwise.}
    \end{dcases}
\end{equation*}

With the previous facts and using that the set of real positive semi-definite functions is a convex cone, we have that $J_{xy}\varphi(\sigma_x - \sigma_y)$ are positive semi-definite on $\Omega_\Lambda$, and hence the whole Hamiltonian.
\end{proof}

By standard methods we can prove the following proposition.

\begin{proposition}\label{monotone}
For any $\Lambda$-local function $f: \Omega \to \mathbb{R}$ with non-negative Fourier transform and $\beta > 0$,
\begin{enumerate}
    \item the mapping $h_z \mapsto \langle f \rangle^q_{\Lambda, \beta, h}$ is non-decreasing  for any $z \in \Lambda$.
    \item  $\langle f \rangle^q_{\Delta, \beta, h} \leq \langle f \rangle_{\Lambda, \beta, h}^q $, for any $\Lambda \subset \Delta \Subset \mathbb{Z}^d$.
\end{enumerate}
    
\end{proposition}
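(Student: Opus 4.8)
The plan is to prove item~(1) by differentiating in $h_z$ and invoking the second Griffiths inequality from Corollary~\ref{clock}, and then to deduce item~(2) from item~(1) by pushing the fields on $\Delta\setminus\Lambda$ to $+\infty$. For item~(1), fix $z\in\Lambda$ and observe that $h_z\mapsto\langle f\rangle^q_{\Lambda,\beta,h}$ is smooth (a ratio of finite sums of exponentials), with
\[
\frac{\partial}{\partial h_z}\,\langle f\rangle^q_{\Lambda,\beta,h}=\beta\Big(\langle f\cdot(\varphi\circ\pi_z)\rangle^q_{\Lambda,\beta,h}-\langle f\rangle^q_{\Lambda,\beta,h}\,\langle\varphi\circ\pi_z\rangle^q_{\Lambda,\beta,h}\Big),
\]
where $\pi_z\colon\Omega_\Lambda\to\Z_q$ is the projection onto the $z$-coordinate, so $\varphi\circ\pi_z(\sigma)=\varphi(\sigma_z)$; indeed $\beta\,\varphi\circ\pi_z$ is exactly $\partial_{h_z}(-\beta H^q_{\Lambda,h})$. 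The function $\varphi\circ\pi_z$ is real and, by the projection computation in the proof of Corollary~\ref{clock} (the step on $\psi\circ\pi_1$), has non-negative Fourier transform because $\widehat\varphi\geq0$ (the standing assumption of Corollary~\ref{clock}); hence $\varphi\circ\pi_z\in S_\Lambda$. Since $f\in S_\Lambda$ by hypothesis and $-H^q_{\Lambda,h}\in Q(S_\Lambda)$ by Corollary~\ref{clock}, the second Griffiths inequality there shows the covariance above is $\geq0$, so the derivative is non-negative and $h_z\mapsto\langle f\rangle^q_{\Lambda,\beta,h}$ is non-decreasing.

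For item~(2), write $\Delta$ as the disjoint union of $\Lambda$ and $\Delta\setminus\Lambda$ and, for $t\geq0$, let $h^{(t)}$ be the field on $\Delta$ equal to $h_x$ for $x\in\Lambda$ and to $h_x+t$ for $x\in\Delta\setminus\Lambda$. Applying item~(1) in the volume $\Delta$, one site of $\Delta\setminus\Lambda$ at a time (note that $f$, viewed on $\Omega_\Delta$, still has non-negative Fourier transform, by the same projection argument), the map $t\mapsto\langle f\rangle^q_{\Delta,\beta,h^{(t)}}$ is non-decreasing, so $\langle f\rangle^q_{\Delta,\beta,h}=\langle f\rangle^q_{\Delta,\beta,h^{(0)}}\leq\lim_{t\to\infty}\langle f\rangle^q_{\Delta,\beta,h^{(t)}}$. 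It remains to identify this limit with $\langle f\rangle^q_{\Lambda,\beta,h}$: since $H^q_{\Delta,h^{(t)}}(\sigma)=H^q_{\Delta,h}(\sigma)-t\sum_{x\in\Delta\setminus\Lambda}\varphi(\sigma_x)$ and, by the ferromagnetic hypothesis $\varphi(0)>\varphi(n)$ for $n\neq0$, the identity $0\equiv q$ of $\Z_q$ is the unique maximizer of $\varphi$, the measure $\mu^q_{\Delta,\beta,h^{(t)}}$ concentrates as $t\to\infty$ on $\{\sigma:\sigma_x=q\ \forall x\in\Delta\setminus\Lambda\}$; and on that set $H^q_{\Delta,h}(\sigma)$ equals $H^q_{\Lambda,h}(\sigma|_\Lambda)$ up to an additive constant (the couplings from $\Lambda$ to $\Delta\setminus\Lambda$ and from $\Lambda$ to $\Delta^c$ collapse, respectively, into the single boundary term $-J_{xy}\varphi(\sigma_x-q)$ of $H^q_{\Lambda,h}$, while the couplings inside $\Lambda$ and the field on $\Lambda$ are unchanged). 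Hence the limiting law of $\sigma|_\Lambda$ is $\mu^q_{\Lambda,\beta,h}$, and since $f$ is $\Lambda$-local, $\lim_{t\to\infty}\langle f\rangle^q_{\Delta,\beta,h^{(t)}}=\langle f\rangle^q_{\Lambda,\beta,h}$; combining with the inequality above gives $\langle f\rangle^q_{\Delta,\beta,h}\leq\langle f\rangle^q_{\Lambda,\beta,h}$, which is exactly item~(2). This is the $q$-state analogue of the familiar fact that Ising measures with $+$ boundary condition are monotone decreasing in the volume.

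The only genuinely delicate point is this last identification: one must check carefully that pinning the spins of $\Delta\setminus\Lambda$ to $q$ in the $\Delta$-Hamiltonian reproduces exactly the monochromatic-$q$ $\Lambda$-Hamiltonian — i.e., that $\varphi(\sigma_x-\sigma_y)$ evaluated at $\sigma_y=q$ is the boundary term $\varphi(\sigma_x-q)$ appearing in $H^q_{\Lambda,h}$, using $q\equiv0$ in $\Z_q$, and that the couplings from $\Lambda$ to $\Delta^c$ are boundary terms in both Hamiltonians (since $\Delta^c\cup(\Delta\setminus\Lambda)=\Lambda^c$). It is precisely the ferromagnetic hypothesis that makes the $t\to\infty$ limit deterministic. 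Everything else — smoothness in $h_z$, finiteness of all the sums, and the Fourier bookkeeping for $\varphi\circ\pi_z$ and for $f$ regarded on $\Omega_\Delta$ — is routine and parallels the proof of Corollary~\ref{clock}.
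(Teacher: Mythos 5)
Correct. The paper offers no argument here (it only says ``by standard methods''), and your proof is exactly the intended standard one: Griffiths' second inequality from Corollary~\ref{clock} applied to $f$ and $\varphi\circ\pi_z$ gives monotonicity in each $h_z$, and the ghost-field limit $t\to\infty$ on $\Delta\setminus\Lambda$, using that $q\equiv 0$ is the unique maximizer of $\varphi$, correctly identifies the limit with $\langle f\rangle^q_{\Lambda,\beta,h}$ and yields volume monotonicity; the only implicit point, inherited from the paper's own statement of Corollary~\ref{clock}, is that placing $-H^q_{\Lambda,h}$ in the cone $Q(S_\Lambda)$ tacitly uses $\widehat\varphi\geq 0$ together with non-negative field coefficients, which is how the proposition is actually applied (e.g.\ with $h\equiv 0$ in Corollary~\ref{thermo_limit}).
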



\begin{corollary}\label{thermo_limit}
    For any local function $f: \Omega \to \mathbb{R}$ and $r \in \Z_q$, $\lim_{\Lambda \uparrow \mathbb{Z}^d} \langle f \rangle_{\Lambda, \beta}^r  $ exists. 
\end{corollary}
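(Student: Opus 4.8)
The plan is to prove Corollary~\ref{thermo_limit} by showing that the net $(\langle f \rangle^r_{\Lambda,\beta})_{\Lambda}$ converges, first for a suitable spanning family of functions and then extending to general local $f$ by linearity. The key mechanism is the monotonicity in part (2) of Proposition~\ref{monotone}: for a $\Lambda$-local function $g$ with non-negative Fourier transform, $\Lambda \mapsto \langle g\rangle^r_{\Lambda,\beta}$ is non-decreasing along the directed set $\mathcal{P}_f(\Z^d)$, and it is bounded above (since $g$ is bounded, being a function on the finite product $\Omega_{\supp g}$), so the limit exists as a supremum. The subtlety is that an arbitrary real local function need not have non-negative Fourier transform, so one cannot apply this directly; this is the main obstacle and the reason the argument needs the group/Fourier structure set up earlier.

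To handle a general local $f$, I would first reduce to the monochromatic boundary condition $r = q$ (or any fixed $r$), since the others are handled identically, and recall that a $\Lambda$-local function may be regarded as a function on the finite group $\Omega_\Lambda = (\Z_q)^\Lambda$. By Fourier inversion on this finite abelian group, $f = \sum_{\chi} \widehat{f}(\chi)\chi$, a finite linear combination of characters. Each character $\chi$ of $(\Z_q)^{\supp f}$ is a product $\prod_{x}\chi_{k_x}(\sigma_x)$, and one checks (exactly as in the Proposition preceding Corollary~\ref{clock}) that real linear combinations $\operatorname{Re}\chi$, $\operatorname{Im}\chi$ can be written using functions whose Fourier transforms are supported on a single character — hence positive semi-definite after symmetrization — but more simply: write $f = f_+ - f_-$ where both $f_\pm$ are real local functions with non-negative Fourier transform. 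Concretely, since $\widehat{f}$ has finite support, pick a constant $M$ large enough that $f + M\,\mathbbm{1}$ and $M\,\mathbbm{1} - \text{(something)}$... the cleanest route is: the functions $\{\operatorname{Re}\chi, \operatorname{Im}\chi\}$ do not all have non-negative Fourier transform, so instead I would use that $f$ is a finite $\mathbb{R}$-linear combination of the \emph{real} functions $g_\chi := \operatorname{Re}(\chi) + \|\operatorname{Re}\chi\|_\infty$ type corrections is awkward; the robust statement is simply that $f = \sum_j c_j g_j$ with each $g_j$ in the span of positive semi-definite functions and $c_j \in \mathbb{R}$, because the positive semi-definite functions on a finite group span all of $\mathcal{C}(\Omega_\Lambda)$ (indeed every character is positive semi-definite up to the orthogonality computation, and characters span). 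Then by linearity of $\langle\cdot\rangle^r_{\Lambda,\beta}$ it suffices to prove convergence for each $g_j$, i.e.\ for functions with non-negative Fourier transform.

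So the argument concludes: for a $\Lambda_0$-local function $g$ with $\widehat{g}\geq 0$, the net $(\langle g\rangle^r_{\Lambda,\beta})_{\Lambda \supseteq \Lambda_0}$ is non-decreasing by Proposition~\ref{monotone}(2) and bounded by $\|g\|_\infty$, hence converges to $\sup_\Lambda \langle g\rangle^r_{\Lambda,\beta}$; since $\mathcal{P}_f(\Z^d)$ is directed and the net is eventually defined (for $\Lambda \supseteq \supp g$), this limit along the net exists. Writing a general local $f$ as a finite real linear combination of such $g$'s and using additivity of the limit of convergent nets gives existence of $\lim_{\Lambda\uparrow\Z^d}\langle f\rangle^r_{\Lambda,\beta}$. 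The only real work is the reduction step: justifying that every local real function lies in the real span of positive semi-definite (equivalently, non-negative-Fourier-transform) functions on $\Omega_\Lambda$; this follows from the finite-group Fourier inversion formula together with the fact — already used in the proof of Corollary~\ref{clock} — that a function whose Fourier transform is concentrated in a way making the symmetrized matrix $(\varphi(g_i^{-1}g_j))$ positive semi-definite, and one can always split $\widehat{f} = \widehat{f}_+ - \widehat{f}_-$ into its positive and negative parts, each giving a function with non-negative Fourier transform. I expect the monotonicity input to do all the heavy lifting and the Fourier bookkeeping to be the only place requiring care.
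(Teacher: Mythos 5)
Your overall mechanism (monotonicity in the volume from Proposition \ref{monotone}, boundedness, then linearity) is the same as the paper's, but your reduction step has a genuine gap. The decomposition you lean on --- ``split $\widehat f=\widehat f_+-\widehat f_-$ into positive and negative parts,'' and more broadly ``positive semi-definite functions span $\mathcal{C}(\Omega_\Lambda)$'' --- only works as stated for \emph{even} real functions. For a general real local $f$ the Fourier coefficients are complex, so ``positive and negative parts'' is undefined; moreover a real positive semi-definite function has a non-negative (hence real and even) Fourier transform and is therefore itself an even function, so the \emph{real} span of real positive semi-definite functions is exactly the space of even functions. A function with nonzero odd part (e.g.\ $\operatorname{Im}\chi$) is thus not a real linear combination of functions to which Proposition \ref{monotone} applies (that proposition requires $f$ real with $\widehat f\ge 0$), and passing to the complex span of characters does not help, since monotonicity is meaningless for complex-valued observables. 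The paper closes precisely this hole with a symmetry step you omit: because $\widehat\varphi\ge 0$ forces $\varphi$ to be even, $H^q_\Lambda$ is invariant under the global flip $\sigma\mapsto-\sigma$, hence $\langle f\rangle^q_{\Lambda,\beta}=0$ for every odd local $f$; one may then assume $f$ even, in which case its Fourier coefficients are real and even and the split $f=f_+-f_-$ really does yield real functions with non-negative Fourier transform. Without this observation (or a substitute) your argument does not cover local functions with an odd component.

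Two smaller points. Your ``reduce to $r=q$, the others are handled identically'' hides an actual issue: for $r\neq q$ the boundary term $\varphi(\sigma_x-r)$ is not positive semi-definite, so Corollary \ref{clock} and Proposition \ref{monotone} are not available for $\langle\cdot\rangle^r$ directly; the paper instead uses the spin translation $\tau_r$, writing $\langle f\rangle^r_{\Lambda,\beta}=\langle f\circ\tau_r^{-1}\rangle^q_{\Lambda,\beta}$, which reduces to the case already proved. Also, Proposition \ref{monotone}(2) states that expectations \emph{decrease} as the volume grows, not increase; this is harmless (a monotone net bounded in absolute value by $\|g\|_\infty$ converges either way, and the paper takes Griffiths' first inequality as the lower bound), but the direction you assert is the wrong one.
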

\begin{proof}
Let's start supposing that $r$ is the identity $q$. The Proposition above, together with the first Griffiths' inequality shows us that $\lim_{\Lambda \uparrow \mathbb{Z}^d} \langle f \rangle_{\Lambda, \beta}^q $ must exist whenever $f$ is a local function with non-negative Fourier transform. Now, let $f$ be any real local function. If $f$ is an odd function, the fact that $H^q_{\Lambda}$ is even implies that $\langle f \rangle_{\Lambda, \beta}^q = 0$. Thus, we may suppose without loss of generality that $f$ is even. By the inversion formula, we can write $f = \sum_k a_k \chi_k$, where $(\chi_k)_k$ are the characters of $(\mathbb{Z}_q)^{\Lambda}$, for some $\Lambda$ where $f$ is local. Since $f$ is even, the coefficients are real and we can split the previous sum in its positive and negative parts. Explicitly, we define $f_+ := \sum_k b_k \chi_k$ and $f_- := \sum_k c_k \chi_k$., where $b_k := \max(a_k, 0)$ and $c_k := \max(-a_k, 0)$, so we can write $f = f_+ - f_-$ such that both $f_+$ and $f_-$ have a non-negative Fourier transform. Since $f$ is real, we know that $k \mapsto a_k$ needs to be even. By construction, it is obvious that both $k \mapsto b_k$ and $k \mapsto c_k$ are even, so $f_+$ and $f_-$ are real. By the last proposition, 
\begin{equation*}
    \lim_{\Lambda \uparrow \mathbb{Z}^d} \langle f \rangle_{\Lambda, \beta, h}^q  = \lim_{\Lambda \uparrow \mathbb{Z}^d} \langle f_+ \rangle^q_{\Lambda, \beta, h} - \lim_{\Lambda \uparrow \mathbb{Z}^d} \langle f_- \rangle^q_{\Lambda, \beta, h}
\end{equation*}

so the conclusion follows. Now, take any $r \in \Z_q$ and define $\tau_r: \Omega_\Lambda \to \Omega_\Lambda$ by $(\tau_r(\sigma))_x = \sigma_x - r$. 

 Notice that $H^r_\Lambda(\sigma) = H^q_\Lambda(\tau_r(\sigma))$. Hence,

\begin{align*}
    \langle f \rangle_{\Lambda, \beta}^r = \sum_{\sigma \in \Omega_\Lambda} f(\sigma) e^{-\beta H^r_{\Lambda}(\sigma)} = \sum_{\sigma \in \Omega_\Lambda} f(\tau_r^{-1}(\tau_r(\sigma))) e^{-\beta H^q_{\Lambda}(\tau_r(\sigma))}.
\end{align*}

 Since $\tau_r$ is a bijection, we have $\langle f \rangle_{\Lambda, \beta}^r = \langle f \circ \tau_r^{-1} \rangle_{\Lambda, \beta}^q$. By what was already proven, the limit exist for $f \circ \tau^{-1}_r$ in the $q-$boundary condition, so the limit of $f$ exists in the $r-$boundary condition.
\end{proof}

\begin{remark}(Proof of Phase Transition via Griffiths' Inequalities) If we highlight the dependence with respect to the coupling $J_{xy}$, writing $\langle f \rangle^q_{\Lambda, \beta, \mathbf{J}}$, we can prove by standard methods that $J_{xy} \mapsto \langle f \rangle^q_{\Lambda, \beta, \mathbf{J}}$ is non-decreasing for any $x, y \in \Z^d$ and any real, local function $f$ that is positive semi-definite. We know that for $d \geq 2$, the nearest-neighbors Potts model presents phase transition at low temperatures. The monotonicity with respect to $J_{xy}$ implies the phase transition for the long-range Potts model. However, our goal is to present the new contours and a direct proof of the phase transition; the approach with contours can be used for further applications as for dealing with models with decaying fields, and many other problems.
\end{remark}
\section{Contours}\label{contour_section}

In this section we define the notion of $(M,a)$-\emph{partition}, which allow us to define the analogous to the Fr{\"o}hlich-Spencer contours in the multidimensional setting.

\begin{definition}\label{def:incorrect_points}
     Given a configuration $\sigma$, a point $x \in \Z^d$ is $r$\textit{-correct} for $\sigma$ if $\sigma_y = r$  for every $y \in B_1(x)$, where $B_1(x)$ is the unit ball in the $\ell_1$-norm centered at $x \in \Z^d$. A point is called \textit{incorrect} for $\sigma$ if it's not $r-$correct for any $r \in \Z_q$. The \textit{boundary} of a configuration $\sigma$ is the set $\partial \sigma$ of all incorrect points for $\sigma$.
\end{definition}

For systems with finite-range interactions, we can define the contours of a configuration as the connected components of its boundary. In our case, the contours will also be defined by a partition of the boundary, but taking connected components is no longer suitable. We need to introduce the following notion.

\begin{definition}\label{Ma}
    Let $M>0$ and $a > d$. For each $A\Subset\Z^d$, a set $\Gamma(A) \coloneqq \{\overline{\gamma} : \overline{\gamma} \subset A\}$ is called an $(M,a)$-\emph{partition} of $A$ when the following two conditions are satisfied.
	\begin{enumerate}[label=\textbf{(\Alph*)}, series=l_after] 
		\item They form a partition of $A$, i.e.,  $\bigcupdot_{\overline{\gamma} \in \Gamma(A)}\overline{\gamma}=A$.
		
		\item For all $\overline{\gamma}, \overline{\gamma}^\prime \in \Gamma(A)$, 
			\be\label{B_distance_2}
			\dis(\overline{\gamma},\overline{\gamma}') > M\min\left \{|V(\overline{\gamma})|,|V(\overline{\gamma}')|\right\}^\frac{a}{d+1},
 			\ee
	\end{enumerate}
 where $V(\Lambda)$ denotes the \textit{volume} of $\Lambda \Subset \mathbb{Z}^d$, and is given by $V(\Lambda) \coloneqq \Z^d \setminus \Lambda^{(0)}$ with $\Lambda^{(0)}$ being the unique unbounded connected component of $\Lambda^c$. For any $A \Subset \Z^d$, we denote by $|A|$ its cardinality.
\end{definition}

Even after fixing the parameters $M$ and $a$, there can be multiple partitions of a set that are $(M, a)$-partitions. However, there is always a \textit{finest} $(M,a)$-partition and we pick this one (see \cite{Johanes} for details). The finest $(M,a)$-partition of $A\Subset \Z^d$ satisfies the following property (see \cite{cluster}):

\begin{itemize}
    \item[\textbf{(A1)}] For any $\overline{\gamma},\overline{\gamma}^\prime\in \Gamma(A)$, $\overline{\gamma}'$ is contained in only one connected component of $(\overline{\gamma})^c$.
\end{itemize}

In this paper we will use $a \coloneqq a(\alpha,d) = \frac{3(d+1)}{(\alpha-d)\wedge 1}$. The constant $M$ will be appropriately chosen later.

\begin{definition}[\textbf{Contours}]\label{def:d_contours}
Given a configuration $\sigma$ with finite boundary, its \emph{contours} $\gamma$ are pairs $(\overline{\gamma},\sigma_{\overline{\gamma}})$, where $\overline{\gamma} \in \Gamma(\partial \sigma)$. The \emph{support of the contour} $\gamma$ is defined as $\Sp(\gamma)\coloneqq \overline{\gamma}$, and its \emph{size} is given by $|\gamma| \coloneqq |\Sp(\gamma)|$. 
\end{definition}

With this definition, every configuration $\sigma \in \Omega_\Lambda^{q}$ is naturally associated to the family of contours $\Gamma(\sigma) \coloneqq \{\gamma_1, ..., \gamma_n\}$, where the respective supports are the $(M, a)$-partition of $\Gamma(\partial \sigma)$.

Given a subset $\Lambda \Subset \Z^d$ we define its \textit{interior} as $\I(\Lambda) \coloneqq V(\Lambda) \setminus \Lambda$. For the special case of a contour $\gamma$, we write $\I(\gamma)$ and $V(\gamma)$ instead of $\I(\Sp(\gamma))$ and $V(\Sp(\gamma))$. Moreover, we define $V(\Gamma) \coloneqq \bigcup_{\gamma \in \Gamma} V(\gamma)$. We also define the \emph{edge boundary} of $\Lambda$ as $\partial \Lambda := \{\{x, y\} \subset \Z^d; |x - y| = 1, x \in \Lambda, y \in \Lambda^c\}$, the inner boundary as $\partial_{in} \Lambda := \{x \in \Lambda; |x - y| = 1 \text{ for some } y \in \Lambda^c\}$ and the exterior boundary as $\partial_{ex} \Lambda := \{x \in \Lambda^c; |x - y| = 1 \text{ for some } y \in \Lambda\}$.

Also, denoting by $\I(\gamma)^{(k)}$, $k = 1, ..., n$, the connected components of $\I(\gamma)$, we can define the \emph{label} map $\lab_{\overline{\gamma}}: \{\Sp(\gamma)^{(0)}, \I(\gamma)^{(1)},\dots, \I(\gamma)^{(n)}\} \rightarrow \mathbb{Z}_q$ by taking the label of $\Sp(\gamma)^{(0)}$ as the spin of $\sigma$ in $\inB V(\gamma)$ and the label of $\I(\gamma)^{(k)}$ as the spin of $\sigma$ in $\partial_{\text{ex}} V(\I(\gamma)^{(k)})$. Notice that there can be connected components of a contour sitting inside its own interior. However, the labels are well-defined, since the spin of $\sigma$ is constant in the boundaries of $\Sp(\gamma)$. The following sets will be useful 
	\begin{equation}\label{interior}
	\I_n(\gamma) \coloneqq \hspace{-1cm}\bigcup_{\substack{k \geq 1, \\ \lab_{\Sp(\gamma)}(\I(\gamma)^{(k)})=n}}\hspace{-1cm}\I(\Sp(\gamma))^{(k)} , \;\;\;
	\I(\gamma) = \bigcup_{n \in \mathbb{Z}_q} \I_n(\gamma), \;\;\; \I'(\gamma) = \bigcup_{\substack{n \in \mathbb{Z}_q\\ n \neq q}} \I_n(\gamma). \;\;\;
	\end{equation}

\begin{definition}[\textit{External Contours}]
    A contour $\ctr$ is \textit{external} with respect to a family $\ctrb$ if $\Sp(\ctr) \cap V(\ctr') = \emptyset$ for every $\ctr' \in \ctrb\backslash \{\ctr\}$. We will denote $\Gamma^e$ the family of all external contours from a given family of contours $\Gamma$. 
    
\end{definition}

In the usual Peierls' argument, the spin-flip symmetry is exploited in order to extract the contribution of a contour to the energy of a configuration. We will do the same here, but the spin-flip will be replaced by a transformation in the configuration space. Given some $\sigma \in \Omega^q_\Lambda$ and a contour $\gamma \in \Gamma^e(\sigma)$, we define

\begin{equation*}
    \tau_{\gamma}(\sigma)_x : =
    \begin{cases}
        q & \text{ if } x \in \Sp(\gamma),\\
        \sigma_x - n & \text{ if } x \in \I_n(\gamma),\\
        \sigma_x & \text{ if } x \in V(\gamma)^c .
    \end{cases}
\end{equation*}

\begin{figure}[H]
    \centering
    \tikzset{every picture/.style={line width=0.75pt}} 

\begin{tikzpicture}[x=0.75pt,y=0.75pt,yscale=-1,xscale=1]

\draw  [color={rgb, 255:red, 74; green, 144; blue, 226 }  ,draw opacity=1 ][fill={rgb, 255:red, 199; green, 192; blue, 192 }  ,fill opacity=1 ][line width=1.5]  (152.49,82.16) .. controls (183.07,69.05) and (217.65,70.64) .. (247.78,77.82) .. controls (277.9,85) and (306.54,114.05) .. (315.53,138.19) .. controls (324.52,162.33) and (298.79,222.3) .. (255.13,246.21) .. controls (211.48,270.11) and (48.45,214.09) .. (42.02,186.34) .. controls (35.58,158.6) and (121.92,95.28) .. (152.49,82.16) -- cycle ;
\draw  [color={rgb, 255:red, 74; green, 144; blue, 226 }  ,draw opacity=1 ][fill={rgb, 255:red, 74; green, 144; blue, 226 }  ,fill opacity=0.46 ][line width=0.75]  (92.54,139.95) .. controls (142.2,97.42) and (144.66,148.79) .. (144.58,160.51) .. controls (144.49,172.24) and (146.81,195.84) .. (134.31,206.86) .. controls (121.81,217.88) and (105,217.11) .. (84.91,199.94) .. controls (64.82,182.76) and (42.87,182.48) .. (92.54,139.95) -- cycle ;
\draw  [color={rgb, 255:red, 208; green, 2; blue, 27 }  ,draw opacity=1 ][fill={rgb, 255:red, 208; green, 2; blue, 27 }  ,fill opacity=0.31 ][line width=0.75]  (168.77,99.52) .. controls (182.05,96.36) and (222.8,108.27) .. (225.19,134.41) .. controls (227.57,160.55) and (163.72,170.6) .. (158.43,157.59) .. controls (153.14,144.58) and (155.5,102.67) .. (168.77,99.52) -- cycle ;
\draw  [color={rgb, 255:red, 248; green, 231; blue, 28 }  ,draw opacity=1 ][fill={rgb, 255:red, 248; green, 231; blue, 28 }  ,fill opacity=0.51 ][line width=0.75]  (229.74,170.65) .. controls (242.11,177.22) and (265.17,220.77) .. (241.51,228.22) .. controls (217.85,235.67) and (166.36,243.88) .. (159.92,216.14) .. controls (153.49,188.4) and (217.36,164.08) .. (229.74,170.65) -- cycle ;
\draw  [color={rgb, 255:red, 74; green, 144; blue, 226 }  ,draw opacity=1 ][fill={rgb, 255:red, 74; green, 144; blue, 226 }  ,fill opacity=0.48 ][line width=0.75]  (244.05,101.43) .. controls (256.11,93.37) and (273.6,111.09) .. (300.73,132.84) .. controls (327.87,154.59) and (295.31,156.2) .. (287.47,191.64) .. controls (279.63,227.08) and (266.97,193.25) .. (254.91,169.09) .. controls (242.85,144.92) and (231.99,109.48) .. (244.05,101.43) -- cycle ;
\draw  [color={rgb, 255:red, 74; green, 144; blue, 226 }  ,draw opacity=1 ][fill={rgb, 255:red, 74; green, 144; blue, 226 }  ,fill opacity=1 ] (19,81.34) -- (35.28,81.34) -- (35.28,97.45) -- (19,97.45) -- cycle ;
\draw  [color={rgb, 255:red, 208; green, 2; blue, 27 }  ,draw opacity=1 ][fill={rgb, 255:red, 208; green, 2; blue, 27 }  ,fill opacity=1 ] (19,14.49) -- (35.28,14.49) -- (35.28,30.6) -- (19,30.6) -- cycle ;
\draw  [color={rgb, 255:red, 248; green, 231; blue, 28 }  ,draw opacity=1 ][fill={rgb, 255:red, 248; green, 231; blue, 28 }  ,fill opacity=1 ] (19,47.51) -- (35.28,47.51) -- (35.28,63.62) -- (19,63.62) -- cycle ;
\draw  [color={rgb, 255:red, 74; green, 144; blue, 226 }  ,draw opacity=1 ][fill={rgb, 255:red, 155; green, 155; blue, 155 }  ,fill opacity=1 ][line width=0.75]  (102.95,153.78) .. controls (110.19,145.73) and (126.47,173.92) .. (122.25,181.17) .. controls (118.03,188.42) and (112,193.25) .. (107.78,185.2) .. controls (103.56,177.14) and (95.72,161.84) .. (102.95,153.78) -- cycle ;
\draw  [color={rgb, 255:red, 74; green, 144; blue, 226 }  ,draw opacity=1 ][fill={rgb, 255:red, 155; green, 155; blue, 155 }  ,fill opacity=1 ][line width=0.75]  (263.35,130.42) .. controls (270.58,122.37) and (286.86,150.56) .. (282.64,157.81) .. controls (278.42,165.06) and (272.39,169.89) .. (268.17,161.84) .. controls (263.95,153.78) and (256.11,138.48) .. (263.35,130.42) -- cycle ;
\draw  [color={rgb, 255:red, 208; green, 2; blue, 27 }  ,draw opacity=1 ][fill={rgb, 255:red, 155; green, 155; blue, 155 }  ,fill opacity=1 ][line width=0.75]  (187.48,118.81) .. controls (200.74,118.81) and (203.88,124.93) .. (202.07,132.98) .. controls (200.26,141.04) and (167.85,146.81) .. (167.25,141.17) .. controls (166.64,135.53) and (174.21,118.81) .. (187.48,118.81) -- cycle ;
\draw  [color={rgb, 255:red, 248; green, 231; blue, 28 }  ,draw opacity=1 ][fill={rgb, 255:red, 248; green, 231; blue, 28 }  ,fill opacity=1 ] (183.89,126.68) .. controls (185.43,125.11) and (191.83,128.89) .. (190.81,131.09) .. controls (189.78,133.3) and (180.56,136.45) .. (178.77,133.3) .. controls (176.98,130.15) and (182.36,128.26) .. (183.89,126.68) -- cycle ;
\draw  [color={rgb, 255:red, 248; green, 231; blue, 28 }  ,draw opacity=1 ][fill={rgb, 255:red, 155; green, 155; blue, 155 }  ,fill opacity=1 ][line width=0.75]  (208.22,191.55) .. controls (221.49,191.55) and (224.63,197.66) .. (222.82,205.72) .. controls (221.01,213.77) and (188.59,219.54) .. (187.99,213.9) .. controls (187.39,208.26) and (194.95,191.55) .. (208.22,191.55) -- cycle ;
\draw  [color={rgb, 255:red, 248; green, 231; blue, 28 }  ,draw opacity=1 ][fill={rgb, 255:red, 248; green, 231; blue, 28 }  ,fill opacity=1 ] (204.63,200.36) .. controls (206.17,198.79) and (212.57,202.57) .. (211.55,204.77) .. controls (210.52,206.97) and (201.31,210.12) .. (199.51,206.97) .. controls (197.72,203.83) and (203.1,201.94) .. (204.63,200.36) -- cycle ;
\draw  [color={rgb, 255:red, 74; green, 144; blue, 226 }  ,draw opacity=1 ][fill={rgb, 255:red, 74; green, 144; blue, 226 }  ,fill opacity=0.8 ][line width=1.5]  (470.98,69.06) .. controls (501.56,55.94) and (536.14,57.53) .. (566.27,64.71) .. controls (596.39,71.9) and (625.03,100.95) .. (634.02,125.08) .. controls (643.01,149.22) and (617.28,209.19) .. (573.62,233.1) .. controls (529.97,257.01) and (366.94,200.98) .. (360.51,173.24) .. controls (354.07,145.49) and (440.41,82.17) .. (470.98,69.06) -- cycle ;
\draw  [color={rgb, 255:red, 74; green, 144; blue, 226 }  ,draw opacity=1 ][fill={rgb, 255:red, 155; green, 155; blue, 155 }  ,fill opacity=1 ][line width=0.75]  (420.64,140.68) .. controls (427.88,132.62) and (444.16,160.81) .. (439.94,168.06) .. controls (435.72,175.31) and (429.69,180.14) .. (425.47,172.09) .. controls (421.25,164.03) and (413.41,148.73) .. (420.64,140.68) -- cycle ;
\draw  [color={rgb, 255:red, 74; green, 144; blue, 226 }  ,draw opacity=1 ][fill={rgb, 255:red, 155; green, 155; blue, 155 }  ,fill opacity=1 ][line width=0.75]  (581.04,117.32) .. controls (588.27,109.26) and (604.55,137.45) .. (600.33,144.7) .. controls (596.11,151.95) and (590.08,156.79) .. (585.86,148.73) .. controls (581.64,140.68) and (573.8,125.37) .. (581.04,117.32) -- cycle ;
\draw  [color={rgb, 255:red, 74; green, 144; blue, 226 }  ,draw opacity=1 ][fill={rgb, 255:red, 155; green, 155; blue, 155 }  ,fill opacity=1 ][line width=0.75]  (505.17,105.7) .. controls (518.43,105.7) and (521.57,111.82) .. (519.76,119.87) .. controls (517.95,127.93) and (485.54,133.7) .. (484.94,128.06) .. controls (484.33,122.42) and (491.9,105.7) .. (505.17,105.7) -- cycle ;
\draw  [color={rgb, 255:red, 208; green, 2; blue, 27 }  ,draw opacity=1 ][fill={rgb, 255:red, 208; green, 2; blue, 27 }  ,fill opacity=1 ] (501.58,113.58) .. controls (503.12,112) and (509.52,115.78) .. (508.5,117.98) .. controls (507.47,120.19) and (498.25,123.34) .. (496.46,120.19) .. controls (494.67,117.04) and (500.05,115.15) .. (501.58,113.58) -- cycle ;
\draw  [color={rgb, 255:red, 74; green, 144; blue, 226 }  ,draw opacity=1 ][fill={rgb, 255:red, 155; green, 155; blue, 155 }  ,fill opacity=1 ][line width=0.75]  (525.91,178.44) .. controls (539.17,178.44) and (542.31,184.55) .. (540.51,192.61) .. controls (538.7,200.66) and (506.28,206.43) .. (505.68,200.79) .. controls (505.08,195.16) and (512.64,178.44) .. (525.91,178.44) -- cycle ;
\draw  [color={rgb, 255:red, 74; green, 144; blue, 226 }  ,draw opacity=1 ][fill={rgb, 255:red, 74; green, 144; blue, 226 }  ,fill opacity=1 ] (522.32,187.25) .. controls (523.86,185.68) and (530.26,189.46) .. (529.24,191.66) .. controls (528.21,193.87) and (519,197.02) .. (517.2,193.87) .. controls (515.41,190.72) and (520.79,188.83) .. (522.32,187.25) -- cycle ;
\draw [line width=1.5]    (339.8,17.2) -- (339.8,322.2) ;

\draw (42.38,15.64) node [anchor=north west][inner sep=0.75pt]  [font=\large]  {$1$};
\draw (42.98,48.94) node [anchor=north west][inner sep=0.75pt]  [font=\large]  {$2$};
\draw (42.98,82.91) node [anchor=north west][inner sep=0.75pt]  [font=\large]  {$3$};
\draw (183,293.4) node [anchor=north west][inner sep=0.75pt]  [font=\large]  {$\sigma $};
\draw (210,86.4) node [anchor=north west][inner sep=0.75pt]  [font=\large]  {$\gamma $};
\draw (473,288.4) node [anchor=north west][inner sep=0.75pt]  [font=\large]  {$\tau _{\gamma }( \sigma )$};

\end{tikzpicture}
    \caption{Notice that the effect of $\tau_{\gamma}$ in $\sigma$ is to erase the contour $\gamma$. In this example, a spin with color yellow inside a red interior becomes red, while it becomes blue being in a yellow interior. }
\end{figure}

The last feature of the contours we will need (and a very crucial one) is the exponential growth of the numbers of contours with a given size. Define

\[ 
\mathcal{C}_y(n) \coloneqq \{\gamma: y \in V(\gamma), |\gamma|=n\}.
\]
\begin{proposition}\label{Bound_on_C_0_n}
	Let $d\ge 2$, $y\in \Z^d$ and $\Lambda\Subset \mathbb{Z}^d$. There exists $c_1\coloneqq c_1(d,M,\alpha)>0$ such that
	\begin{equation}\label{Eq: exp.bound.contours}
	|\mathcal{C}_y(n)| \leq e^{(\log q + c_1) n}, \quad \forall\, n\geq 1.
	\end{equation}
\end{proposition}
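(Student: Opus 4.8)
The plan is to split the count into a purely combinatorial (geometric) part — the number of admissible supports — and a ``coloring'' part, and then to invoke for the former the multiscale counting estimate which is the heart of \cite{Johanes}.

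\emph{Step 1: reduce to counting supports.} By Definition~\ref{def:d_contours} a contour is a pair $\gamma=(\overline{\gamma},\sigma_{\overline{\gamma}})$ with $|\gamma|=|\overline{\gamma}|$, and for a fixed support $\overline{\gamma}$ with $|\overline{\gamma}|=n$ there are at most $q^{n}$ choices of the coloring $\sigma_{\overline{\gamma}}\in(\Z_q)^{\overline{\gamma}}$. Hence
\[
|\mathcal{C}_y(n)| \;\le\; q^{n}\,N_y(n), \qquad N_y(n):=\#\{\overline{\gamma}\subset\Z^d:\ y\in V(\overline{\gamma}),\ |\overline{\gamma}|=n,\ \overline{\gamma}\text{ a block of a finest }(M,a)\text{-partition}\},
\]
so it suffices to show $N_y(n)\le e^{c_1 n}$ with $c_1=c_1(d,M,\alpha)$; the surviving factor $q^{n}$ is exactly what produces the $\log q$ in the exponent.

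\emph{Step 2: count the supports.} Write $\overline{\gamma}=C_1\cupdot\cdots\cupdot C_m$ as the union of its lattice-connected components, $|C_i|=n_i$, $\sum_i n_i=n$. I would build every such $\overline{\gamma}$ by an exploration anchored at $y$, using two geometric facts.
\begin{enumerate}[label=\textup{(\roman*)}]
\item Since $y\in V(\overline{\gamma})$, the set $\overline{\gamma}$ separates $y$ from infinity; as a connected set of $n$ sites has diameter $<n$, an elementary argument gives $\dis(y,\overline{\gamma})\le c_d n$. So a reference point $x_0\in\overline{\gamma}$ nearest to $y$ can be chosen in at most $(2c_d n)^{d}$ ways.
\item Because $\overline{\gamma}$ is a \emph{block of the finest} $(M,a)$-partition, it cannot be split: for every partition $\overline{\gamma}=A\cupdot B$ into nonempty parts one has $\dis(A,B)\le M\min\{|V(A)|,|V(B)|\}^{a/(d+1)}$ (if the reverse held, replacing the single block $\overline{\gamma}$ by $A$ and $B$ would give a strictly finer $(M,a)$-partition, because condition \eqref{B_distance_2} against all other blocks is inherited from $\overline{\gamma}$, using $V(A),V(B)\subset V(\overline{\gamma})$). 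Taking $A=C_i$ and the crude isoperimetric bound $|V(C_i)|\le c_d n_i^{\,d}$ yields $\dis(C_i,\overline{\gamma}\setminus C_i)\le M(c_d n_i^{\,d})^{a/(d+1)}=:M'n_i^{\,p}$, with $p=p(\alpha,d)$ a fixed exponent.
\end{enumerate}
Fact (ii) shows the ``proximity graph'' on the components — an edge between $C_i,C_j$ whenever $\dis(C_i,C_j)\le M'\min\{n_i,n_j\}^{p}$ — is connected; I fix a spanning tree rooted at the component containing $x_0$. Then a support is determined by: the point $x_0$ (at most $(2c_d n)^d$ choices, subexponential); the composition $(n_1,\dots,n_m)$ of $n$ (at most $2^{n-1}$ choices); the shape of each $C_i$ relative to one of its own sites (at most $e^{Cn_i}$ choices each by the standard lattice-animal bound, hence $\le e^{Cn}$ in total); and, for each non-root $C_j$ with tree-parent $C_i$, the displacement of the base point of $C_j$ from that of $C_i$, which by (ii) lies in a ball of radius $n_i+M'n_j^{\,p}$ and hence costs a factor at most $(3M'n^{p})^{d}$ — but this, accumulated over all components, is $\exp\!\big(\sum_j d\log(3M'n_j^{p}+n)\big)\le e^{C'n}$ using $\sum_j\log n_j\le n$. (The tree structure itself need not be recorded separately once components are enumerated in a canonical order, e.g.\ by decreasing size with lexicographic tie-breaking by base point.) Multiplying, $N_y(n)\le e^{c_1 n}$, whence $|\mathcal{C}_y(n)|\le q^{n}e^{c_1 n}=e^{(\log q+c_1)n}$.

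The main obstacle is Step 2, specifically turning ``$\overline{\gamma}$ is a block of the finest partition'' into the quantitative control (ii) on how far apart the connected components of one contour may sit relative to their sizes; this is exactly where the isoperimetric estimate and the calibrated choice $a=\tfrac{3(d+1)}{(\alpha-d)\wedge 1}$ enter, and it is essentially the content of the counting estimate proved in \cite{Johanes}, which one may alternatively cite directly.
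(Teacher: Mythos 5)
Your Step 1 is precisely the paper's argument: the paper bounds $|\Sp^{-1}(A)|\le q^{|A|}$ and then disposes of the support count by citing Corollary 3.29 of \cite{Johanes}, which is also your stated fallback, so up to that citation the proposal follows the same route. The issue is with the self-contained sketch you offer for Step 2, which has a genuine gap. What ``$\overline{\gamma}$ is a block of the finest $(M,a)$-partition'' gives you (your fact (ii)) is that for \emph{every} split $\overline{\gamma}=A\cupdot B$ one has $\dis(A,B)\le M\min\{|V(A)|,|V(B)|\}^{a/(d+1)}$, i.e.\ a closeness controlled by the \emph{volume of the whole group}, not by the size of any single component. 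Your proximity graph instead uses the threshold $M'\min\{n_i,n_j\}^{p}$ for individual components, and its connectivity does not follow: take two internally irreducible ``clouds'', each a dispersed family of many components of size $k$, placed at mutual distance $D$ with $M'k^{p}\ll D\le M\min\{|V(A)|,|V(B)|\}^{a/(d+1)}$ (possible because $|V(A)|$ grows with the number of components in a cloud). Every inter-cloud pair violates your edge condition, yet every split still satisfies the irreducibility inequality, so such supports are admissible while your spanning tree does not exist.

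Even where the tree exists, the bookkeeping does not close to $e^{Cn}$ as written: with a parent of size of order $n/2$ carrying order $n/2$ singleton children, your per-child displacement factor contributes $\log(n_i+M'n_j^{p})\sim\log n$ each, summing to order $n\log n$; anchoring children at the nearest site of the parent trades this for the cost of recording the anchor, which is again $\sim\log n$ per child. This single-scale encoding difficulty is exactly why the support count in \cite{Johanes} (following \cite{Affonso.2021}) is carried out by a multiscale coarse-graining over cubes of side $2^{r\ell}$ rather than a lattice-animal/tree argument. So the correct move is the one you name as the alternative — cite the counting corollary of \cite{Johanes} directly, as the paper does — while the sketched replacement for it would fail.
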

\begin{proof}
Consider the projection $\Sp: \mathcal{C}_0(n) \to \mathcal{P}_f(\mathbb{Z}^d)$ given by $(\overline{\gamma}, \sigma_{\overline{\gamma}}) \mapsto \overline{\gamma}$. Then,

\begin{equation*}
    \mathcal{C}_0(n) = \bigcupdot_{A \in \Sp(\mathcal{C}_0(n))} \Sp^{-1}(A).
\end{equation*}

Therefore, 

\begin{equation*}
  |\mathcal{C}_0(n)| = \sum_{A \in \Sp(\mathcal{C}_0(n))} |\Sp^{-1}(A)|.
\end{equation*}

Now, note that $|\Sp^{-1}(A)| \leq |\Omega_A| = q^{|A|}=e^{|A|\log{q}}$. Hence, 
\[
|\mathcal{C}_0(n)| \leq \sum_{A \in \Sp(\mathcal{C}_0(n))}e^{|A|\log{q}}  = e^{n\log{q}} | \Sp(\mathcal{C}_0(n))|.
\]

Using the Corollary $3.29$ from \cite{Johanes}, we know that $| \Sp(\mathcal{C}_0(n))|\leq e^{c_1n}$.
Therefore,
\[
|\mathcal{C}_0(n)| = |\mathcal{C}_y(n)| \leq e^{n\log{q}}. e^{c_1n}=e^{(c_1+\log{q})n}.
\]  
\end{proof}

\section{Energy Bounds}\label{energy}

In this section we are going to prove the main bounds of this work. Before that, we present two useful lemmas.  Without loss of generality, we may suppose, by the addition of a constant and a suitable redefinition of $J$, that we can rewrite the Hamiltonian as

\begin{equation*}
    H_{\Lambda, h}^\eta(\sigma) = \sum_{\substack{\{x, y\} \subset \Lambda}}J_{xy}\psi(\sigma_x - \sigma_y) + \sum_{\substack{x \in \Lambda \\ y \notin \Lambda}} J_{xy} \psi(\sigma_x - \eta_y) + \sum_{x \in \Lambda} h_x \psi(\sigma_x),
\end{equation*}
with $\psi$ such that $ 0 \leq \psi(n) \leq 1$, for any $n \in \Z_q$ and $\psi(0) = 0$. Explicitly, we can take

\begin{equation*}
    \psi(n) = \frac{\varphi(0) - \varphi(n)}{\varphi(0) - \min_{n \neq 0} \varphi(n)}.
\end{equation*}

After this redefinition, we denote by $m := \min\{\psi(n); n \neq 0\}$ the minimum excitation. Observe that $m > 0$.

\begin{lemma}\label{lema_geometrico}
For any $x, y \in \mathbb{Z}^d$ such that $x \neq y$, it holds
    \begin{equation*}
    J_{xy} \geq \frac{1}{(2d+1)2^\alpha} \sum_{x'\in B_1(x)} J_{x'y}.
\end{equation*}
\end{lemma}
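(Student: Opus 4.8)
The plan is to bound each of the $2d+1$ terms in the sum $\sum_{x' \in B_1(x)} J_{x'y}$ separately by $2^\alpha J_{xy}$ and then add them up. The only input is the triangle inequality together with the elementary fact that two distinct points of $\Z^d$ are at distance at least $1$ in the norm used to define $J_{xy}$.

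First I would recall that the $\ell_1$-unit ball $B_1(x)$ consists precisely of $x$ together with its $2d$ nearest neighbors, so $|B_1(x)| = 2d+1$ and every $x' \in B_1(x)$ satisfies $|x - x'| \le 1$. Then I fix $x' \in B_1(x)$ and distinguish two cases. If $x' = y$, then by the convention in \eqref{Long-Range Interaction} we have $J_{x'y} = 0$, so this term contributes nothing and is certainly $\le 2^\alpha J_{xy}$. If $x' \neq y$, then $x'$ and $y$ are distinct lattice points, hence $|x' - y| \ge 1$; combining this with $|x - x'| \le 1$ and the triangle inequality gives $|x - y| \le |x - x'| + |x' - y| \le 1 + |x' - y| \le 2|x' - y|$. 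Raising to the power $\alpha$ and using $J_{xy} = J/|x-y|^\alpha$, $J_{x'y} = J/|x'-y|^\alpha$, this yields $J_{x'y} \le 2^\alpha J_{xy}$.

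Finally, summing the inequality $J_{x'y} \le 2^\alpha J_{xy}$ over the $2d+1$ points $x' \in B_1(x)$ gives $\sum_{x' \in B_1(x)} J_{x'y} \le (2d+1)\,2^\alpha J_{xy}$, and dividing both sides by $(2d+1)2^\alpha$ produces exactly the asserted bound.

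There is no genuine obstacle in this lemma; it is a soft geometric estimate. The only points requiring a moment of care are the degenerate case $x' = y$ (dispatched by the convention $J_{yy} = 0$) and making sure the step $1 + |x'-y| \le 2|x'-y|$ is legitimate, which is exactly where the minimal-distance-$1$ property of distinct integer points enters; both are immediate, so the whole argument is short.
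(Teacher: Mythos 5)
Your proof is correct and follows essentially the same route as the paper's: the triangle inequality combined with $|x-x'|\le 1$ and $|x'-y|\ge 1$ for distinct lattice points to get a factor $2^\alpha$ per term, then summing over the $2d+1$ points of $B_1(x)$ (the paper phrases this as bounding the ratio $\left(|x-y|/|x'-y|\right)^\alpha$ after factoring out $J_{xy}$, which is the same estimate term by term). Your explicit handling of the degenerate case $x'=y$ matches the paper's exclusion of $y$ from the sum, so there is nothing to add.
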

\begin{proof}
Firstly, notice that we have

\begin{equation*}
    \sum_{x'\in B_1(x)} J_{x'y} = J_{xy}  \sum_{x'\in B_1(x)} \frac{J_{x'y}}{J_{xy}} = J_{xy}  \sum_{x'\in B_1(x)\backslash \{y\}} \left( \frac{|x - y|}{|x'- y|} \right)^\alpha.
\end{equation*}

Using the triangle inequality, it follows that:

\begin{equation*}
    \sum_{x'\in B_1(x)} J_{x'y} \leq J_{xy}\sum_{x'\in B_1(x)\backslash \{y\}} \left( \frac{|x - x'|}{|x'- y|} + \frac{|x' - y|}{|x'- y|} \right)^\alpha = J_{xy}\sum_{x'\in B_1(x)\backslash \{y\}} \left( \frac{1}{|x'- y|} + 1 \right)^\alpha.
\end{equation*}

Since $1/|x'- y| \leq 1$, 

\begin{equation*}
    \sum_{x'\in B_1(x)} J_{x'y} \leq J_{xy} \sum_{x'\in B_1(x)\backslash \{y\}} 2^\alpha \leq J_{xy}(2d+1)2^\alpha,
\end{equation*}
and the inequality is proven.
\end{proof}

\begin{lemma}\label{lema_pontos_incorretos}
For any contour $\gamma$, and $y \in \mathbb{Z}^d$, it holds that
\begin{equation*}
    \sum_{\substack{x \in \Sp(\gamma)\\x'\in B_1(x)\\x \neq y}} J_{x'y}\psi(\sigma_x - \sigma_y) \geq m\sum_{z \in \Sp(\gamma)} J_{zy}.
\end{equation*}
\end{lemma}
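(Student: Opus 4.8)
The plan is to exhibit, for each $z\in\Sp(\gamma)$ with $z\neq y$, one specific index $(x(z),z)$ of the sum on the left whose summand already dominates $m\,J_{zy}$, and to arrange that the assignment $z\mapsto(x(z),z)$ is injective. Since every term of the left-hand sum is nonnegative (as $J\ge 0$ and $0\le\psi\le 1$), I could then bound the full sum below by the sum over the selected indices only, and the latter is at least $m\sum_{z\in\Sp(\gamma),\,z\neq y}J_{zy}=m\sum_{z\in\Sp(\gamma)}J_{zy}$ (the case $z=y$ contributes $J_{yy}=0$).

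To define $x(z)$: if $\sigma_z\neq\sigma_y$, take $x(z):=z$, and observe that $z\in B_1(z)$ and $\psi(\sigma_z-\sigma_y)\ge m$. If $\sigma_z=\sigma_y$, I would use that $z\in\partial\sigma$ is incorrect, hence not $\sigma_z$-correct, to find a site $w\in B_1(z)$ with $\sigma_w\neq\sigma_z$; necessarily $w\neq z$ and $|w-z|=1$. The key sub-claim is that $w\in\Sp(\gamma)$. I would prove it in two steps: first, $w$ is itself incorrect, since $B_1(w)$ contains both $z$ and $w$, whose spins differ, so $w$ lies in some block of the $(M,a)$-partition of $\partial\sigma$; second, $w$ cannot lie in a block $\overline{\gamma}'\neq\Sp(\gamma)$, for then $\dis(\Sp(\gamma),\overline{\gamma}')\le|w-z|=1$, contradicting the separation property \eqref{B_distance_2}, which (using $|V(\overline{\gamma})|\ge 1$ for nonempty $\overline{\gamma}$, together with $M\ge 1$) forces distinct blocks to be at distance strictly larger than $1$. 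With the sub-claim in hand, set $x(z):=w$: then $z\in B_1(w)$ and $\psi(\sigma_w-\sigma_y)\ge m$ because $\sigma_w\neq\sigma_z=\sigma_y$.

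In either case $(x(z),z)$ is a legitimate index of the left-hand sum --- $x(z)\in\Sp(\gamma)$, $z\in B_1(x(z))$, and $x(z)\neq y$ since $\sigma_{x(z)}\neq\sigma_y$ --- with summand $J_{zy}\,\psi(\sigma_{x(z)}-\sigma_y)\ge m\,J_{zy}$, and the selected indices are pairwise distinct because the second coordinate records $z$. Discarding the remaining nonnegative terms then gives
\[
\sum_{\substack{x\in\Sp(\gamma),\,x'\in B_1(x)\\ x\neq y}} J_{x'y}\,\psi(\sigma_x-\sigma_y)\ \ge\ \sum_{\substack{z\in\Sp(\gamma)\\ z\neq y}} J_{zy}\,\psi(\sigma_{x(z)}-\sigma_y)\ \ge\ m\sum_{\substack{z\in\Sp(\gamma)\\ z\neq y}} J_{zy}\ =\ m\sum_{z\in\Sp(\gamma)} J_{zy}.
\]

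The one step that is not routine bookkeeping is the sub-claim $w\in\Sp(\gamma)$ when $\sigma_z=\sigma_y$: this is exactly where the geometric separation \eqref{B_distance_2} between distinct supports of a single configuration is essential, and it is also the reason the statement is phrased with the $B_1$-neighbourhoods on the left, since they make a ``borrowed'' disagreeing neighbour of $z$ available inside $\Sp(\gamma)$. Lemma \ref{lema_geometrico} is not needed for this pointwise estimate; it enters only afterwards, when the bound is summed against the long-range couplings.
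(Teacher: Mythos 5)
Your proposal is correct and follows essentially the same route as the paper: for each $z\in\Sp(\gamma)$ you select a disagreeing neighbour $x(z)\in\Sp(\gamma)\cap B_1(z)$ (taking $x(z)=z$ when $\sigma_z\neq\sigma_y$), bound the selected summand below by $mJ_{zy}$, and discard the remaining nonnegative terms. The only difference is presentational — you organize the final step via an injective choice of indices rather than the paper's ``enlarge the sum and drop constraints,'' and you make explicit (via the separation property \eqref{B_distance_2}) why the adjacent incorrect point must lie in $\Sp(\gamma)$, a step the paper leaves implicit.
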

\begin{proof}
Let $z$ be an element of $\Sp(\gamma)$. There exists $x = x(z)$ in $\Sp(\gamma) \cap B_1(z)$ such that $\sigma_x \neq \sigma_y$. In fact, if $\sigma_z \neq \sigma_y$, we can simply take $x(z) = z$. If $\sigma_z = \sigma_y$, since $z$ is an incorrect point, there exists $x \in B_1(z)$ such that $\sigma_x \neq \sigma_z$, so $\sigma_x \neq \sigma_y$. But $x$ will also be an incorrect point, so $x$ must be in $\Sp(\gamma)$. Remembering that $m = \min\{\psi(z); z \neq 0\}$, we conclude that, for each $z \in \Sp(\gamma)$, there exists $x(z)$ such that $J_{zy} \psi(\sigma_{x(z)} - \sigma_y) \geq m J_{zy}$. Summing over $z$,

\begin{equation*}
   m\hspace{-0.2cm} \sum_{z \in \Sp(\gamma)} J_{zy} \leq \sum_{z \in \Sp(\gamma)} J_{zy} \psi(\sigma_{x(z)} - \sigma_y).
\end{equation*}

Now, since every term is non-negative, we can get an upper bound by summing also over $x$, then

\begin{equation*}
  m \hspace{-0.2cm} \sum_{z \in \Sp(\gamma)} J_{zy} \leq  \sum_{\substack{z \in \Sp(\gamma)\\ x \in \Sp(\gamma) \\ |x - z| = 1}} J_{zy} \psi(\sigma_{x} - \sigma_y). 
\end{equation*}

Notice that the sum in the right-hand side is a sum over all ordered pairs $(x, z)$ such that $x, z \in \Sp(\gamma)$ and $|x - z| = 1$, but this is the same as summing over $x \in \Sp(\gamma)$ and then over $z \in B_1(x) \cap \Sp(\gamma)$. Again using that each term is non-negative, we can drop the last conditions and we have

\begin{equation*}
  m \hspace{-0.2cm}\sum_{z \in \Sp(\gamma)} J_{zy} \leq  \sum_{\substack{x \in \Sp(\gamma)\\ z \in B_1(x)}} J_{zy} \psi(\sigma_{x} - \sigma_y). 
\end{equation*}

\end{proof}

The following proposition, which gives us the energy of erasing a contour, will be the core of the Peierls' argument in the next section. 

\begin{proposition}\label{BE}
    There is a constant $c_2 = c_2(J, m, \alpha, d)$ such that, for any configuration $\sigma \in \Omega^q_{\Lambda}$ and $\gamma \in \Gamma^e(\sigma)$,

\begin{equation*}
    H^q_{\Lambda}(\sigma) - H^q_{\Lambda}(\tau) \geq c_2\left(|\gamma| +  F_{\Sp(\gamma)} + \sum_{n = 1}^{q-1}F_{\I_n(\gamma)} + F_{\I'(\gamma)}\right),
\end{equation*}
 where $\tau := \tau_{\gamma}(\sigma)$ and $\I'(\gamma)$ is given by Equation \eqref{interior}.
\end{proposition}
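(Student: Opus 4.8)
The plan is to compare the two energies bond by bond, starting from the observation that $\tau:=\tau_\gamma(\sigma)$ agrees with $\sigma$ outside $V(\gamma)=\Sp(\gamma)\cupdot\I(\gamma)$, so that
\[
H^q_{\Lambda}(\sigma)-H^q_{\Lambda}(\tau)=\sum_{\substack{\{x,y\}\\ \{x,y\}\cap V(\gamma)\neq\emptyset}}J_{xy}\bigl[\psi(\sigma_x-\sigma_y)-\psi(\tau_x-\tau_y)\bigr],
\]
where we use the rescaled form of the Hamiltonian introduced above, so $0\le\psi\le 1$, $\psi(0)=0$ and the minimum excitation is $m>0$. First I would sort the relevant bonds by the position of their endpoints (both in $\Sp(\gamma)$; one in $\Sp(\gamma)$ and one in $\I(\gamma)$; one in $\Sp(\gamma)$ and one outside $V(\gamma)$; inside a single interior component; between two interior components of different label; between an interior component and the exterior). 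Since $\tau_\gamma$ is the identity on $\I_q(\gamma)$ (a shift by $q\equiv 0$) and a single global shift $-n$ on each $\I_n(\gamma)$, while $\sigma$ is constant on $\partial_{ex}V(\I(\gamma)^{(k)})$ and on $\din V(\gamma)$ (both contained in $\Sp(\gamma)$), several of these groups cancel identically: bonds internal to $\I_q(\gamma)$, bonds between interior components carrying the same label, and bonds from $\I_q(\gamma)$ to $V(\gamma)^c$.

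The positive contribution is collected in the \emph{excitation energy seen near} $\Sp(\gamma)$,
\[
\Phi:=\sum_{\substack{\{x,y\}\\ \{x,y\}\cap\Sp(\gamma)\neq\emptyset}}J_{xy}\,\psi(\sigma_x-\sigma_y)\;\ge\;0,
\]
which is exactly what survives with the good sign from bonds incident to $\Sp(\gamma)$, because on bonds inside $\Sp(\gamma)$ one has $\psi(\tau_x-\tau_y)=\psi(0)=0$. I would bound $\Phi$ from below in three complementary ways, over \emph{pairwise disjoint} families of bonds. (i) Bonds inside $\Sp(\gamma)$: every $x\in\Sp(\gamma)$ is incorrect, hence has a neighbour $y\in B_1(x)$ with $\sigma_x\neq\sigma_y$, and taking $M\ge 1$ (so distinct supports are non-adjacent by condition (B)) forces $y\in\Sp(\gamma)$; this nearest-neighbour bond contributes $J_{xy}\psi(\sigma_x-\sigma_y)\ge Jm$, which yields $\Phi\ge\tfrac{Jm}{2}|\gamma|$ after accounting for double counting. (ii) and (iii) Bonds from $\Sp(\gamma)$ to $V(\gamma)^c$ and from $\Sp(\gamma)$ to $\I(\gamma)$: combining Lemma~\ref{lema_pontos_incorretos} with Lemma~\ref{lema_geometrico} gives, for every $y$,
\[
\sum_{x\in\Sp(\gamma)}J_{xy}\,\psi(\sigma_x-\sigma_y)\;\ge\;\frac{m}{(2d+1)2^{\alpha}}\sum_{x\in\Sp(\gamma)}J_{xy},
\]
so summing over $y\notin V(\gamma)$, and separately over $y\in\I(\gamma)$, recovers a fixed multiple of $F_{\Sp(\gamma)}$; exploiting in addition that each interior component is surrounded by $\Sp(\gamma)$ (property (A1), $\partial_{ex}V(\I(\gamma)^{(k)})\subseteq\Sp(\gamma)$) together with the geometric decay bound for $J$, one charges $\sum_{n=1}^{q-1}F_{\I_n(\gamma)}$ and $F_{\I'(\gamma)}$ to the excitation energy carried by $\Sp(\gamma)$ (and by any sub-contours inside $V(\gamma)$), up to error terms treated next.

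It remains to control the $-\psi(\tau_x-\tau_y)$ contributions of the surviving non-cancelling bonds that are not inside $\Sp(\gamma)$. Using $\psi\le 1$, each is bounded by $J_{xy}$, and the crux is that the total $J$-weight of the bonds on which $\psi(\tau_x-\tau_y)$ is genuinely nonzero is negligible: since $\gamma$ is external and the boundary condition is $q$, there is a monochromatic $q$-region around $V(\gamma)$, so a bond from $\Sp(\gamma)$ or from $\I'(\gamma)$ to a non-$q$ site $y\notin V(\gamma)$ must reach beyond that region; the $(M,a)$-separation then puts such $y$ at distance $\gtrsim M|\gamma|^{a/(d+1)}$ from $\Sp(\gamma)$, and with $a=\tfrac{3(d+1)}{(\alpha-d)\wedge 1}$ this makes $\sum J_{xy}$ over those bonds smaller than any prescribed fraction of $F_{\Sp(\gamma)}$ once $M$ is chosen large (as a function of $d,\alpha$); the remaining error bonds sit inside $\I(\gamma)$ near sub-contours and are absorbed by the $F_{\I_n},F_{\I'}$ already extracted, or are of the same far-away type. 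Collecting (i)--(iii), subtracting these errors, fixing $M$ large enough and then choosing $c_2=c_2(J,m,\alpha,d)$ small enough to absorb all multiplicative constants, yields the stated inequality.

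The main obstacle is the interior bookkeeping: one must keep the three lower bounds on $\Phi$ on disjoint bond families so the excitation budget is not spent twice, simultaneously charge $F_{\Sp(\gamma)}$, $\sum_n F_{\I_n(\gamma)}$ and $F_{\I'(\gamma)}$ to it via the shielding geometry, and verify that the relabelling $\tau_\gamma$ creates no uncontrolled interface energy far from $\Sp(\gamma)$ --- which is precisely where the scale-separation exponent $a/(d+1)$ and the largeness of $M$ enter.
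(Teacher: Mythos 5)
There is a genuine gap, and it sits at the heart of the statement: the origin of the terms $\sum_{n=1}^{q-1}F_{\I_n(\gamma)}$ and $F_{\I'(\gamma)}$. You propose to "charge" them to the excitation energy $\Phi$ carried by bonds incident to $\Sp(\gamma)$, invoking the shielding $\partial_{ex}V(\I(\gamma)^{(k)})\subseteq\Sp(\gamma)$. That works for nearest-neighbour interactions, where $F_{\I_n(\gamma)}$ is a pure surface term, but it fails quantitatively in the long-range regime $d<\alpha<d+1$ which is the whole point of the proposition: everything you can ever extract from bonds touching $\Sp(\gamma)$ is at most $\sum_{x\in\Sp(\gamma)}\sum_{y}J_{xy}\le Jc_\alpha|\gamma|$, whereas for a contour whose support is a thin shell of radius $R$ one has $|\gamma|\asymp R^{d-1}$ while $F_{\I_n(\gamma)}\gtrsim |\I_n(\gamma)|^{2-\alpha/d}\asymp R^{2d-\alpha}\gg R^{d-1}$. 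So no choice of constants lets you dominate $F_{\I_n(\gamma)}$ by $\Phi$. In the paper these terms come from a different source entirely: the bonds joining $\I_n(\gamma)$ to the differently-labelled regions ($\I_{n'}(\gamma)$ with $n'\ne n$, and $Q(\gamma)=\I_q(\gamma)\cup V(\gamma)^c$), which do not touch $\Sp(\gamma)$ at all. Outside the volumes of the sub-contours the spins there are literally $n$ versus $n'$ or $q$, so each such bond costs at least $mJ_{xy}$ in $\sigma$, while after applying $\tau_\gamma$ the labels are aligned and the cost vanishes; summing gives essentially $m\,F_{\I_n(\gamma)}$ (and $m\,F_{\I'(\gamma)}$) minus correction terms. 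Your proposal instead files precisely these bonds under "surviving non-cancelling bonds to be controlled as errors", so the positive bulk contribution is never produced, and the later remark that residual errors are "absorbed by the $F_{\I_n},F_{\I'}$ already extracted" is circular.

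A secondary, fixable inaccuracy: to control the interfaces that $\tau_\gamma$ could see towards other contours, you claim the $(M,a)$-separation places any non-$q$ site $y\notin V(\gamma)$ at distance $\gtrsim M|\gamma|^{a/(d+1)}$ from $\Sp(\gamma)$. Condition \textbf{(B)} only gives distance $>M\min\{|V(\overline{\gamma})|,|V(\overline{\gamma}')|\}^{a/(d+1)}$, so a small neighbouring contour may sit at distance of order $M$; a pointwise distance bound is not available. The correct mechanism is the summed estimate (Corollary 2.9 of \cite{Johanes}) bounding $\sum_{x\in\Sp(\gamma),\,y\in V(\Gamma\setminus\gamma)}J_{xy}$ (and the analogous sums for each $\I_n(\gamma)$) by $\kappa^{(2)}_\alpha F_{\Sp(\gamma)}/M^{(\alpha-d)\wedge 1}$, which becomes a small fraction of the already extracted $F$-terms once $M$ is large. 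Your parts (i)--(iii) for $|\gamma|$ and $F_{\Sp(\gamma)}$ do match the paper's treatment of its term (I), but without the interior-versus-exterior interface mechanism the stated inequality cannot be reached along the route you describe.
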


\begin{proof}

In first place, let's investigate how to write the Hamiltonian in terms of the contours. To do this, we will write the Hamiltonian in terms of the function $\psi$. Given subsets $A, B \Subset \mathbb{Z}^d$ and some configuration $\sigma \in \Omega_\Lambda^q$, we define

\begin{equation*}
    \psi(A, B)[\sigma] = \sum_{\substack{x \in A\\y \in B}} J_{xy}\psi(\sigma_x - \sigma_y)
\end{equation*}
and
\begin{equation*}
    \psi(A)[\sigma] = \frac{1}{2}\psi(A, A)[\sigma] = \sum_{\{x, y\} \subset A} J_{xy} \psi(\sigma_x - \sigma_y).
\end{equation*}

Then, for any partition $\Lambda = \bigcup_{k=1}^n \Lambda_k$ of $\Lambda$, the Hamiltonian decomposes as

\begin{equation*}
    H^q_{\Lambda}(\sigma) = \sum_{k = 1}^n \psi(\Lambda_k)[\sigma] + \sum_{\{i, j\}} \psi(\Lambda_i, \Lambda_j)[\sigma] + \sum_{k = 1}^n \psi(\Lambda_k, \Lambda^c)[\sigma].
\end{equation*}

We are interested in finding a lower bound for $H^q_{\Lambda}(\sigma) - H^q_{\Lambda}(\tau)$ depending only on $\gamma$. In order to do so, we are going to start by partitioning $\Lambda$ into $\Sp(\gamma) \cup \bigcup_{n = 1}^q \I_n(\gamma) \cup \Lambda\backslash V(\gamma)$. 

The previous remark gives us

\begin{align*}
    H^q_{\Lambda}(\sigma) &= \psi(\Sp(\gamma))[\sigma] + \sum_{n=1}^q \psi(\I_n(\gamma))[\sigma] + \psi(\Lambda\backslash V(\gamma))[\sigma] + \sum_{n=1}^q \psi(\Sp(\gamma), \I_n(\gamma))[\sigma] \\
    &+ \psi(\Sp(\gamma), \Lambda \backslash V(\gamma))[\sigma] + \sum_{n=1}^q \psi(\I_n(\gamma), \Lambda\backslash V(\gamma))[\sigma] + \sum_{n \neq n'} \psi(\I_n(\gamma), \I_{n'}(\gamma))[\sigma] \\
    &+ \psi(\Sp(\gamma), \Lambda^c)[\sigma] + \sum_{n = 1}^q \psi(\I_n(\gamma), \Lambda^c)[\sigma] + \psi(\Lambda\backslash V(\gamma), \Lambda^c)[\sigma],
\end{align*}

where $n \neq n'$ indicates a summation over unordered pairs $\{n, n'\}$ of distinct elements of $\{1, ..., q\}$.

Now, since we are interested in the difference of the Hamiltonians, let's define $\Delta(A, B)$ as $\psi(A, B)[\sigma] - \psi(A, B)[\tau]$ and $\Delta(A) = \Delta(A, A)/2$. Since the $\tau$ map leaves $\Lambda^c, \Lambda\backslash V(\gamma)$ and $\I_q(\gamma)$ invariant we know that any term which only depends on these regions will be cancelled out. In a less obvious fashion, notice that the $\tau$ map acts on each $\I_n$ as a translation and, since $\psi$ only depends on the difference between spins, $\psi(\sigma_x - \sigma_y) = \psi(\tau_x - \tau_y)$ whenever $x, y \in \I_n(\gamma)$. Thus, $\Delta(\I_n(\gamma)) =0 $. We are then left with

\begin{align*}
    H^q_{\Lambda}(\sigma) - H^q_\Lambda(\tau) &= \psi(\Sp(\gamma))[\sigma] + \sum_{n=1}^q \psi(\Sp(\gamma), \I_n(\gamma))[\sigma] + \psi(\Sp(\gamma), V(\gamma)^c)[\sigma]\\
    &- \sum_{n=1}^q \psi(\Sp(\gamma), \I_n(\gamma))[\tau] - \psi(\Sp(\gamma), V(\gamma)^c)[\tau]\\
    &+ \sum_{n=1}^{q-1} \Delta(\I_n(\gamma), \Lambda\backslash V(\gamma)) + \sum_{n=1}^{q-1} \Delta(\I_n(\gamma), \Lambda^c) + \sum_{n=1}^{q-1} \Delta(\I_n(\gamma), \I_q(\gamma))\\
    &+ \sum_{\{n, n'\} \subset \{1, ..., q-1\}} \Delta(\I_n(\gamma), \I_{n'}(\gamma)).
\end{align*}

We can consider the union $Q(\gamma) = \I_q(\gamma) \cup V(\gamma)^c$ and we rewrite the difference as

\begin{equation*}
    H^q_{\Lambda}(\sigma) - H^q_\Lambda(\tau) = \text{(I)} + \text{(II)} + \text{(III)},
\end{equation*}
where
\begin{align*}
   \text{(I)} &= \psi(\Sp(\gamma))[\sigma] + \sum_{n=1}^q \psi(\Sp(\gamma), \I_n(\gamma))[\sigma] + \psi(\Sp(\gamma), V(\gamma)^c)[\sigma]\\
    \text{(II)} &= - \sum_{n=1}^q \psi(\Sp(\gamma), \I_n(\gamma))[\tau] - \psi(\Sp(\gamma), V(\gamma)^c)[\tau]\\
    \text{(III)} &=  \sum_{n=1}^{q-1} \Delta(\I_n(\gamma), Q(\gamma)) + \sum_{\{n, n'\} \subset \{1, ..., q-1\}} \Delta(\I_n(\gamma), \I_{n'}(\gamma)).
\end{align*}

Now, we will bound \text{(I)}, \text{(II)} e \text{(III)}. The first line is equal to

\begin{equation}\label{(I)}
    \frac{1}{2} \sum_{\substack{x \in \Sp(\gamma)\\y \in \mathbb{Z}^d}} J_{xy}\psi(\sigma_x - \sigma_y) + \frac{1}{2} \sum_{\substack{x \in \Sp(\gamma)\\y \in \Sp(\gamma)^c}} J_{xy}\psi(\sigma_x - \sigma_y),
\end{equation}
so we face the task to provide a lower bound for the expression above. Clearly, many terms above will be zero --- always that we have a pair of equal spins. However, we can use the fact that the contour is composed of incorrect points to see that, given a pair $\{x, y\}$ of sites with the same spin and $x \in \Sp(\gamma)$, there exists a $x' \in B_1(x)$ such that $\{x', y\}$ is a pair of sites with different spins. Hence, it will be useful to consider averages of interactions across balls.

Now, using the previous inequality and the Lemma \ref{lema_pontos_incorretos}, we have  
\begin{align*}
\text{(I)} &= \frac{1}{2} \sum\limits_{y \in \mathbb{Z}^d} \sum\limits_{x \in \Sp(\gamma)} J_{xy} \psi(\sigma_x - \sigma_y)+ \frac{1}{2} \sum\limits_{y \in \Sp(\gamma)^c} \sum\limits_{x \in \Sp(\gamma)} J_{xy} \psi(\sigma_x - \sigma_y)\\[0.2cm]
&\geq \frac{1}{2} \sum\limits_{y \in \mathbb{Z}^d} \frac{1}{(2d+1)2^\alpha}\sum_{\substack{x \in \Sp(\gamma)\\x'\in B_1(x)\\x \neq y}} J_{x'y}\psi(\sigma_x - \sigma_y) + \frac{1}{2} \sum\limits_{y \in \Sp(\gamma)^c} \frac{1}{(2d+1)2^\alpha}\sum_{\substack{x \in \Sp(\gamma)\\x'\in B_1(x)\\x \neq y}} J_{x'y}\psi(\sigma_x - \sigma_y)\\[0.2cm]
&\geq \frac{1}{2} \sum\limits_{y \in \mathbb{Z}^d} \frac{m}{(2d+1)2^{\alpha}} \sum\limits_{z \in \Sp(\gamma)} J_{zy} + \frac{1}{2} \sum\limits_{y \in \Sp(\gamma)^c} \frac{m}{(2d+1)2^{\alpha}} \sum\limits_{z \in \Sp(\gamma)} J_{zy}.
\end{align*}
Thus
\begin{equation*}
\text{(I)} \geq \frac{m}{(2d+1)2^{\alpha + 1}}\left(\sum_{z\in \Sp(\gamma)}\sum_{y \in \mathbb{Z}^d} J_{zy} + \sum_{\substack{z \in \Sp(\gamma)\\ y \in \Sp(\gamma)^c}}J_{zy}\right)   \geq \frac{m}{(2d+1)2^{\alpha + 1}}\left(J c_\alpha |\gamma| + F_{\Sp(\gamma)}\right),
\end{equation*}
where $c_\alpha := \sum_{y \neq 0} |y|^{-\alpha}$.

For the second term, we have
\begin{align*}
\text{(II)} &= - \sum\limits_{n=1}^q \sum\limits_{\substack{x \in \Sp(\gamma) \\ y \in \I_n(\gamma)}} J_{xy} \psi(\tau_x - \tau_y)
- \sum\limits_{\substack{x \in \Sp(\gamma) \\ y \in V(\gamma)^c}} J_{xy} \psi(\tau_x - \tau_y)\\[0.2cm]
&= - \sum\limits_{\substack{x \in \Sp(\gamma) \\ y \in I(\gamma)}} J_{xy} \psi(\tau_x - \tau_y)
- \sum\limits_{\substack{x \in \Sp(\gamma) \\ y \in V(\gamma)^c}} J_{xy} \psi(\tau_x - \tau_y)\\[0.2cm]
&= - \sum_{\substack{x \in \Sp(\gamma)\\y \in \Sp(\gamma)^c}} J_{xy} \psi(\tau_x - \tau_y) = - \sum_{\substack{x \in \Sp(\gamma)\\y \in \Sp(\gamma)^c}} J_{xy} \psi(q - \tau_y),
\end{align*}
where we used the definition of the $\tau$ map.

Now, putting $\Gamma = \Gamma(\sigma)$, it's not difficult to see that $\tau_y \neq q$ implies that $y \in V(\Gamma \backslash \gamma)$. Thus, the summation is zero for any $y \notin V(\Gamma \backslash \gamma )$. This observation, together with Corollary 2.9 of \cite{Johanes}, gives us
\begin{equation}
  \sum\limits_{\substack{x \in \Sp(\gamma) \\ y \in \Sp(\gamma)^c}} J_{xy} \psi(\tau_x - \tau_y) \leq 
\sum_{\substack{x \in \Sp(\gamma)\\y \in V(\Gamma \backslash \gamma )}} J_{xy} \leq \kappa^{(2)}_\alpha \frac{F_{\Sp(\gamma)}}{M^{(\alpha - d) \land 1}}, 
\end{equation}
where 
\begin{equation*}
\kappa^{(2)}_\alpha := (1 + J^{-1})\left[\frac{J2^{d-1+\alpha}e^{d-1}}{(\alpha - d)} + 3\zeta\left(\frac{a}{d+1} - 1\right)\right].
\end{equation*}
Hence,
\begin{align*}
\text{(II)} &= - \sum\limits_{\substack{x \in sp(\gamma) \\ y \in sp(\gamma)^c}} J_{xy} \psi(\tau_x - \tau_y) \geq - \sum\limits_{\substack{x \in sp(\gamma) \\ y \in V(\Gamma\backslash\gamma)}} J_{xy} \geq - \kappa_{\alpha}^{(2)} \frac{F_{\Sp(\gamma)}}{M^{(\alpha - d)\land 1} }\\[0.2cm]
\end{align*}

As for the third term,
\begin{align*}
    \text{(III)} &= \frac{1}{2}\sum_{n = 1}^{q-1}\sum_{\substack{n'= 1\\n'\neq n}}^{q-1} \Delta(\I_n(\gamma), \I_{n'}(\gamma)) + \sum_{n =1}^{q-1} \Delta(\I_n(\gamma), Q(\gamma))\\
    &= \frac{1}{2}\sum_{n = 1}^{q-1} \Delta(\I_n(\gamma), (I_n(\gamma) \cup \Sp(\gamma))^c) + \frac{1}{2}\sum_{n =1}^{q-1} \Delta(\I_n(\gamma), Q(\gamma))\\[0.3cm]
    &= \frac{1}{2} \sum_{n = 1}^{q-1} A_n(\gamma) + \frac{1}{2} B(\gamma),
\end{align*}

where

\begin{equation*}
   A_n(\gamma) = \sum_{\substack{x \in \I_n(\gamma)\\y \notin I_n(\gamma) \cup \Sp(\gamma)}} J_{xy} \left(\psi(\sigma_x - \sigma_y) - \psi(\tau_x - \tau_y)\right), 
\end{equation*}
and
\begin{equation*}
    B(\gamma) = \sum_{\substack{x \in \I'(\gamma)\\y \in Q(\gamma)}} J_{xy} \left(\psi(\sigma_x - \sigma_y) - \psi(\tau_x - \tau_y)\right).
\end{equation*}

Fixed some $n$, in order to bound $A_n(\gamma)$ we use $\Gamma'$ to denote the set of contours inside $\I_n(\gamma)$ and $\Gamma''$ to denote the set of contours outside $\I_n(\gamma)$ (except for $\gamma$). Outside of the volumes of $\Gamma'$ and $\Gamma''$, the spins are controllable, that is,
\begin{equation*}
\sigma_y = 
\begin{dcases}
n &\text{ if } y \in \I_n(\gamma) \backslash V(\Gamma'), \\
n' &\text{ if } y \in \I_{n'}(\gamma)\backslash V(\Gamma''),\\
q &\text{ if } y \in Q(\gamma)\backslash V(\Gamma'').
\end{dcases}
\end{equation*}

\begin{figure}
    \centering
 
\tikzset{
pattern size/.store in=\mcSize, 
pattern size = 5pt,
pattern thickness/.store in=\mcThickness, 
pattern thickness = 0.3pt,
pattern radius/.store in=\mcRadius, 
pattern radius = 1pt}
\makeatletter
\pgfutil@ifundefined{pgf@pattern@name@_kbgqzonu9}{
\pgfdeclarepatternformonly[\mcThickness,\mcSize]{_kbgqzonu9}
{\pgfqpoint{0pt}{0pt}}
{\pgfpoint{\mcSize}{\mcSize}}
{\pgfpoint{\mcSize}{\mcSize}}
{
\pgfsetcolor{\tikz@pattern@color}
\pgfsetlinewidth{\mcThickness}
\pgfpathmoveto{\pgfqpoint{0pt}{\mcSize}}
\pgfpathlineto{\pgfpoint{\mcSize+\mcThickness}{-\mcThickness}}
\pgfpathmoveto{\pgfqpoint{0pt}{0pt}}
\pgfpathlineto{\pgfpoint{\mcSize+\mcThickness}{\mcSize+\mcThickness}}
\pgfusepath{stroke}
}}
\makeatother

 
\tikzset{
pattern size/.store in=\mcSize, 
pattern size = 5pt,
pattern thickness/.store in=\mcThickness, 
pattern thickness = 0.3pt,
pattern radius/.store in=\mcRadius, 
pattern radius = 1pt}
\makeatletter
\pgfutil@ifundefined{pgf@pattern@name@_54f78mom5}{
\pgfdeclarepatternformonly[\mcThickness,\mcSize]{_54f78mom5}
{\pgfqpoint{0pt}{0pt}}
{\pgfpoint{\mcSize}{\mcSize}}
{\pgfpoint{\mcSize}{\mcSize}}
{
\pgfsetcolor{\tikz@pattern@color}
\pgfsetlinewidth{\mcThickness}
\pgfpathmoveto{\pgfqpoint{0pt}{\mcSize}}
\pgfpathlineto{\pgfpoint{\mcSize+\mcThickness}{-\mcThickness}}
\pgfpathmoveto{\pgfqpoint{0pt}{0pt}}
\pgfpathlineto{\pgfpoint{\mcSize+\mcThickness}{\mcSize+\mcThickness}}
\pgfusepath{stroke}
}}
\makeatother
\tikzset{every picture/.style={line width=0.75pt}} 

\begin{tikzpicture}[x=0.75pt,y=0.75pt,yscale=-1,xscale=1]

\draw  [fill={rgb, 255:red, 192; green, 192; blue, 192 }  ,fill opacity=1 ] (241.5,119.5) .. controls (261.5,109.5) and (275.5,90.5) .. (441.5,95.5) .. controls (607.5,100.5) and (547.5,123.5) .. (557.5,162.5) .. controls (567.5,201.5) and (492.5,233.5) .. (420.5,254.5) .. controls (348.5,275.5) and (185.5,282.5) .. (165.5,252.5) .. controls (145.5,222.5) and (221.5,129.5) .. (241.5,119.5) -- cycle ;
\draw  [fill={rgb, 255:red, 255; green, 255; blue, 255 }  ,fill opacity=1 ][dash pattern={on 4.5pt off 4.5pt}] (250,136) .. controls (270,126) and (290.66,117.43) .. (330,112) .. controls (332.47,111.66) and (335.14,111.34) .. (337.97,111.04) .. controls (380.28,106.53) and (459.52,106.39) .. (464,112) .. controls (468.78,117.98) and (433.72,146.69) .. (394,159) .. controls (354.28,171.31) and (304.48,177.55) .. (284,176) .. controls (263.52,174.45) and (230,146) .. (250,136) -- cycle ;
\draw  [fill={rgb, 255:red, 255; green, 255; blue, 255 }  ,fill opacity=1 ] (214.5,183.5) .. controls (238.5,135.5) and (242.5,225.5) .. (256.5,214.5) .. controls (270.5,203.5) and (334.7,185.92) .. (342.5,193.5) .. controls (350.3,201.08) and (322.5,234.5) .. (289.5,247.5) .. controls (256.5,260.5) and (237.58,251.19) .. (219.5,248.5) .. controls (201.42,245.81) and (190.5,231.5) .. (214.5,183.5) -- cycle ;
\draw  [fill={rgb, 255:red, 192; green, 192; blue, 192 }  ,fill opacity=1 ] (572.5,144.5) .. controls (598.5,129.5) and (637.5,171.5) .. (611.5,194.5) .. controls (585.5,217.5) and (462.5,277.5) .. (538.5,220.5) .. controls (614.5,163.5) and (546.5,159.5) .. (572.5,144.5) -- cycle ;
\draw  [fill={rgb, 255:red, 255; green, 255; blue, 255 }  ,fill opacity=1 ] (417.5,169.5) .. controls (429.95,163.27) and (527.5,149.5) .. (507.5,169.5) .. controls (487.5,189.5) and (478.53,221.48) .. (451,233.5) .. controls (423.47,245.52) and (415.86,243.58) .. (411,244.5) .. controls (406.14,245.42) and (395.17,233.15) .. (390.5,223.5) .. controls (385.83,213.85) and (405.05,175.73) .. (417.5,169.5) -- cycle ;
\draw  [pattern=_kbgqzonu9,pattern size=6pt,pattern thickness=0.75pt,pattern radius=0pt, pattern color={rgb, 255:red, 0; green, 0; blue, 0}] (310,132) .. controls (316,129) and (338.5,127) .. (328.5,141) .. controls (318.5,155) and (301,155) .. (296,146) .. controls (291,137) and (304,135) .. (310,132) -- cycle ;
\draw  [pattern=_54f78mom5,pattern size=6pt,pattern thickness=0.75pt,pattern radius=0pt, pattern color={rgb, 255:red, 0; green, 0; blue, 0}] (363,142) .. controls (360.5,136.5) and (387,120) .. (392,124.5) .. controls (397,129) and (378.5,149) .. (375,151) .. controls (371.5,153) and (365.5,147.5) .. (363,142) -- cycle ;
\draw  [color={rgb, 255:red, 0; green, 0; blue, 0 }  ,draw opacity=1 ][fill={rgb, 255:red, 0; green, 0; blue, 0 }  ,fill opacity=1 ] (426.5,209.5) .. controls (425,202.5) and (435.5,192) .. (447,196) .. controls (458.5,200) and (450.5,213.5) .. (444,215.5) .. controls (437.5,217.5) and (428,216.5) .. (426.5,209.5) -- cycle ;
\draw  [fill={rgb, 255:red, 0; green, 0; blue, 0 }  ,fill opacity=1 ] (279,233.5) .. controls (271,239.5) and (239,238) .. (244,228) .. controls (249,218) and (254,232.5) .. (264.5,231.5) .. controls (275,230.5) and (287,227.5) .. (279,233.5) -- cycle ;
\draw  [fill={rgb, 255:red, 0; green, 0; blue, 0 }  ,fill opacity=1 ] (54,33.5) .. controls (61.5,30.5) and (89,12.5) .. (105,15.5) .. controls (121,18.5) and (121,27.5) .. (103,42) .. controls (85,56.5) and (43,83.5) .. (38.5,73.5) .. controls (34,63.5) and (46.5,36.5) .. (54,33.5) -- cycle ;
\draw  [fill={rgb, 255:red, 0; green, 0; blue, 0 }  ,fill opacity=1 ] (124,30) .. controls (131.5,25) and (137,33) .. (136,40) .. controls (135,47) and (129,47.5) .. (124.5,40.5) .. controls (120,33.5) and (116.5,35) .. (124,30) -- cycle ;

\draw (567,103.9) node [anchor=north west][inner sep=0.75pt]  [font=\Large]  {$\gamma $};
\draw (337.5,132.4) node [anchor=north west][inner sep=0.75pt]    {$\Gamma '$};
\draw (453,180.4) node [anchor=north west][inner sep=0.75pt]  [font=\small]  {$\Gamma ''$};
\draw (287,219.4) node [anchor=north west][inner sep=0.75pt]  [font=\small]  {$\Gamma ''$};
\draw (99.5,54.9) node [anchor=north west][inner sep=0.75pt]  [font=\small]  {$\Gamma ''$};

\end{tikzpicture}
    \caption{The grouping of contours performed to bound $A_n(\gamma)$. The contour $\gamma$ is painted gray. The interior $\I_n(\gamma)$ is highlighted with a dashed line, and the family of contours inside it, denoted by $\Gamma'$, is filled with a checkered background. In contrast, the family of contours $\Gamma''$, outside $\I_n(\gamma)$ is filled with a solid black color.}
    \label{contour}
\end{figure}

This motivates us to split $A_n(\gamma)$ in terms of this sets. Explicitly,
\begin{align*}
A_{n}(\gamma) &= \sum\limits_{\substack{x \in \I_n(\gamma) \\ y \notin \I_n(\gamma) \cup \Sp(\gamma)}} \hspace{-0.5cm} J_{xy} \left( \psi(\sigma_x - \sigma_y) - \psi(\tau_x - \tau_y) \right)\\[0.2cm]
&= \sum_{\substack{x \in \I_n(\gamma)\\y \in V(\Gamma'')}} J_{xy} \left(\psi(\sigma_x - \sigma_y) - \psi(\sigma_x - \sigma_y)\right) \hspace{0.5cm} + \hspace{-0.7cm} \sum_{\substack{x \in \I_n(\gamma)\\y \notin \I_n(\gamma) \cup \Sp(\gamma) \cup V(\Gamma'')}} \hspace{-1cm}J_{xy} \left(\psi(\sigma_x - \sigma_y) - \psi(\sigma_x - \sigma_y)\right)\\[0.4cm]
&= \sum_{\substack{x \in \I_n(\gamma)\\y \in V(\Gamma'')}} J_{xy} \left(\psi(\sigma_x - \sigma_y) - \psi(\sigma_x - \sigma_y)\right) \hspace{0.5cm} + \hspace{-0.7cm}\sum_{\substack{x \in V(\Gamma')\\y \notin \I_n(\gamma) \cup \Sp(\gamma) \cup V(\Gamma'')}} \hspace{-1cm} J_{xy} \left(\psi(\sigma_x - \sigma_y) - \psi(\sigma_x - \sigma_y)\right)\\[0.5cm]
&+ \hspace{-0.7cm}\sum_{\substack{x \in \I_n(\gamma)\backslash V(\Gamma')\\y \notin \I_n(\gamma) \cup \Sp(\gamma) \cup V(\Gamma'')}}\hspace{-0.8cm} J_{xy} \left(\psi(\sigma_x - \sigma_y) - \psi(\sigma_x - \sigma_y)\right).
\end{align*}

In the first two summations we will use $\psi(\sigma_x - \sigma_y) - \psi(\tau_x - \tau_y) \geq -1$. In the last one, we know that $\psi(\sigma_x - \sigma_y) - \psi(\tau_x - \tau_y) = \psi(n - n') - 0 \geq m$. Thus,

\begin{equation}\label{an_parcial}
    A_n(\gamma) \hspace{0.3cm} \geq \hspace{-0.5cm}\sum_{\substack{x \in \I_n(\gamma)\backslash V(\Gamma')\\y \notin I_n(\gamma) \cup \Sp(\gamma) \cup V(\Gamma'')}} \hspace{-1cm}mJ_{xy} \hspace{0.3cm} -\sum_{\substack{x \in \I_n(\gamma)\\y \in V(\Gamma'')}}\hspace{-0.3cm} J_{xy} \hspace{0.3cm} - \hspace{-0.7cm}\sum_{\substack{x \in V(\Gamma')\\y \notin I_n(\gamma) \cup \Sp(\gamma) \cup V(\Gamma'')}}\hspace{-1cm} J_{xy}. 
\end{equation}

Now, notice that
\begin{align*}
F_{\I_n(\gamma)} = \sum\limits_{\substack{x \in \I_n(\gamma) \\ y \in \I_n(\gamma)^c}} J_{xy} \hspace{0.2cm} &= \hspace{-0.9cm} \sum\limits_{\substack{x \in \I_n(\gamma) \\ y \notin \I_n(\gamma) \cup \Sp(\gamma) \cup V(\Gamma'')}}\hspace{-1cm} J_{xy}\hspace{0.3cm}
+ \sum\limits_{\substack{x \in \I_n(\gamma) \\ y \in V(\Gamma'')}} \hspace{-0.2cm}J_{xy}\hspace{0.2cm}
+ \sum\limits_{\substack{x \in \I_n(\gamma) \\ y \in \Sp(\gamma)}}\hspace{-0.2cm} J_{xy}\\[0.4cm]
&= \hspace{-0.9cm}\sum\limits_{\substack{x \in \I_n(\gamma)\backslash V(\Gamma') \\ y \notin \I_n(\gamma) \cup \Sp(\gamma) \cup V(\Gamma'')}} \hspace{-1cm}J_{xy}\hspace{0.6cm} + \hspace{-0.5cm} \sum\limits_{\substack{x \in V(\Gamma') \\ y \notin \I_n(\gamma) \cup \Sp(\gamma) \cup V(\Gamma'')}} \hspace{-1cm}J_{xy}\hspace{0.3cm}
+ \sum\limits_{\substack{x \in \I_n(\gamma) \\ y \in V(\Gamma'')}} J_{xy}\hspace{0.2cm}
+ \sum\limits_{\substack{x \in \I_n(\gamma) \\ y \in \Sp(\gamma)}} J_{xy}.
\end{align*}

Rearranging, we are left with
\begin{equation*}
\sum_{\substack{x \in \I_n(\gamma)\backslash V(\Gamma')\\y \notin \I_n(\gamma) \cup \Sp(\gamma) \cup V(\Gamma'')}} \hspace{-1cm}J_{xy}\hspace{0.3cm} = \hspace{0.3cm} F_{\I_n(\gamma)} \hspace{0.2cm} - \hspace{-0.1cm} \sum_{\substack{x \in \I_n(\gamma)\\y \in  V(\Gamma'')}} \hspace{-0.2cm}J_{xy}\hspace{0.4cm} -  \hspace{-0.7cm}\sum_{\substack{x \in  V(\Gamma')\\y \notin \I_n(\gamma) \cup \Sp(\gamma) \cup V(\Gamma'')}}\hspace{-1cm} J_{xy} \hspace{0.3cm} - \sum_{\substack{x \in \I_n(\gamma)\\y \in \Sp(\gamma)}} J_{xy}\\[0.2cm]
\end{equation*}

Now, substituting the last expression in Equation \eqref{an_parcial},
\begin{align*}
A_n(\gamma) &\geq  m\left( F_{\I_n(\gamma)} - \sum\limits_{\substack{x \in \I_n(\gamma) \\ y \in \Sp(\gamma)}} J_{xy} \right) - (1+m)\left(
\sum\limits_{\substack{x \in \I_n(\gamma) \\ y \in V(\Gamma'')}} J_{xy} + \sum\limits_{\substack{x \in V(\Gamma') \\ y \notin \I_n(\gamma) \cup \Sp(\gamma) \cup V(\Gamma'')}} J_{xy}\right)\\
&\geq \frac{m}{(2d+1)2^{\alpha + 2}} \left( F_{\I_n(\gamma)} - \sum\limits_{\substack{x \in \I_n(\gamma) \\ y \in \Sp(\gamma)}} J_{xy} \right) - (1+m) \left(
\sum\limits_{\substack{x \in \I_n(\gamma) \\ y \in V(\Gamma'')}} J_{xy} + \sum\limits_{\substack{x \in V(\Gamma') \\ y \notin \I_n(\gamma) \cup \Sp(\gamma) \cup V(\Gamma'')}} J_{xy}\right)
\end{align*}

Again using Corollary 2.9 from \cite{Johanes},

\begin{align*}
    A_n(\gamma) &\geq \frac{m}{(2d+1)2^{\alpha + 2}}\left(F_{\I_n(\gamma)} - \sum_{\substack{x \in \I_n(\gamma)\\y \in \Sp(\gamma)}} J_{xy} \right) - (1+m) \left(\kappa^{(2)}_{\alpha} \frac{F_{\I_n(\gamma)}}{M^{(\alpha - d)\land 1}}  + \kappa^{(2)}_{\alpha} \frac{F_{\I_n(\gamma)}}{M} \right) \\
    &\geq \frac{m}{(2d+1)2^{\alpha + 2}}\left(F_{\I_n(\gamma)} - \sum_{\substack{x \in \I_n(\gamma)\\y \in \Sp(\gamma)}} J_{xy} \right) - 2(1+m) \kappa^{(2)}_{\alpha} \frac{F_{\I_n(\gamma)}}{M^{(\alpha - d)\land 1}}  \\
    &\geq \left( \frac{m}{(2d+1)2^{\alpha + 2}} - \frac{4\kappa^{(2)}_{\alpha}}{M^{(\alpha - d)\land 1}}\right)F_{\I_n(\gamma)} -  \frac{m}{(2d+1)2^{\alpha + 2}}\sum_{\substack{x \in \I_n(\gamma)\\y \in\Sp(\gamma)}} J_{xy}.
\end{align*}

Then,

\begin{align*}
    \sum\limits_{n=1}^{q-1} A_n(\gamma) &\geq \left(\frac{m}{(2d+1)2^{\alpha + 2}} - \frac{4\kappa^{(2)}_{\alpha}}{M^{(\alpha - d)\land 1}}\right) \sum\limits_{n=1}^{q-1} F_{\I_n(\gamma)} -  \frac{m}{(2d+1)2^{\alpha + 2}}\sum\limits_{n=1}^{q-1} \sum_{\substack{x \in \I_n(\gamma)\\y \in \Sp(\gamma)}} J_{xy}\\
    &= \left(\frac{m}{(2d+1)2^{\alpha + 2}} - \frac{4\kappa^{(2)}_{\alpha}}{M^{(\alpha - d)\land 1}}\right) \sum\limits_{m=1}^{q-1} F_{\I_n(\gamma)} -  \frac{m}{(2d+1)2^{\alpha + 2}} \sum_{\substack{x \in  \bigcupdot_{n = 1}^{q-1}\I_n(\gamma)\\y \in \Sp(\gamma)}} J_{xy}\\
    &\geq \left(\frac{m}{(2d+1)2^{\alpha + 2}} - \frac{4\kappa^{(2)}_{\alpha}}{M^{(\alpha - d)\land 1}}\right) \sum\limits_{m=1}^{q-1} F_{\I_n(\gamma)} -  \frac{m}{(2d+1)2^{\alpha + 2}} F_{\Sp(\gamma)}.
\end{align*}

The bound for $B(\gamma)$ is completely analogous, yielding

\begin{align*}
B(\gamma) \geq \left( \frac{m}{(2d+1)2^{\alpha + 2}} - \frac{4\kappa^{(2)}_{\alpha}}{M^{(\alpha - d)\land 1}}\right)F_{\I'(\gamma)} -  \frac{m}{(2d+1)2^{\alpha + 2}}F_{\Sp(\gamma)}.
\end{align*}
Finally, we are left with
\begin{align*}
\text{(III)} &= \frac{1}{2} \sum\limits_{m=1}^{q-1} A_n(\gamma) + \frac{1}{2} B(\gamma)\\[0.2cm]
& \geq \frac{1}{2} \left( \frac{m}{(2d+1)2^{\alpha+2}} - \frac{4\kappa_\alpha^{(2)}}{M^{(\alpha - d) \land 1}} \right) \sum\limits_{n=1}^{q-1} F_{\substack{\I_n(\gamma)}}
- \frac{m}{(2d+1)2^{\alpha+3}}
F_{\Sp(\gamma)} \\[0.2cm]
&+ \frac{1}{2} \left( \frac{m}{(2d+1)2^{\alpha+2}} - \frac{4\kappa_\alpha^{(2)}}{M^{(\alpha-d)\land 1}} \right) F_{\I'(\gamma)} - \frac{m}{(2d+1)2^{\alpha+3}} F_{\Sp(\gamma)}\\[0.2cm]
&\geq \left( \frac{m}{(2d+1)2^{\alpha + 3}} - \frac{2\kappa^{(2)}_{\alpha}}{M^{(\alpha - d)\land 1}}\right)\left(\sum_{n = 1}^{q-1}F_{\I_n(\gamma)} + F_{\I'(\gamma)}\right)-  \frac{m}{(2d+1)2^{\alpha + 2}}F_{\Sp(\gamma)}.
\end{align*}
Since
$$
H^q_{\Lambda}(\sigma) - H^q_{\Lambda}(\tau) = \text{(I)} + \text{(II)} + \text{(III)},
$$
we obtain that
\begin{align*}
H^q_{\Lambda}(\sigma) - H^q_{\Lambda}(\tau) &\geq \frac{m}{(2d+1)2^{\alpha + 1}}\left(J c_\alpha |\gamma| + F_{\Sp(\gamma)}\right) - \kappa_\alpha^{(2)} \frac{F_{\Sp(\gamma)}}{M^{(\alpha-d) \land 1}}\\[0.2cm]
&+ \left( \frac{m}{(2d+1)2^{\alpha+3}} - \frac{2 \kappa_{\alpha}^{(2)}}{M^{(\alpha-d) \Lambda 1}} \right) \left( \sum\limits_{n=1}^{q-1} F_{\I_n(\gamma)} + F_{\I'(\gamma)} \right) - \frac{m}{(2d+1)2^{\alpha+2}} F_{\Sp(\gamma)}\\[0.2cm]
&\geq \frac{Jmc_{\alpha}}{(2d + 1) 2^{\alpha + 1}} |\gamma| + \left( \frac{m}{(2d+1)2^{\alpha+1}} - \frac{m}{(2d+1)2^{\alpha+2}} - \frac{ \kappa_{\alpha}^{(2)}}{M^{(\alpha-d) \land 1}} \right) F_{\Sp(\gamma)}\\[0.2cm]
&+ \left( \frac{m}{(2d+1)2^{\alpha+3}} - \frac{2\kappa_{\alpha}^{(2)}}{M^{(\alpha-d) \land 1}} \right) \left( \sum\limits_{n=1}^{q-1} F_{\I_n(\gamma)} + F_{\I'(\gamma)} \right).
\end{align*}
Thus, we conclude that
\begin{align*}
H^q_{\Lambda}(\sigma) - H^q_{\Lambda}(\tau) &\geq \frac{Jmc_{\alpha}}{(2d+1)2^{\alpha + 1}}|\gamma| + \left(\frac{m}{(2d+1)2^{\alpha + 2}} - \frac{\kappa^{(2)}_\alpha}{M^{(\alpha - d)\land 1}} \right) F_{\Sp(\gamma)}\\[0.1cm]
&+ \left( \frac{m}{(2d+1)2^{\alpha + 3}} - \frac{2\kappa^{(2)}_{\alpha}}{M^{(\alpha - d)\land 1}}\right)\left(\sum_{n = 1}^{q-1}F_{\I_n(\gamma)} + F_{\I'(\gamma)}\right).
\end{align*}
Taking $M^{(\alpha-d)\wedge 1}> 2^{\alpha+5}(2d+1)\kappa_\alpha^{(2)}m^{-1}$ and $ c_2=\frac{m}{(2d+1)2^{\alpha+1}}\min\left\{Jc_\alpha, \frac{1}{8} \right\}$, the result of the demonstration follows.

\end{proof}

\begin{remark}
Although, for the phase transition result, the only relevant term in the upper bound is the one containing the support of the contour, we emphasize that this refined version could be significant when further details are required. For instance, the term $F_{\Sp(\gamma)}$ is crucial for obtaining the correct exponent for surface order large deviations in long-range ferromagnetic Ising spin systems \cite{large_deviations}. In the case of $q$-state spin systems, an additional term $F_{I'(\gamma)}$ appears, which is absent in the $q = 2$ case, and its potential impacts are yet unclear.
\end{remark}

\section{Phase Transition}\label{pt}
In this section we prove Theorem \ref{thm:phase_transition}, that is, the long-range Potts model with zero field undergoes a phase transition at low temperature. More precisely, we are going to prove that, for any $r, \ell \in \{1,\ldots,q\}$, if $r \neq \ell$, then the thermodynamic limits, $\mu^r_{\beta}$ and $\mu^\ell_{\beta}$, are also different when $\beta$ is large enough.

\begin{proof}[Proof of Theorem \ref{thm:phase_transition}]

The proof of Equation \eqref{eq_thm1} is the standard Peierls' argument. In fact, if $\sigma_{\Lambda^c}=r$ and $\sigma_0\neq r$, then there must exist a contour $\gamma$ with $0\in V(\gamma)$. Then,
$$
\mu_{\Lambda,\beta}^r(\sigma_0\neq r)\leq \mu^r_{\Lambda,\beta}\left(\left\{\sigma \in \Omega^r_\Lambda;\, \exists\, \gamma \in  \Gamma^e(\sigma),\,0\in V(\gamma)\right\}\right)\leq \sum_{\gamma;\, 0\in V(\gamma)}\mu_{\Lambda,\beta}^r\left(\left\{\sigma \in \Omega^r_\Lambda; \gamma\in\Gamma^e(\sigma)\right\}\right).
$$
Let  $\Omega(\gamma)=\{\sigma \in \Omega^r_\Lambda;\, \gamma\in \Gamma^e(\sigma)\}$. Using Proposition \ref{BE}, we have $\mu^r_{\Lambda,\beta}(\Omega(\gamma)) \leq e^{-\beta c_2|\gamma|}$.
Then,
\begin{align*}
\mu_{\Lambda,\beta}^r(\sigma_0\neq r)&\leq \sum_{\gamma;\, 0\in V(\gamma)} {e}^{-\beta c_2|\gamma|} = \sum_{n\geq 1} {e}^{-\beta c_2n} |\{\gamma;\, 0\in V(\gamma),\, |\gamma|=n\}|.
\end{align*}
By Proposition \ref{Bound_on_C_0_n}, 
$$
\mu_{\Lambda,\beta}^r(\sigma_0\neq r)\leq \sum_{n\geq 1}{e}^{-\beta c_2n}.e^{\left(c_1+\log q\right)n}\\[0.2cm]
= \sum_{n\geq 1}{e}^{-\left(\beta c_2 -c_1-\log{q} \right)n}\\[0.2cm]
= \frac{{e}^{-\left(\beta c_2 -c_1-\log{q} \right)}}{1-{e}^{-\left(\beta c_2 -c_1-\log{q} \right)}}.
$$
Then,
$$
\mu_{\Lambda,\beta}^r(\sigma_0 =  r)\leq 1 - \frac{{e}^{-\left(\beta c_2 -c_1-\log{q} \right)}}{1-{e}^{-\left(\beta c_2 -c_1-\log{q} \right)}},
$$
which goes to $1$ when $\beta \to \infty$.
\end{proof}
\begin{proof}[Proof of Corollary \ref{corol_phase_transition}]
By the proposition above, there is $\beta_0$ such that
\begin{equation}\label{phase_transition}
\mu_{\Lambda,\beta}^r(\sigma_0\neq r)<\frac{1}{4}, \forall \beta > \beta_0, \forall r \in \Z_q.
\end{equation}

Since $\mu_{\Lambda,\beta}^r(\sigma_0=r)+ \mu_{\Lambda,\beta}^r(\sigma_0\neq r)=1$, we have that $\mu_{\Lambda,\beta}^r(\sigma_0= r)>\frac{3}{4}$.

By Equation \eqref{phase_transition} and taking the thermodynamic limit (which exists by Corollary \ref{thermo_limit}), $\mu^\ell_\beta(\sigma_0 = r) \leq \mu^\ell_\beta(\sigma_0 \neq \ell) \leq 1/4$, while $\mu^r_\beta(\sigma_0 = r) \geq 3/4$.
\end{proof}

\begin{remark}
    We can take  $\beta_0 = (c_1+\ln{5}+\log{q})/c_2$.
\end{remark}

\section{Applications: deterministic and random perturbations.}\label{application}

As an example of the robustness of our methods for proving phase transition, this section will present the occurrence of phase transition for the Potts model in the presence of a random or decaying field as an application. 

\subsection{Decaying field}

The Hamiltonian of the Potts model with a general external field can be written as follows.

\begin{equation}
    H^q_{\Lambda, h}(\sigma) = -\sum_{\{x, y\} \subset \Lambda} J_{xy}\mathbbm{1}_{\{\sigma_x = \sigma_y\}} - \sum_{\substack{x \in \Lambda \\ y \in \Lambda^c}} J_{xy}\mathbbm{1}_{\{\sigma_x = q\}} - \sum_{x \in \Lambda} h_{x, \sigma_x},
\end{equation}
where $h = (h_{x, n})_{\substack{x \in \Z^d\\ n \in \Z_q}}$ is a family of non-negative real numbers. 

\begin{proof}[Proof of Theorem \ref{thm:potts_decaying_long}]
    Let $\sigma \in \Omega^q_{\Lambda}$ be any configuration and $\gamma \in \Gamma^{e}(\sigma)$. Define as before $\tau : = \tau_{\gamma}(\sigma)$. Using Proposition \ref{BE}, we have
    \begin{align*}
        H^q_{\Lambda, h}(\sigma) - H^q_{\Lambda, h}(\tau) &= H^q_{\Lambda}(\sigma) - H^q_{\Lambda}(\tau)  - \left(\sum_{x \in \Sp(\gamma) \cup \I'(\gamma)} h_{x, \sigma_x} - h_{x, \tau_x} \right) \\
        &\geq c_2\left(|\gamma| +  F_{\Sp(\gamma)} + \sum_{n = 1}^{q-1}F_{\I_n(\gamma)}\right) - \sum_{x \in \Sp(\gamma) \cup \I'(\gamma)} h_{x, \sigma_x} \\
        &=  \left(c_2 |\gamma| - \sum_{x \in \Sp(\gamma)} h_{x, \sigma_x} \right) + \sum_{n = 1}^{q-1}\left( c_2 F_{\I_n(\gamma)} - \sum_{x \in \I_n(\gamma)} h_{x, \sigma_x} \right).
    \end{align*}

Proceeding similarly to \cite{Affonso.2021}, we refer to the Theorem 7.33 of \cite{georgii.gibbs.measures}, which allows us to replace the original field by a truncated one given by
\begin{equation*}
    \widehat{h}_{x, n} = 
    \begin{cases}
        h_{x, n} &\text{ if } |x| \geq R,\\[0.2cm]
        0 &\text{ if } |x| < R,
    \end{cases}
\end{equation*}
where $R$ will be chosen later, without compromising the existence (or not) of the phase transition. Notice that, by Equation \eqref{field_hypothesis},

\begin{equation}\label{bound_noh}
    \sum_{x \in \Lambda} \widehat{h}_{x, n} \leq \frac{h^{\ast}|\Lambda|}{R^\delta},
\end{equation}
so that $R^\delta > 2h^{\ast}/c_2$ gives us 

\begin{equation*}
    c_2 |\gamma| - \sum_{x \in \Sp(\gamma)} h_{x, \sigma_x} \geq \frac{c_2}{2}|\gamma|.
\end{equation*}

Now, using again Equation \eqref{field_hypothesis}, the only remaining thing to be shown is that, for any finite subset $\Lambda \Subset \Z^d$, 

\begin{equation}\label{inequality_decaying}
    c_2 F_{\Lambda} - \sum_{x \in \Lambda} \widehat{h}_{x, n} \geq 0.
\end{equation}

This analysis was already performed in Proposition 4.7 from \cite{Affonso.2021}, and is guaranteed for $\delta > (\alpha - d) \wedge 1$ or $\delta = (\alpha - d) \wedge 1$ if $h^\ast$ is large enough.

\end{proof}

Although in this paper we have been mainly concerned with the long-range case, the methods developed here are also useful for the short-range one. Notice that the nearest-neighbor Potts model consists of the interactions given by \eqref{Long-Range Interaction} when $\alpha \to +\infty$, so it is natural to expect that the Theorem above also holds in this case for $\delta > 1$. The proof is very similar to the long-range case and the sketch of the proof is presented below.

\begin{proof}[Proof of Corollary \ref{thm:Potts_decaying_short}]
    The proof starts by following the same lines as the proof of Theorem \ref{thm:potts_decaying_long}. The difference is that, in the short-range case, we have $F_{\Lambda} = J|\partial\Lambda|$. A quick computation can shows us that we still have
    
    \begin{equation*}
         H^q_{\Lambda}(\sigma) - H^q_{\Lambda}(\tau) \geq c_2' \left(|\gamma| + \sum_{n = 1}^{q - 1}|\partial \I_n(\gamma)| \right),
    \end{equation*}
    for some constant $c_2'$. The unique inequality left to be proven is, thus,

\begin{equation*}
   c_2' |\partial \I_n(\gamma)| - \sum_{x \in \I_n(\gamma)} \widehat{h}_x \geq 0.
\end{equation*}

Now, notice that in the case $\alpha > d+1$, the analysis done for \eqref{inequality_decaying} in \cite{Affonso.2021} uses that $F_{\Lambda} \geq K|\partial \Lambda|$ for some constant $K > 0$, so the computation performed is exactly the same.

\end{proof}
\subsection{Random field}

The strategy will be to follow the idea presented in \cite{Cassandro.Picco.09, Ding2023}, where both the spins and the field are flipped in the Peierls argument. For such, we will introduce the joint distribution\footnote{Notice that the superscript $q$ in $\mathbb{R}^q$ refers to the Cartesian product, while the superscript $q$ in $\mathbb{Q}$ and $g$ reminds of the number of states and boundary condition.} $\mathbb{Q}^q$ on $\Omega \times (\mathbb{R}^q)^\Lambda$ defined, for any $\mathcal{A}\subset\Omega$ measurable and any Borel set $\mathcal{B}\subset (\mathbb{R}^q)^{\Lambda}$, as 

\begin{equation*}
    \mathbb{Q}_{\Lambda; \beta, \varepsilon}^q(\sigma \in \mathcal{A}, h_\Lambda\in \mathcal{B}) \coloneqq \sum_{\sigma \in \mathcal{A}}\int_{\mathcal{B}}  g_{\Lambda; \beta, \varepsilon}^q(\sigma, h_\Lambda) d h_\Lambda,
\end{equation*}
 where $dh_\Lambda$ is the product Lebesgue measure and, as before, $h_\Lambda=\{h_{x,n}\}_{x\in \Lambda, n\in \Z_q}$. The density being integrated is
\begin{equation*}
    g_{\Lambda; \beta, \varepsilon}^q(\sigma,h) \coloneqq \left[\prod_{x\in\Lambda}\frac{1}{2\pi^{\frac{q}{2}}}e^{-\frac{1}{2}\langle h_x, h_x \rangle}\right] \times \mu_{\Lambda;\beta, \varepsilon h}^q(\sigma),
\end{equation*}
where $\langle h_x, h_x \rangle = h_{x,1}^2+\dots+h_{x,q}^2$. The ideas of \cite{Ding2023} were successfully adapted to the long-range Ising model in \cite{Johanes}. In the case of the Potts model, however, we cannot proceed exactly as \cite{Johanes} since flipping the sign of the field does not erase it from the energy estimate. Instead, we will need to permute the field colors inside the interiors. In order to present the strategy in a nice way, we will need to introduce some concepts.

Firstly, notice that there is a bijection between our configuration space $\Omega$ and the set $G$ of all ordered partitions of $\mathbb{Z}^d$ containing $q$ elements given by $\sigma \mapsto A = (\sigma^{-1}(\{1\}), ..., \sigma^{-1}(\{q\}))$. Also, if we introduce in $\Omega$ an operation given by the sum in each coordinate, $(\sigma \cdot \omega)_x = \sigma_x + \omega_x$, it is not difficult to see that, in $G$, this operation must be defined by a kind of convolution so that, for any pair $A, B \in G$, the $n$-th element of the ordered partition $(A \ast B)$ is given by

\begin{equation*}
    (A \ast B)_n = \bigcup_{t \in \mathbb{Z}_q} A_t \cap B_{n-t},
\end{equation*}
in order for $(\Omega, \cdot)$ to be isomorphic to $(G, \ast)$. Now, given any set $M$, we can define a function $\theta: \Omega \times M^{\mathbb{Z}_q \times \mathbb{Z}^d} \to M^{\mathbb{Z}_q \times \mathbb{Z}^d}$ by $(\theta(\sigma, h))_{x, r} = h_{x, r + \sigma(x)}$. It is not difficult to see that $\theta$ is a group action. In what follows, we will take $M = \mathbb{R}$ and consider the induced action of $G$ instead. Given $A \in G$ and $h \in \mathbb{R}^{\mathbb{Z}_q \times \mathbb{Z}^d}$, the image of the action will be denoted by $\theta_A(h)$. When $A = (\I_1(\gamma), \dots, \I_{q-1}(\gamma), (\I'(\gamma))^c)$, we write $\theta_\gamma \coloneqq \theta_A$. With these definitions and
Proposition \ref{BE}, we have the following.
    \begin{align*}
        H_{\Lambda, \varepsilon h}^q(\sigma) - H_{\Lambda, \varepsilon \theta_{\gamma}(h)}^q(\tau_{\gamma}(\sigma))  &= H^q_{\Lambda}(\sigma) - H^q_{\Lambda}(\tau)  - \varepsilon\hspace{-0.6cm}\sum_{x \in \Sp(\gamma) \cup \I'(\gamma)} \left( h_{x, \sigma_x} - \theta_\gamma(h)_{x, \tau_x} \right) \\
        &\geq c_2\left(|\gamma| +  F_{\Sp(\gamma)} + \sum_{n = 1}^{q-1}F_{\I_n(\gamma)}\right) - \varepsilon\hspace{-0.2cm}\sum_{x \in \Sp(\gamma)} (h_{x, \sigma_x}-h_{x,q}).
    \end{align*}
    Thus, 
\begin{align}\label{Eq: quotient.of.gs}
    \frac{g_{\Lambda; \beta, \varepsilon}^q(\sigma, h)}{g_{\Lambda; \beta, \varepsilon}^q(\tau_{\gamma}(\sigma),\theta_{\gamma}(h))} 
    &\leq \exp{- \beta c_2 |\gamma| +\beta \varepsilon\sum_{x\in \Sp(\gamma)} (h_{x,\sigma_x}-h_{x,q})}\frac{Z_{\Lambda; \beta, \varepsilon}^q(\theta_{\gamma}(h))}{Z_{\Lambda; \beta, \varepsilon}^q(h)}.
\end{align}
 Define
\begin{equation}
\Delta_A(h) \coloneqq -\frac{1}{\beta}\log{\frac{Z_{\Lambda; \beta, \varepsilon}^q(h)}{Z_{\Lambda; \beta, \varepsilon}^q(\theta_A(h))}}.
\end{equation}

Similarly as before, when $A = (\I_1(\gamma), \dots, \I_{q-1}(\gamma), (\I'(\gamma))^c)$, we put $\Delta_\gamma(h):= \Delta_A(h)$, so we can write
\begin{align}\label{Eq: quotient.of.gs2}
    \frac{g_{\Lambda; \beta, \varepsilon}^q(\sigma, h)}{g_{\Lambda; \beta, \varepsilon}^q(\tau_{\gamma}(\sigma),\theta_{\gamma}(h))} 
    &\leq \exp{- \beta c_2 |\gamma| + \beta\Delta_\gamma(h) +\beta \varepsilon\sum_{x\in \Sp(\gamma)} (h_{x,\sigma_x}-h_{x,q})}.
\end{align}

The problem now is to estimate the probability of the two terms in the argument, which compete with $|\gamma|$, being too large. More precisely, we define two bad events, as

\begin{equation*}
    \mathcal{E}_0^c \coloneqq \left\{ \sup_{ \gamma\in\mathcal{C}_0} \frac{\left|\varepsilon\sum_{x\in \Sp(\gamma)}(h_{x, \sigma_x} - h_{x, q})\right|}{c_2|\gamma|} > \frac{1}{4}\right\}
\end{equation*}
and
$$\mathcal{E}_1^c\coloneqq \left\{\sup_{\substack{\gamma\in\mathcal{C}_0}} \frac{\Delta_{\gamma}(h)}{c_2|\gamma|} > \frac{1}{4}\right\}.$$

To do this, we first need an analogous to Lemma 4.1 of \cite{Ding2023}. We will denote by $\mathbb P$ the probability measure with respect to $\{h_{x,r}\}$ and by $\mathbb E$ the respective expectation.

\begin{lemma}\label{Lemma: Concentration.for.Delta.General}
    For any $A, A'\in G$ and $\lambda>0$, we have 
\begin{equation}\label{Eq: Tail.of.Delta_A}
    \mathbb{P}\left(|\Delta_A(h)| \geq \lambda \Big\vert h_{A_q}\right) \leq 2e^{\frac{-\lambda^2}{2q\varepsilon^2|A_q^c|}},
\end{equation}
\begin{equation}\label{Eq: Tail.of.the.diff.of.Deltas}
     \mathbb{P}\left(|\Delta_A(h) - \Delta_{A'}(h)|>\lambda \Big\vert h_{A_q \cap A'_q}\right) \leq  2e^{-\frac{{\lambda^2}}{{2q.d(A,A')^2}}},
\end{equation}
where $d(A,A') = \varepsilon\sum_{n=1}^{q-1}|A_n \Delta A'_n|^{1/2}$. Also, for any $\Lambda \Subset \mathbb{Z}^d$,
\begin{equation*}
    \mathbb{P}\left(\left|\varepsilon\sum_{x\in \Lambda}(h_{x, \sigma_x} - h_{x, q})\right| \geq \lambda \Big\vert h_{\Lambda^c}\right) \leq 2e^{\frac{-\lambda^2}{2\varepsilon^2|\Lambda|}}
\end{equation*}

\end{lemma}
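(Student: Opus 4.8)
The plan is to deduce all three estimates from the Gaussian concentration inequality: for a standard Gaussian vector $X$ and an $L$-Lipschitz (in the Euclidean norm) function $F$, $\mathbb{P}(|F(X)-\mathbb{E}F(X)|\ge\lambda)\le 2e^{-\lambda^2/(2L^2)}$, the conditional versions following by applying this to the conditional law of the ``free'' coordinates, which is again a standard Gaussian. The only analytic input is the formula $\partial_{h_{x,n}}\log Z^q_{\Lambda;\beta,\varepsilon}(h)=\beta\varepsilon\,\mu^q_{\Lambda;\beta,\varepsilon h}(\sigma_x=n)$, whence $\sum_{n}(\partial_{h_{x,n}}\log Z^q_{\Lambda;\beta,\varepsilon}(h))^2\le\beta^2\varepsilon^2$ at each site $x$, since $\sum_n\mu(\sigma_x=n)=1$.

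For \eqref{Eq: Tail.of.Delta_A}: since $\theta_A$ acts on $h$ by a site-wise cyclic permutation of the color index and is the identity on the block $A_q$, conditionally on $h_{A_q}$ the quantity $\Delta_A(h)=-\tfrac1\beta\big(\log Z^q_{\Lambda;\beta,\varepsilon}(h)-\log Z^q_{\Lambda;\beta,\varepsilon}(\theta_A h)\big)$ is a function of $h_{A_q^c}$ alone. Differentiating, and using $(p-p')^2\le p^2+p'^2$ for probabilities together with the bound above, the squared gradient of $\Delta_A$ in the free block is at most $2\varepsilon^2|A_q^c|$, i.e.\ $\Delta_A$ is $\sqrt2\,\varepsilon|A_q^c|^{1/2}$-Lipschitz there. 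To see $\mathbb{E}[\Delta_A(h)\mid h_{A_q}]=0$, let $k\mid q$ be the order of $\theta_A$: the identity $\sum_{j=0}^{k-1}\Delta_A(\theta_A^j h)=-\tfrac1\beta\big(\log Z^q_{\Lambda;\beta,\varepsilon}(h)-\log Z^q_{\Lambda;\beta,\varepsilon}(\theta_A^k h)\big)=0$ holds for every $h$, and since $\theta_A$ preserves the conditional Gaussian law, taking conditional expectations gives $k\,\mathbb{E}[\Delta_A\mid h_{A_q}]=0$. Gaussian concentration then produces $\mathbb{P}(|\Delta_A(h)|\ge\lambda\mid h_{A_q})\le 2e^{-\lambda^2/(4\varepsilon^2|A_q^c|)}$, which is stronger than the stated bound because $2q\ge4$.

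For \eqref{Eq: Tail.of.the.diff.of.Deltas}: write $\Delta_A(h)-\Delta_{A'}(h)=\tfrac1\beta\big(\log Z^q_{\Lambda;\beta,\varepsilon}(\theta_A h)-\log Z^q_{\Lambda;\beta,\varepsilon}(\theta_{A'}h)\big)$, and observe that this equals $-\Delta_C(\theta_A h)$, where $C=A'\ast A^{-1}$ corresponds to the configuration $x\mapsto a'(x)-a(x)$ (with $a,a'$ the label functions of $A,A'$), so that $D\coloneqq C_q^{\,c}=\{x:a(x)\ne a'(x)\}$ is exactly where $\theta_A h$ and $\theta_{A'}h$ differ. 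Conditionally on $h_{D^c}$ the difference is therefore a function of $h_D$ only, with Lipschitz constant $\le\sqrt2\,\varepsilon|D|^{1/2}$ by the same computation, and its conditional mean vanishes by the cyclic-averaging argument applied to $\theta_C$, which acts only on the $D$-block. Gaussian concentration gives $\mathbb{P}(|\Delta_A(h)-\Delta_{A'}(h)|\ge\lambda\mid h_{D^c})\le 2e^{-\lambda^2/(4\varepsilon^2|D|)}$; since $A_q\cap A'_q\subset D^c$, the same bound survives after averaging over the remaining conditioned coordinates, i.e.\ conditionally on $h_{A_q\cap A'_q}$; finally $|D|\le\sum_{n=1}^{q-1}|A_n\,\triangle\,A'_n|\le\big(\sum_{n=1}^{q-1}|A_n\,\triangle\,A'_n|^{1/2}\big)^2=\varepsilon^{-2}d(A,A')^2$ converts this into the claimed inequality.

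The last display is the easiest: $\varepsilon\sum_{x\in\Lambda}(h_{x,\sigma_x}-h_{x,q})$ is a linear functional of the i.i.d.\ Gaussians $\{h_{x,n}\}_{x\in\Lambda}$, hence a centered Gaussian independent of $h_{\Lambda^c}$, with variance $\varepsilon^2\sum_{x\in\Lambda}\mathrm{Var}(h_{x,\sigma_x}-h_{x,q})\le 2\varepsilon^2|\Lambda|$ (the $x$-term is $0$ if $\sigma_x=q$ and $2$ otherwise), and the one-dimensional Gaussian tail bound finishes it. The step I expect to be the real obstacle is \eqref{Eq: Tail.of.the.diff.of.Deltas}: one must \emph{not} work directly under the conditioning $h_{A_q\cap A'_q}$ in the statement --- there $\Delta_A-\Delta_{A'}$ is not a Lipschitz function of a small number of Gaussian coordinates --- but instead condition on the larger block $h_{D^c}$ (the field outside the disagreement set), establish the Lipschitz and mean-zero properties there, and only then relax the conditioning; the identification of $D$ with $C_q^{\,c}$, which makes the cyclic-averaging trick available, is the key algebraic observation. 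Everything else is bookkeeping with the $G$-action $\theta$ and the differentiation formula for $\log Z$.
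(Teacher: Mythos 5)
Your proposal is correct and follows essentially the same route as the paper's proof: a gradient bound plus Gaussian concentration for Lipschitz functions (after the conditional mean-zero observation) for \eqref{Eq: Tail.of.Delta_A}, a reduction of $\Delta_A-\Delta_{A'}$ to a single $\Delta$ associated with the quotient group element, whose non-$q$ block is contained in $\bigcup_{n=1}^{q-1}A_n\Delta A'_n$, for \eqref{Eq: Tail.of.the.diff.of.Deltas}, and a direct Gaussian tail for the linear term. The only deviations are cosmetic refinements: your per-site squared-gradient bound $2\varepsilon^2$ (versus the paper's $q\varepsilon^2$), the telescoping argument for the conditional mean zero in place of the paper's direct permutation-invariance remark, and your exact identity $\Delta_A(h)-\Delta_{A'}(h)=-\Delta_C(\theta_A h)$ with conditioning on the disagreement block later relaxed by the tower property, where the paper instead uses the distributional equality $(\theta_A(h),\theta_{A'}(h))\stackrel{d}{=}(\theta_{A\ast (A')^{-1}}(h),h)$.
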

 \begin{proof}
Since the distribution of the variables $h_{x,n}$ are permutation invariant, we get that $\mathbb{E}(\Delta_A|h_{A_q}) = 0.$ Moreover, for any $h_{x,r}$, for $x\in A_q^c$ and $r\in \Z_q$, we get
\[
\left|\frac{\partial}{\partial h_{x,r}}\Delta_A(h)\right| = \varepsilon \left|\mu^q_{\beta,\Lambda,\varepsilon \theta_A(h)}\left(\sigma_x = r - \mathfrak{n}_A(x)\right)-\mu_{\beta,\Lambda,\varepsilon h}^q(\sigma_x = r)\right|\leq \varepsilon,
\]
where $\mathfrak{n}_A = \sum_{n \in \mathbb{Z}_q}n\mathbbm{1}_{\{x \in A_n\}}$. Then, $\|\nabla \Delta_A(h)\|^2_2 \leq q\varepsilon^2 |A_q^c|$. This bound together with the Gaussian concentration inequality due to Talagrand and Ledoux (See pages 10-12 of \cite{Ledoux.Talagrand.91}) implies \eqref{Eq: Tail.of.Delta_A}.

For the second estimate, since $\theta_A^q(h) = h$, we have that $(\theta_A(h),\theta_{A'}(h))$ is equal to $(\theta_{A}\circ\theta_{A'}^{q-1}(h), h)$ in distribution. By the fact that $\theta$ is an action, we have that $\theta_{A}\circ\theta_{A'}^{-1} = \theta_B$, where $B = A \ast(A')^{-1}$. This implies that $\Delta_A(h)-\Delta_{A'}(h)$ is equal in distribution to $\Delta_B(h)$, so arguments similar to those before yields
\[
\mathbb{P}\left(|\Delta_A(h) - \Delta_{A'}(h)|>\lambda\Big\vert h_{A_q \cap A'_q}\right) \leq 2e^{\frac{-\lambda^2}{2q\varepsilon^2 |B_q^c|}}.
\]
The proof of the second inequality is concluded by noticing that $B_q^c \subset \bigcup_{1\leq n\leq q-1} A_n \Delta A'_n$, which can be found using the explicit expression

\begin{equation*}
B_n = \bigcup_{t \in \mathbb{Z}_q} A_t \cap A'_{t- n},    
\end{equation*}

hence $\displaystyle |B_q^c|\leq \sum_{n=1}^{q-1}|A_n\Delta A'_n|\leq \left(\sum_{n=1}^{q-1}|A_n\Delta A'_n|^{1/2}\right)^2$.

The third inequality follows directly by the famous tail estimate
\begin{equation*}
\mathbb{P}(|X| > \lambda) \leq 2 e^{-\lambda^2/2\sigma^2},
\end{equation*}
for a random variable $X \sim \mathcal{N}(0, \sigma^2)$, and noticing that $\varepsilon\sum_{x\in \Lambda}(h_{x, \sigma_x} - h_{x, q}) \sim \mathcal{N}(0, 2\varepsilon^2|\Lambda|)$.


\end{proof}

The following proposition deals with the first bad event.

\begin{proposition}\label{bound_E0}
    For $\epsilon > 0$ small enough, there exists $C_0 = C_0(\alpha, d)$ such that $\mathbb{P}(\mathcal{E}_0^c)\leq e^{-\frac{C_0}{\varepsilon^2}}$.
\end{proposition}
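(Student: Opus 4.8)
The plan is to run a union bound over all contours surrounding the origin, pitting the exponential growth estimate of Proposition~\ref{Bound_on_C_0_n} against the Gaussian tail bound in the third inequality of Lemma~\ref{Lemma: Concentration.for.Delta.General}. The key observation is that a contour $\gamma=(\overline{\gamma},\sigma_{\overline{\gamma}})$ already carries the spin values $\sigma_x$ for $x\in\Sp(\gamma)$, so for a \emph{fixed} $\gamma$ the quantity $\varepsilon\sum_{x\in\Sp(\gamma)}(h_{x,\sigma_x}-h_{x,q})$ is a centered Gaussian whose variance is $2\varepsilon^2\,\#\{x\in\Sp(\gamma):\sigma_x\neq q\}\le 2\varepsilon^2|\gamma|$ (each site with $\sigma_x=q$ contributes nothing, each other site contributes an independent $\mathcal N(0,2)$). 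In particular the supremum defining $\mathcal{E}_0^c$ is over contours only, so no extra configurational union bound is needed.

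First I would apply the third estimate of Lemma~\ref{Lemma: Concentration.for.Delta.General} with $\lambda=c_2|\gamma|/4$, so that for each fixed $\gamma\in\mathcal{C}_0$,
\[
\mathbb{P}\!\left(\frac{\bigl|\varepsilon\sum_{x\in\Sp(\gamma)}(h_{x,\sigma_x}-h_{x,q})\bigr|}{c_2|\gamma|}>\tfrac14\right)\le 2\exp\!\left(-\frac{c_2^2\,|\gamma|}{64\,\varepsilon^2}\right).
\]
Then I would sum over $\gamma\in\mathcal{C}_0$, grouping by size $n=|\gamma|$ and invoking $|\mathcal{C}_0(n)|\le e^{(\log q+c_1)n}$ from Proposition~\ref{Bound_on_C_0_n}:
\[
\mathbb{P}(\mathcal{E}_0^c)\le\sum_{n\ge 1}|\mathcal{C}_0(n)|\cdot 2\exp\!\left(-\frac{c_2^2 n}{64\varepsilon^2}\right)\le 2\sum_{n\ge 1}\exp\!\left(-n\Bigl(\frac{c_2^2}{64\varepsilon^2}-c_1-\log q\Bigr)\right).
\]
Whenever $\varepsilon$ is small enough that $\frac{c_2^2}{64\varepsilon^2}-c_1-\log q\ge\frac{c_2^2}{128\varepsilon^2}$, the series is geometric and bounded by $4\exp\!\left(-c_2^2/(128\varepsilon^2)\right)$; absorbing the factor $4$ into the exponent (again for $\varepsilon$ small) gives $\mathbb{P}(\mathcal{E}_0^c)\le \exp(-C_0/\varepsilon^2)$, e.g.\ with $C_0=c_2^2/256$ (any constant of this form works).

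I expect the only genuine obstacle to be bookkeeping the threshold on $\varepsilon$: the entropy of contours around the origin grows like $e^{(\log q+c_1)n}$ in the size, so the concentration bound, which decays like $e^{-\Omega(n/\varepsilon^2)}$, only wins once $\varepsilon$ lies below a level fixed by $c_1$, $\log q$, and $c_2$. Since $c_2=c_2(J,m,\alpha,d)$ and $c_1=c_1(d,M,\alpha)$ with $J$, $m$, $M$ fixed by the model, the resulting $C_0$ depends only on $\alpha$ and $d$, as claimed.
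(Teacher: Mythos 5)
Your proposal is correct and follows essentially the same route as the paper: a union bound over $\gamma\in\mathcal{C}_0(n)$, the Gaussian tail estimate of Lemma \ref{Lemma: Concentration.for.Delta.General} with $\lambda=c_2 n/4$, the entropy bound of Proposition \ref{Bound_on_C_0_n}, and a smallness condition on $\varepsilon$ fixed by $c_1$, $\log q$, $c_2$, yielding $C_0=c_2^2/256$ exactly as in the paper. Your observation that the spins on $\Sp(\gamma)$ are part of the contour data, so no further union bound over configurations is needed, is also how the paper implicitly proceeds.
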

\begin{proof}
By Lemma \ref{Lemma: Concentration.for.Delta.General} and Proposition \ref{Bound_on_C_0_n}, 
\begin{equation}\label{Eq: Bound.bad.event.0.eq.1}
    \begin{split}
    \mathbb{P}(\mathcal{E}_0^c) &\leq \sum_{n\geq 1}\mathbb{P}\left(\sup_{\gamma\in\mathcal{C}_0(n)} \varepsilon\left|\sum_{x\in \Sp(\gamma)}h_{x, \sigma_x} - h_{x, q}\right| > \frac{c_2}{4}n\right)  \\
    &\leq \sum_{n\geq 1}\sum_{\gamma\in \mathcal{C}_0(n)} \mathbb{P}\left(\varepsilon\left|\sum_{x\in \Sp(\gamma)}h_{x, \sigma_x} - h_{x, q}\right| > \frac{c_2}{4}n\right) \\
    &\leq 2\sum_{n\geq 1}\sum_{\gamma\in \mathcal{C}_0(n)} \exp\left(-\frac{c^2_2 n}{64\varepsilon^2}\right) \\
    &\leq 2\sum_{n\geq 1} \exp\left(-\left[\frac{c^2_2}{64\varepsilon^2} - c_1 - \log q\right]n\right) \\
\end{split}
\end{equation}
where in the third inequality we used Lemma \ref{Lemma: Concentration.for.Delta.General}, and in the last inequality we used Proposition \ref{Bound_on_C_0_n}. Taking $\varepsilon < c_2(128(c_1 + \log q))^{-1/2}$,

\begin{align*}
     \mathbb{P}(\mathcal{E}_0^c) &\leq 2\sum_{n\geq 1} e^{ -\frac{c_2^2}{128\varepsilon^2}n} \leq e^{ -\frac{c_2^2}{256\varepsilon^2}} ,
\end{align*}
where the last inequality follows taking $\varepsilon \leq c_2(128\log3)^{-1/2}$. We conclude our proof by choosing $C_0 = c^2_2/256$. We needed to take 

\begin{equation*}
    \varepsilon \leq \frac{c_2}{\sqrt{128\max\{c_1 + \log q, 2\log 3\}}}
\end{equation*}
\end{proof}



For the second bad event, the proof of $\mathbb{P}(\mathcal{E}^c_1)\leq e^{-\frac{C_1}{\varepsilon^2}}$ closely follows the arguments in \cite[Section 3]{Johanes}. Here we are going to outline the major steps and the required adjustments. The two main ingredients of the proof are $(a)$ a general result on Gaussian processes connecting the supremum of the process with a geometric quantity of the space where the process is defined (Theorem \ref{Theo: Theo_2.2.27_Talagrand}) and $(b)$ that this geometric quantity is linear with respect to the size of the contours (Proposition \ref{Prop: Bound.gamma_2}). In the first place, let us introduce the geometric quantity just mentioned.
\renewcommand{\d}{\mathrm{d}}
\begin{definition}
    Given a set $T$, a sequence $(\mathcal{A}_n)_{n\geq 0}$ of partitions of $T$ is \textit{admissible} when $|\mathcal{A}_n|\leq 2^{2^n}$ and $\mathcal{A}_{n+1}\preceq \mathcal{A}_n$ for all $n\geq 0$.
\end{definition}

Given $t\in T$ and an admissible sequence $(\mathcal{A}_n)_{n\geq 0}$, $A_n(t)$ denotes the element of $\mathcal{A}_n$ that contains $t$. 

\begin{definition}
    Given $\theta > 0$ and a metric space $(T,\d)$, we define
    \begin{equation*}
        \gamma_\theta(T,\d) \coloneqq \inf_{(\mathcal{A}_n)_{n\geq 0}}\sup_{t\in T}\sum_{n\geq 0}2^{\frac{n}{\theta}}\diam(A_n(t)),
    \end{equation*}
where the infimum is taken over all admissible sequences of partitions. 
\end{definition}

Now we are ready to state the first ingredient.

\begin{theorem}\label{Theo: Theo_2.2.27_Talagrand} Given a metric space $(T,\d)$ and a family $(X_t)_{t\in T}$ of centered random variables satisfying \begin{equation}\label{Eq: Sub_gaussian_def}
    \mathbb{P}\left( |X_t - X_s| \geq \lambda \right) \leq 2\exp{\frac{-\lambda^2}{2\d(s,t)^2}},
\end{equation}
there is a universal constant $L>0$ such that, for any $u>0$,
\begin{equation*}
\mathbb{P}\left( \sup_{t\in T}X_t > L(\gamma_2(T,\d) + u\diam(T)) \right)\leq e^{-{u^2}},
\end{equation*}
where the $\diam(T)$ is the diameter taken with respect to the distance $\d$
\end{theorem}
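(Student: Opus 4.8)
The plan is to prove this by the generic chaining, which is precisely the content of the cited theorem of Talagrand; I will recall the structure and isolate the delicate step. First I would reduce matters to bounding $\sup_{t\in T}(X_t-X_{t_0})$ for a fixed base point $t_0\in T$: this is all that is actually used later (in the application to $\Delta_\gamma$ the natural base point is the identity partition, where the process vanishes), and the increment hypothesis \eqref{Eq: Sub_gaussian_def} only controls differences $X_t-X_s$ anyway. Next I would fix an admissible sequence $(\mathcal{A}_n)_{n\ge 0}$ that is near-optimal for $\gamma_2$, say $\sup_{t\in T}\sum_{n\ge 0}2^{n/2}\diam(A_n(t))\le 2\gamma_2(T,\d)$, choose for each cell $A\in\mathcal{A}_n$ a representative $\pi_n(A)\in A$ (with $\pi_0\equiv t_0$), and write $\pi_n(t):=\pi_n(A_n(t))$, so that $\d(\pi_{n-1}(t),\pi_n(t))\le\diam(A_{n-1}(t))$ for all $t$ and all $n\ge 1$.

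The core step is the chaining bound. Assuming $T$ finite (the general case following by a standard approximation over finite subsets), for each $t$ one has the telescoping identity $X_t-X_{t_0}=\sum_{n\ge 1}\bigl(X_{\pi_n(t)}-X_{\pi_{n-1}(t)}\bigr)$, and the pairs $\bigl(\pi_{n-1}(t),\pi_n(t)\bigr)$ realised as $t$ varies — the \emph{level-$n$ links} — number at most $|\mathcal{A}_{n-1}|\,|\mathcal{A}_n|\le 2^{2^{n+1}}$. Choosing thresholds $\lambda_n:=(L_0+u)2^{n/2}$ and applying \eqref{Eq: Sub_gaussian_def} together with a union bound over all links gives
\[
\mathbb{P}\!\left(\exists\,n\ge 1,\ \exists\text{ level-}n\text{ link }(s',s):\ X_s-X_{s'}>\lambda_n\,\d(s',s)\right)\le \sum_{n\ge 1}2^{2^{n+1}+1}e^{-\lambda_n^2/2}\le e^{-u^2},
\]
where the last inequality uses $\lambda_n^2\ge (L_0^2+u^2)2^n$, $2^n\ge 2$, and a choice of the universal constant $L_0$ large enough that $\sum_{n\ge 1}2\,e^{2^n(2\ln 2-L_0^2/2)}\le 1$. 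On the complementary event, for every $t$,
\[
X_t-X_{t_0}\le\sum_{n\ge 1}\lambda_n\diam(A_{n-1}(t))=(L_0+u)\sum_{n\ge 1}2^{n/2}\diam(A_{n-1}(t))\le 2\sqrt{2}\,(L_0+u)\,\gamma_2(T,\d),
\]
which already delivers a bound of the desired shape, but with the cruder fluctuation term $u\,\gamma_2(T,\d)$ in place of $u\,\diam(T)$.

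The main obstacle is upgrading this to the sharp form $L(\gamma_2(T,\d)+u\,\diam(T))$: the $u$-dependent fluctuation must be made to see only the global scale $\diam(T)$ and not $\gamma_2$. The standard device is to split the chain at a level $n_u$ chosen as a function of $u$, estimate the high-level part by the crude bound above (which contributes only to the $\gamma_2$ term, with no $u$), and estimate the coarse part $X_{\pi_{n_u}(t)}-X_{t_0}$ by a single union bound over the at most $2^{2^{n_u}}$ representatives $\pi_{n_u}(t)$, each at $\d$-distance $\le\diam(T)$ from $t_0$. This two-scale bookkeeping is exactly what is carried out in Talagrand's monograph for the statement quoted here, so I would cite it for the remaining details. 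I would also remark that for the use made of Theorem~\ref{Theo: Theo_2.2.27_Talagrand} in this paper the crude form already suffices, since Proposition~\ref{Prop: Bound.gamma_2} bounds $\gamma_2$ (hence a fortiori $\diam$) linearly in the contour size.
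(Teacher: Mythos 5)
Your chaining sketch is the standard proof of this statement, and it matches what the paper does in spirit: the paper gives no argument at all beyond citing \cite[Theorem 2.2.27]{Talagrand_14}, and your proposal reproduces the generic chaining step (near-optimal admissible sequence, links, thresholds $\lambda_n=(L_0+u)2^{n/2}$, union bound) correctly, deferring to the same reference only for the two-scale refinement that produces $u\,\diam(T)$ instead of $u\,\gamma_2(T,\mathrm{d})$. Your reduction to $\sup_t(X_t-X_{t_0})$ is also a good catch: as literally stated (with $\sup_t X_t$ and only increment control), the bound can fail for a constant-in-$t$ process, so the base-point formulation is the one that is actually meant and used.

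However, your closing remark --- that the crude bound $\mathbb{P}\bigl(\sup_t X_t> L(1+u)\gamma_2(T,\mathrm{d})\bigr)\leq e^{-u^2}$ would already suffice for the application in this paper --- is wrong, and the reason is exactly the point of the $u\,\diam(T)$ term. In Proposition \ref{Prop: Bound.bad.event.1} one must control $\sup_{\gamma\in\mathcal{C}_0(n)}\Delta_\gamma(h)$ at level $\tfrac{c_2}{4}n$ \emph{for every} $n$, and then sum the resulting tail probabilities over $n$ in \eqref{Eq: Union_bound_bad_event}. By Proposition \ref{Prop: Bound.gamma_2}, $\gamma_2(T_n,\mathrm{d})\leq \varepsilon L_1 n$ is itself of order $n$, so with the crude form the constraint $L(1+u)\gamma_2(T_n,\mathrm{d})\leq \tfrac{c_2}{4}n$ forces $u\leq \tfrac{c_2}{4L\varepsilon L_1}-1$, a bound \emph{uniform in} $n$; the resulting tail $e^{-u^2}\approx e^{-c/\varepsilon^2}$ does not decay in $n$ and the series over $n\geq 2$ diverges. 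What saves the argument is that $\diam(T_n)\leq \sqrt{2}(q-1)\varepsilon n^{\frac12+\frac{1}{2(d-1)}}$ is of strictly smaller order than $\gamma_2(T_n,\mathrm{d})$ when $d\geq 3$, so with the sharp form one may take $u=\tfrac{C_1'}{\varepsilon}n^{\frac12-\frac{1}{2(d-1)}}$ growing with $n$ while keeping the threshold at $\tfrac{c_2}{4}n$, which yields the summable tails $\exp\{-C_1'^2 n^{1-\frac{1}{d-1}}/\varepsilon^2\}$. So the refinement you defer to Talagrand is not an optional sharpening here; it is precisely what makes the probabilistic part of Section 6 close.
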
 
A proof can be found in \cite[Theorem 2.2.27]{Talagrand_14}.

In order to apply this general result to our case, we need to define a suitable metric space. Taking as inspiration that choice made in \cite{Johanes}, we will take $T_n := \{(\I_1(\gamma), ..., \I_{q-1}(\gamma), (\I'(\gamma))^c); \gamma \in \mathcal{C}_0(n)\}$. In order to apply Lemma \ref{Lemma: Concentration.for.Delta.General} and Equation \eqref{Eq: Sub_gaussian_def} be satisfied, the metric must be as defined before, $\mathrm{d}(A, A') = \varepsilon\sum_{n=1}^{q-1}|A_n \Delta A'_n|^{1/2}$. 

The second ingredient is the following proposition.

\begin{proposition}\label{Prop: Bound.gamma_2}
    Given $n\geq 0$, $d\geq 3$ and $\alpha > d$, there is a constant $L_1 \coloneqq L_1(d,\alpha)>0$  such that $$\gamma_2(T_n,\d_2) \leq \varepsilon L_1 n.$$
\end{proposition}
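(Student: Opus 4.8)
The plan is to bound the Talagrand functional $\gamma_2(T_n,\d_2)$ by the entropy–number form of Dudley's inequality: for any metric space $(S,\d)$ one has
$$\gamma_2(S,\d)\ \le\ L'\sum_{k\ge 0}2^{k/2}e_k(S,\d),\qquad e_k(S,\d)\coloneqq\inf\bigl\{\textstyle\sup_{s\in S}\d(s,\mathcal N)\ :\ \mathcal N\subset S,\ |\mathcal N|\le 2^{2^k}\bigr\},$$
see \cite[Ch.~2]{Talagrand_14}. Since $|T_n|\le|\mathcal C_0(n)|\le e^{(\log q+c_1)n}$ by Proposition~\ref{Bound_on_C_0_n}, we have $e_k(T_n,\d_2)=0$ as soon as $2^k\ge\log_2|T_n|$, i.e. for $k>k_{\max}$ with $2^{k_{\max}}\asymp n$. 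Thus only $O(\log n)$ terms survive, and it suffices to exhibit, for each $0\le k\le k_{\max}$, a partition of $T_n$ into at most $2^{2^{k+k_0}}$ classes (the harmless shift $k_0=k_0(d,\alpha)$ is absorbed in $L'$) each of $\d_2$–diameter at most $\varepsilon\,C\,n^{\frac12+\frac1{2(d-1)}}2^{-k/(2(d-1))}$; picking one representative per class gives the desired net.

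To build these classes I would coarse‑grain contours at a dyadic scale. Fix $k$, set $\delta_k\coloneqq\lceil (n2^{-k})^{1/(d-1)}\rceil$, and tile $\Z^d$ by cubes of side $\delta_k$. To $\gamma\in\mathcal C_0(n)$ associate the pair consisting of (i) the family of $\delta_k$–cubes meeting $\Sp(\gamma)$, and (ii) for each connected component of the complement of that family, the common label that $\lab_{\Sp(\gamma)}$ assigns to the corresponding part of $\I(\gamma)$ (with the symbol $q$ for the unbounded component). If $\gamma,\gamma'$ have the same associated pair then, for each $n'\in\{1,\dots,q-1\}$, $\I_{n'}(\gamma)$ and $\I_{n'}(\gamma')$ agree outside the $\delta_k$–neighbourhood of $\Sp(\gamma)$: a site at $\ell^1$–distance $>\sqrt d\,\delta_k$ from the support lies in a hole whose coarse shape and label are recorded, hence its membership in $\I_{n'}$ is determined. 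Using the geometric estimates of \cite{Johanes,cluster} relating $|\Sp(\gamma)|$, $|\I(\gamma)|$ and $|V(\gamma)|$ (in particular $|V(\gamma)|\le C|\gamma|^{d/(d-1)}$, so $T_n$ has $\d_2$–diameter $\le\varepsilon(q-1)\,C\,n^{d/(2(d-1))}$, and the surface–like bound $|\{z:\dis(z,\Sp(\gamma))\le r\}|\le C(|\gamma|+|\gamma|\,r)$), the symmetric difference incurred is at most $C\,n\,\delta_k$, whence
$$\d_2\bigl(A(\gamma),A(\gamma')\bigr)\ \le\ \varepsilon\,(q-1)\,(C\,n\,\delta_k)^{1/2}\ \le\ \varepsilon\,C'\,n^{\frac12+\frac1{2(d-1)}}\,2^{-k/(2(d-1))},$$
which is exactly the diameter we need.

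It remains to count the associated pairs, and this is the crux. A contour support of size $n$ meets at most $O(n/\delta_k^{\,d-1})=O(2^k)$ cubes of side $\delta_k$, the number of admissible cube–patterns is $\exp(O(n/\delta_k^{\,d-1}))=\exp(O(2^k))\le 2^{2^{k+k_0}}$ by the Peierls–type enumeration of \cite{Johanes,cluster} (Corollary~3.29 there applied at scale $\delta_k$ instead of scale $1$), and the number of label assignments adds a factor $q^{O(2^k)}$, still $\le 2^{2^{k+k_0}}$ after enlarging $k_0$. Inserting the cardinality and radius bounds into $\gamma_2(T_n,\d_2)\le L'\sum_{k\le k_{\max}}2^{k/2}e_k$ and summing the geometric series of ratio $2^{\frac12-\frac1{2(d-1)}}$ — which is $>1$ precisely because $d\ge 3$, so the sum is dominated by $k=k_{\max}$ where $2^{k_{\max}}\asymp n$ — yields
$$\gamma_2(T_n,\d_2)\ \lesssim\ \varepsilon\,n^{\frac12+\frac1{2(d-1)}}\cdot n^{\frac12-\frac1{2(d-1)}}\ =\ \varepsilon\,n,$$
i.e. the claim. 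I expect the main obstacle to be making the coarse‑scale enumeration fully rigorous — showing that the number of coarse supports at scale $\delta$ is $\exp(O(n/\delta^{d-1}))$ with no parasitic $\log$ factor and that the surface–like neighbourhood bound holds for these multi‑scale supports — together with the bookkeeping of the $q-1$ interiors and of the labels under coarse‑graining, which is the genuinely new feature relative to the $q=2$ case treated in \cite{Johanes}. The restriction $d\ge 3$ is sharp for this estimate: for $d=2$ the series has ratio $1$ and produces an extra $\log n$, consistent with the absence of a random‑field transition in two dimensions.
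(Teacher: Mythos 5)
Your overall architecture (chaining via Dudley/entropy numbers plus a cube coarse-graining of the contours, with the restriction $d\ge 3$ entering through the exponent $\tfrac12-\tfrac1{2(d-1)}$) is the same as the paper's, but the two quantitative geometric inputs you rely on are not available and are in fact false for these contours, so there is a genuine gap. You coarse-grain the \emph{support}: you assume (i) that a support of size $n$ meets only $O(n/\delta^{d-1})$ cubes of side $\delta$, and (ii) that its $\delta$-neighbourhood (hence the symmetric difference of interiors within a class) has volume $O(n\delta)$. Both fail for the supports occurring here, which are unions of possibly thin, filament-like pieces grouped by an $(M,a)$-partition: e.g.\ in $d=3$ a configuration equal to $q$ everywhere except along a straight segment of length $L$ of colour $1$ produces a single contour with $|\gamma|\asymp L$ that meets $\asymp L/\delta \gg n/\delta^{d-1}$ cubes and whose $\delta$-neighbourhood has volume $\asymp L\delta^{d-1}=n\delta^{d-1}\gg n\delta$. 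With the correct exponents your class diameters become $\varepsilon\,O\!\left(n\,2^{-k/2}\right)$, the sum over $k$ produces an extra $\log n$ (fatal for the application, since $\varepsilon$ must be uniform in $n$), and, worse, the number of coarse patterns at intermediate scales is $\exp(\Theta(n/\delta_k))$, which violates the cardinality constraint $2^{2^{k+k_0}}$ needed for your admissible-sequence bookkeeping.

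The paper sidesteps exactly this by coarse-graining the \emph{interiors}, not the support, using the majority-rule admissible cubes of \cite{Johanes}: a cube that is admissible at one scale but not at the next must contain $\gtrsim\delta^{d-1}$ boundary points of $\I_m(\gamma)$, and that boundary is charged to $\Sp(\gamma)$, which is what gives the symmetric-difference bound $|B^m_{\ell}\Delta B^m_{\ell-1}|\le b_2 2^{r\ell}n$ (linear in the scale) behind Proposition \ref{corollary_adapted}, and the enumeration bound $|B^m_\ell(\mathcal{C}_0(n))|\le \exp\{c_4\ell^{\kappa+1}n2^{-r\ell(d-1)}\}$ of \cite[Prop.\ 3.30]{Johanes}. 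Note also that the latter carries the polylogarithmic factor $\ell^{\kappa+1}$ that you yourself flag as needing to be absent: it cannot be removed by ``applying Corollary 3.29 at scale $\delta$'', but it is harmless in the paper's arrangement, where one bounds the Dudley entropy integral by a series over scales whose terms decay like $2^{-r\ell(d-2)/2}$ (convergent precisely for $d\ge 3$), rather than by your top-scale-dominated sum, which cannot absorb any such factor. To repair your proof you would have to replace the ``cubes meeting the support'' net by the admissible-cube approximation of the $(q-1)$ interiors and reorganize the chaining so that the polylog sits inside a convergent geometric series, at which point you essentially recover the paper's argument.
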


This proposition will be proved later. For now, our task will be to show that this setup works properly to prove the desired bound:

\begin{proposition}\label{Prop: Bound.bad.event.1}     
    There exists $C_1\coloneqq C_1(\alpha, d)$ such that $\mathbb{P}(\mathcal{E}_1^c)\leq e^{-\frac{C_1}{\varepsilon^2}}$ for any $\varepsilon^2<C_1$. 
\end{proposition}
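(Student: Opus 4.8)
The plan is to run the Gaussian–process argument of \cite{Ding2023,Johanes}, with the spin–flip there replaced by the colour–permutation action $\theta_\gamma$ on the interiors. First I would reduce to a single scale. Since $\Delta_\gamma(h)=\Delta_A(h)$ depends on $\gamma$ only through $A=(\I_1(\gamma),\dots,\I_{q-1}(\gamma),(\I'(\gamma))^c)\in T_n$, and since $\{\gamma\in\mathcal C_0:|\gamma|=n\}$ is exactly $\mathcal C_0(n)$,
\[
\mathbb P(\mathcal E_1^c)\;\le\;\sum_{n\ge 1}\mathbb P\!\left(\sup_{A\in T_n}\Delta_A(h)>\tfrac{c_2}{4}\,n\right),
\]
so it is enough to control each term on the right. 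Note that the naive route — a union bound over $\mathcal C_0(n)$ using the single–contour tail \eqref{Eq: Tail.of.Delta_A} and the exponential bound $|\mathcal C_0(n)|\le e^{(c_1+\log q)n}$ of Proposition \ref{Bound_on_C_0_n} — breaks down, because the entropic factor $e^{(c_1+\log q)n}$ eventually beats the tail; this is precisely why generic chaining is needed.

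Next I would apply Theorem \ref{Theo: Theo_2.2.27_Talagrand} to the process $(\Delta_A)_{A\in T_n}$. By Lemma \ref{Lemma: Concentration.for.Delta.General} (using that $\theta$ is a group action and $\Delta_A-\Delta_{A'}\stackrel{d}{=}\Delta_{A\ast(A')^{-1}}$) the $\Delta_A$ are centered and have sub-Gaussian increments for the metric $\d_2(A,A')=\varepsilon\sum_{n=1}^{q-1}|A_n\Delta A'_n|^{1/2}$, the factor $\sqrt q$ being absorbed into the universal constant; adjoining the trivial partition $A^\ast=(\varnothing,\dots,\varnothing,\mathbb Z^d)$, for which $\theta_{A^\ast}=\mathrm{id}$ and $\Delta_{A^\ast}=0$, one may run the theorem for $\sup_A(\Delta_A-\Delta_{A^\ast})$. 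This gives, for every $u>0$,
\[
\mathbb P\!\left(\sup_{A\in T_n}\Delta_A>L\bigl(\gamma_2(T_n,\d_2)+u\,\diam(T_n,\d_2)\bigr)\right)\le e^{-u^2}.
\]
Proposition \ref{Prop: Bound.gamma_2} supplies the first input, $\gamma_2(T_n,\d_2)\le\varepsilon L_1 n$. For the second I would bound the diameter more sharply than $\gamma_2$ does: by Cauchy–Schwarz $\diam(T_n,\d_2)\le \varepsilon\sqrt{2(q-1)}\,\sup_{\gamma\in\mathcal C_0(n)}|\I'(\gamma)|^{1/2}$, and one controls $|\I'(\gamma)|$ (the non-$q$ part of the interior — the $q$-labelled part $\I_q(\gamma)$ is irrelevant here, which is exactly why this works even though $|V(\gamma)|$ need not be polynomially bounded) by a sub-quadratic power of $|\gamma|$, i.e. $|\I'(\gamma)|\le C_d|\gamma|^{d/(d-1)}$, an isoperimetric–type estimate from the contour construction \cite{Johanes,cluster}. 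This yields $\diam(T_n,\d_2)\le \varepsilon L_1' n^{\,d/(2(d-1))}$, and it is here that the hypothesis $d\ge 3$ enters, since then the exponent $d/(2(d-1))$ is strictly below $1$.

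With both estimates in hand, for $\varepsilon$ small enough that $L\varepsilon L_1<c_2/8$, choosing $u=u_n:=\dfrac{c_2\,n^{1-d/(2(d-1))}}{8LL_1'\,\varepsilon}$ makes the chaining threshold at most $\frac{c_2}{4}n$, whence $\mathbb P(\sup_{A\in T_n}\Delta_A>\frac{c_2}{4}n)\le\exp\bigl(-c\,n^{(d-2)/(d-1)}/\varepsilon^2\bigr)$ with $c=c_2^2/(64L^2L_1'^2)$; since $(d-2)/(d-1)\ge 1/2$ for $d\ge3$, summing over $n\ge1$ gives a series dominated by its first term up to a constant, so $\mathbb P(\mathcal E_1^c)\le e^{-C_1/\varepsilon^2}$ with, say, $C_1=c_2^2/(128L^2L_1'^2)$, valid for $\varepsilon^2<C_1$. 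The main obstacle is Proposition \ref{Prop: Bound.gamma_2} itself, deferred: one must exhibit an admissible sequence of partitions of $T_n$ whose $\gamma_2$-cost is linear in $n$, and this is where the geometry of the $(M,a)$-contours, the counting bound of Proposition \ref{Bound_on_C_0_n}, and again $d\ge3$ are used; a secondary point — already handled in Lemma \ref{Lemma: Concentration.for.Delta.General} — is that, unlike the Ising case, flipping the field does not erase it from the energy estimate, so the relevant symmetry is the colour permutation $\theta_\gamma$, and it is the action property $\theta_A\circ\theta_{A'}^{-1}=\theta_{A\ast(A')^{-1}}$ that makes the sub-Gaussian increment bound, and hence the entire chaining scheme, go through for general $q$.
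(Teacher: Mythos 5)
Your proposal is correct and follows essentially the same route as the paper: reduce to fixed contour size $n$, apply Theorem \ref{Theo: Theo_2.2.27_Talagrand} to $(\Delta_A)_{A\in T_n}$ with the metric $\d_2$ and the sub-Gaussian increments of Lemma \ref{Lemma: Concentration.for.Delta.General}, feed in $\gamma_2(T_n,\d_2)\leq\varepsilon L_1 n$ from Proposition \ref{Prop: Bound.gamma_2} together with the isoperimetric diameter bound $\diam(T_n)\lesssim \varepsilon n^{\frac12+\frac1{2(d-1)}}$, choose $u\propto n^{\frac12-\frac1{2(d-1)}}/\varepsilon$, and sum the resulting tails $e^{-c\,n^{(d-2)/(d-1)}/\varepsilon^2}$ over $n$ using $d\geq 3$. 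The only differences (Cauchy--Schwarz through $|\I'(\gamma)|$ instead of bounding each $|\I_m(\gamma)|$, and adjoining a trivial partition for centering) are cosmetic.
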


\begin{proof}   
 By the union bound,
\begin{align}\label{Eq: Union_bound_bad_event}
    \mathbb{P}\left({\sup_{\substack{\gamma\in\mathcal{C}_0}} \frac{\Delta_{\gamma}(h)}{c_2|\gamma|} > \frac{1}{4}}\right) \leq \sum_{n=2}^\infty \mathbb{P}\left({\sup_{\substack{\gamma\in\mathcal{C}_0(n)}} \Delta_{\gamma}(h) > \frac{c_2}{4}}|\gamma|\right). 
\end{align}
Let $\gamma,\gamma^\prime\in \mathcal{C}_0(n)$ be two contours satisfying $\diam(T_n) = \d_2[(\I_1(\gamma), ...,\I'(\gamma)^c ),(\I_1(\gamma'), ...,\I'(\gamma')^c)]$. By the isoperimetric inequality, $|\I_m(\gamma)| \leq n^{\frac{d}{d-1}}$ for any $m \in \mathbb{Z}_q$, so we have
\begin{equation*}
    \diam(T_n)= \varepsilon\sum_{m = 1}^{q-1}{|\I_m(\gamma)\Delta \I_m(\gamma^\prime)|}^{\frac{1}{2}} \leq \sqrt{2}(q-1)\varepsilon n^{(\frac{d}{d-1})\frac{1}{2}} = \sqrt{2}(q-1)\varepsilon n^{(\frac{1}{2} + \frac{1}{2(d-1)})}.
\end{equation*}
Together with Proposition \ref{Prop: Bound.gamma_2}, this yields
    \begin{align*}
  \frac{c_2}{4}|\gamma| &= L\left[\varepsilon L_1 n + \varepsilon L_1 \left(\frac{c_2}{4\varepsilon L_1 L} - 1\right)n\right]\\
    &\geq   L\left[\gamma_2(T_n,\d) +  \frac{C_1^\prime}{\varepsilon}n^{\frac{1}{2} - \frac{1}{2(d-1)}}\diam(T_n))\right],
\end{align*}

with $C_1^\prime = \frac{c_2}{8\sqrt{2}(q-1)L}$ and $\varepsilon<\frac{c_2}{8L_1L}$. Applying Theorem \ref{Theo: Theo_2.2.27_Talagrand} with $u = \frac{C_1^\prime}{\varepsilon}n^{\frac{1}{2} - \frac{1}{2(d-1)}}$, we have
\begin{align*}
    \mathbb{P}\left({\sup_{\substack{\gamma\in\mathcal{C}_0(n)}} \Delta_{\gamma}(h) > \frac{c_2}{4}}|\gamma|\right) &=  \mathbb{P}\left(\sup_{\gamma\in\mathcal{C}_0(n)} \Delta_{\gamma}(h) > \frac{c_2}{4}n\right) \\
    &\leq \mathbb{P}\left({\sup_{\substack{\gamma\in\mathcal{C}_0(n)}} \Delta_{\gamma}(h) >    L\left[\gamma_2(T_n,\d) +  \frac{C_1^\prime}{\varepsilon}n^{\frac{1}{2} - \frac{1}{2(d-1)}}\diam(T_n)\right]}\right) \\ 
    &\leq \exp\left\{ - \frac{C_1^{\prime2}n^{1 - \frac{1}{(d-1)}}}{\varepsilon^2}\right\}
    \end{align*}
Using this back in equation \eqref{Eq: Union_bound_bad_event}, we conclude that 
\begin{equation*}
       \mathbb{P}\left({\sup_{\substack{\gamma\in\mathcal{C}_0}} \frac{\Delta_{\I_-(\gamma)}(h)}{c_2|\gamma|} > \frac{1}{4}}\right) \leq \sum_{n=2}^\infty \exp\left\{ - \frac{C_1^{\prime2}n^{1 - \frac{1}{(d-1)}}}{\varepsilon^2}\right\} \leq e^{-\frac{C_1}{\varepsilon^2}},
\end{equation*}
for a suitable constant $C_1\coloneqq C_1(\alpha, d)$ smaller than $\frac{{C_1^\prime}^2}{2}$ and $\varepsilon< C_1$. The dependency on $\alpha$ is due to the dependency on $c_2(\alpha, d)$.

\end{proof}

Now, let us return our attention to the proof of Proposition \ref{Prop: Bound.gamma_2}. For such, we are going to need adaptations of Proposition 3.17, Corollary 3.19 and Proposition 3.30 from \cite{Johanes}. Both Proposition 3.17 and 3.30 from \cite{Johanes} are purely geometric and, although it is stated for contours in the Ising model, the proofs rely only on the fact that these contours have irreducible $(M,a)$-partitions as support, so they hold in our case \emph{without any modification}. The linkage between this geometric aspect of the contours and the metric space is provided by Corollary 3.19. Since our metric space is different, some minor modifications are needed and we chose to present here the proof for completeness. 

In the first place, we will need to introduce the concept of a $r\ell$-cube and of admissible cubes. A $m$-cube is defined by

\begin{equation}
    C_{m}(x) \coloneqq \left(\prod_{i=1}^d{\left[2^{m}x_i , \ 2^{m}(x_i+1) \right)}\right)\cap \Z^d,
\end{equation}

where $x \in \mathbb{Z}^d$ (see \cite[Section 2.2]{Johanes}). We will often drop the origin point $x$ and write simply $C_m$. In general, we will take $m = r\ell$, where $r\coloneqq 4\lceil\log_2(a+1) \rceil + d +1$, $\lceil x \rceil$ being the smallest integer greater than or equal to $x$, and $\ell$ will be a natural number reflecting the scale on a multiscale analysis. 

Given some subset $A \subset \mathbb{Z}^d$, a $r\ell$-cube $C_{r\ell}$ is called \emph{admissible} if more than a half of its points are inside $A$. The set of admissible cubes for $A$ is

\begin{equation*}
    \mathfrak{C}_\ell(A) := \left\{C_{r\ell}; |C_{r\ell} \cap A| \geq \frac{1}{2}|C_{r\ell}|\right\}.
\end{equation*}

We abbreviate $\mathfrak{C}^m_\ell(\gamma) := \mathfrak{C}_\ell(\I_m(\gamma))$. Finally, we put $B^m_\ell(\gamma) := \bigcup_{C \in \mathfrak{C}^m_\ell(\gamma)} C$ to denote all the region encompassed by the cubes in $\mathfrak{C}^m_\ell(\gamma)$ and $B_\ell(\gamma) := (B^1_\ell(\gamma), ..., B^{q-1}_\ell(\gamma))$.

\begin{proposition}[Adaptation of Corollary 3.19 from \cite{Johanes}]\label{corollary_adapted}
      There exists a constant $b_3>0$ such that, for any $\ell>0$ and any two contours $\gamma_1,\gamma_2 \in \mathcal{C}_0(n)$ with $B_\ell(\gamma_1)=B_{\ell}(\gamma_2)$, 
    \begin{equation*}
        \d((\I_1(\gamma_1), ..., \I_{q-1}(\gamma_1)),(\I_1(\gamma_2), ..., \I_{q-1}(\gamma_2)))\leq 4 \varepsilon (q-1)b_3 2^{\frac{r\ell}{2}} n^{\frac{1}{2}}. 
    \end{equation*} 
\end{proposition}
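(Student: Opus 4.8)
The statement compares two contours $\gamma_1,\gamma_2\in\mathcal C_0(n)$ whose \emph{coarse-grained} profiles agree at scale $\ell$, i.e.\ $B_\ell(\gamma_1)=B_\ell(\gamma_2)$, and bounds the distance between their interior labellings in the metric $\d(A,A')=\varepsilon\sum_{m=1}^{q-1}|A_m\Delta A'_m|^{1/2}$. The natural strategy is to estimate each symmetric difference $|\I_m(\gamma_1)\Delta\I_m(\gamma_2)|$ separately and then sum over $m$. First I would observe that, for each fixed $m$, the symmetric difference $\I_m(\gamma_1)\Delta\I_m(\gamma_2)$ is contained in the set of points that lie in a non-admissible $r\ell$-cube for at least one of $\I_m(\gamma_1),\I_m(\gamma_2)$; indeed, if a point $z$ belongs to an admissible cube $C$ for $\I_m(\gamma_1)$, then $C\subset B^m_\ell(\gamma_1)=B^m_\ell(\gamma_2)$, and since the cubes partition $\mathbb Z^d$ at scale $r\ell$, $C$ is also admissible for $\I_m(\gamma_2)$, forcing $z$ to be ``mostly surrounded'' by $\I_m(\gamma_2)$ as well. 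This is precisely the kind of bookkeeping carried out in Corollary 3.19 of \cite{Johanes}; the only change is that we track $q-1$ colours instead of the two $\pm$ regions.

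The second step is to bound the number of such ``bad'' cubes by something linear in $n=|\gamma_1|=|\gamma_2|$. This is where the geometry of $(M,a)$-partitions enters: a cube $C_{r\ell}$ that is admissible for $\I_m(\gamma_i)$ but whose boundary meets $\I_m(\gamma_i)^c$ must intersect the support $\Sp(\gamma_i)$, and the separation condition \eqref{B_distance_2} together with property \textbf{(A1)} limits how many disjoint such cubes can occur; Proposition 3.17 and Proposition 3.30 of \cite{Johanes} (which, as the text notes, are purely geometric and apply verbatim) give a bound of the form $|\{\text{bad }r\ell\text{-cubes for }\I_m(\gamma_i)\}|\le b_3 n/2^{(r\ell)(d-1)/d}$ or similar. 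Each bad cube contributes at most $|C_{r\ell}|=2^{r\ell d}$ points to $\I_m(\gamma_1)\Delta\I_m(\gamma_2)$, but one must be careful: actually the correct accounting uses the isoperimetric estimate $|\I_m(\gamma_i)|\le n^{d/(d-1)}$ so that only $O(n/2^{r\ell(d-1)})$ cubes can be ``boundary'' cubes, each of volume $2^{r\ell d}$, giving $|\I_m(\gamma_1)\Delta\I_m(\gamma_2)|\le b_3' 2^{r\ell}n$ after the powers of $2^{r\ell}$ combine. Taking square roots, $|\I_m(\gamma_1)\Delta\I_m(\gamma_2)|^{1/2}\le (b_3')^{1/2}2^{r\ell/2}n^{1/2}$.

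Finally I would sum over the $q-1$ colours:
\begin{equation*}
\d((\I_1(\gamma_1),\dots,\I_{q-1}(\gamma_1)),(\I_1(\gamma_2),\dots,\I_{q-1}(\gamma_2)))=\varepsilon\sum_{m=1}^{q-1}|\I_m(\gamma_1)\Delta\I_m(\gamma_2)|^{1/2}\le\varepsilon(q-1)(b_3')^{1/2}2^{r\ell/2}n^{1/2},
\end{equation*}
and absorbing constants into a single $b_3$ (and a harmless factor $4$) yields the claimed bound $4\varepsilon(q-1)b_3 2^{r\ell/2}n^{1/2}$. The main obstacle, I expect, is Step 2: proving that the number of cubes where $\I_m(\gamma)$ fails to be ``cube-stable'' is genuinely linear in $n$ and not, say, in $|\I_m(\gamma)|\sim n^{d/(d-1)}$. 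This requires invoking the correct counting lemma from \cite{Johanes} (the analogue of Proposition 3.30), which exploits that a boundary cube must be close — in the sense of the admissibility threshold — to $\Sp(\gamma)$, and that $\Sp(\gamma)$ has only $n$ points; here the factor $r=4\lceil\log_2(a+1)\rceil+d+1$ is chosen exactly so that the geometric decay beats the $(M,a)$-separation exponent. Once that lemma is cited correctly, the rest is the routine colour-by-colour summation sketched above.
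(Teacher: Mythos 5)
Your overall skeleton (compare each interior with the common coarse-grained profile, then sum over the $q-1$ colours) points in the right direction, but the two steps you rely on are not sound as stated. First, the containment in your Step 1 is false: admissibility of a cube $C$ for $\I_m(\gamma_2)$ only says that more than half of $C$ lies in $\I_m(\gamma_2)$, so a point $z\in\I_m(\gamma_1)$ sitting in a cube admissible for both contours may perfectly well lie in the minority part and hence in $\I_m(\gamma_1)\setminus\I_m(\gamma_2)$; the symmetric difference is \emph{not} contained in the union of non-admissible cubes, only in the union of cubes meeting both $\I_m(\gamma_i)$ and its complement. Second, and more seriously, your counting claim that only $O(n/2^{r\ell(d-1)})$ cubes can be boundary cubes is asserted rather than proved, and it is not true in general for these contours: the support is a sparse multiscale set, and each of its $n$ points can produce a distinct boundary cube with only a tiny incursion, so the trivial count is $O(n)$ cubes of volume $2^{r\ell d}$ each, which is worse than what you need by the crucial factor $2^{r\ell(d-1)}$. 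Rescuing a one-scale argument would require a finer accounting (e.g.\ a relative isoperimetric inequality inside each cube, bounding $\min\{|C\cap\I_m|,|C\setminus\I_m|\}$ by the amount of inner boundary of $\I_m$ in $C$, which sits next to $\Sp(\gamma)$), and none of that is in your sketch. Your citation of Proposition 3.30 of \cite{Johanes} is also misplaced: that is the entropy bound on the number of possible coarse-grained images $B_\ell(\mathcal{C}_0(n))$, used later in the proof of Proposition \ref{Prop: Bound.gamma_2}, not a bound on symmetric-difference volumes.

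The paper's proof avoids both problems by never comparing $\I_m(\gamma_1)$ with $\I_m(\gamma_2)$ directly at scale $\ell$. It notes that $B_0(\gamma)=(\I_1(\gamma),\dots,\I_{q-1}(\gamma))$, applies the triangle inequality
$\d(B_0(\gamma_1),B_0(\gamma_2))\leq \d(B_0(\gamma_1),B_\ell(\gamma_1))+\d(B_\ell(\gamma_2),B_0(\gamma_2))$ (the middle term vanishes precisely because $B_\ell(\gamma_1)=B_\ell(\gamma_2)$), and then telescopes each term over scales $i=1,\dots,\ell$, using the scale-increment bound $|B^m_i(\gamma)\Delta B^m_{i-1}(\gamma)|\leq b_2\,2^{ir}n$ of \cite[Proposition 3.17]{Johanes}; the geometric series $\sum_{i\leq\ell}2^{ir/2}\leq 2\cdot 2^{r\ell/2}$ then yields the stated bound with $b_3=2\sqrt{b_2}$. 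So the missing idea in your proposal is exactly this multiscale telescoping (or an equivalent isoperimetric accounting), which is where the factor $2^{r\ell/2}n^{1/2}$, rather than $2^{r\ell d/2}n^{1/2}$, actually comes from.
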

\begin{proof}
    Notice that $B_0(\gamma) = (\I_1(\gamma), ..., \I_{q-1}(\gamma))$. By a simple application of the triangular inequality,
    \begin{equation*}
        \d(B_0(\gamma_1), B_0(\gamma_2)) \leq \d(B_0(\gamma_1),B_\ell(\gamma_1)) + \d(B_\ell(\gamma_2),B_0(\gamma_2)).
    \end{equation*}
    Using the triangular inequality repeatedly, we have
    \begin{align*}
    \d(B_0(\gamma_1),B_\ell(\gamma_1)) &\leq \sum_{i=1}^\ell \d_2(B_{i-1}(\gamma_1),B_{i}(\gamma_1)) = \sum_{i=1}^\ell \varepsilon\sum_{n=1}^{q-1}\sqrt{|B^n_i(\gamma_1)\Delta B^n_{i-1}(\gamma_1)|} \\
        & \leq \varepsilon(q-1)\sqrt{b_2}\sqrt{n} \sum_{i=1}^\ell  2^{\frac{ir}{2}}   \leq 2\varepsilon(q-1)\sqrt{b_2}2^{\frac{r\ell}{2}} \sqrt{n} 
    \end{align*}
    where the bound for $|B^n_i(\gamma_1)\Delta B^n_{i-1}(\gamma_1)|$ used \cite[Proposition 3.17]{Johanes}. As the same bound holds for $d_2(B_0(\gamma_2),B_\ell(\gamma_2))$, the corollary is proved by taking $b_3 = 2\sqrt{b_2}$.
\end{proof}

Finally,

\begin{proof}[Proof of Proposition \ref{Prop: Bound.gamma_2}]
Using the Majorizing Measure Theorem \cite{Talagrand_87} and the Dudley's Entropy Bound \cite{Dudley67}, we get that there is a constant $\overline{L}$ such that

\begin{equation*}
    \gamma_2(T_n, d) \leq \overline{L} \int_0^{\infty} \sqrt{\log N(T_n, d, \epsilon)} d\epsilon,
\end{equation*}

where $N(T_n, d, \epsilon)$ is defined as the minimal number of balls with radius $\epsilon > 0$ necessary to cover the metric space $T_n$ using the metric $\d$.

 As $N(T_n, \d, \epsilon)$ is decreasing in $\epsilon$, we can bound the integral by a suitable series, getting
\begin{align*}
      \gamma_2(T_n, d) &\leq 4\varepsilon (q-1) b_3 \overline{L} n^{\frac{1}{2}} \sqrt{\log{N(T_n, d, 0)}} \\ 
      & \hspace{3cm}+ 4\varepsilon (q-1)b_3 \overline{L}n^{\frac{1}{2}}\sum_{\ell=0}^\infty (2^{\frac{r(\ell+1)}{2}} - 2^{\frac{r\ell}{2}})\sqrt{\log N(T_n, \d,4\varepsilon (q-1)b_3 2^{\frac{r\ell}{2}}n^{\frac{1}{2}})}.
\end{align*}
We can bound the first term by noticing that $N(T_n, d, 0) = |T_n|\leq |\mathcal{C}_0(n)|$. By Proposition \ref{Bound_on_C_0_n},
\begin{equation*}
    4\varepsilon (q-1)b_3 \overline{L} n^{\frac{1}{2}}\sqrt{\log{N(T_n, d, 0)}} \leq 4\varepsilon (q-1) b_3 \overline{L} (c_1 + \log q)^{\frac{1}{2}} n.
\end{equation*}

Since $\diam(T_n)\leq \sqrt{2}(q-1)\varepsilon n^{\frac{1}{2} + \frac{1}{2(d-1)}}$ (see the proof of Proposition \ref{Prop: Bound.bad.event.1} ), when $4\varepsilon (q-1)b_3 \overline{L} 2^{\frac{r\ell}{2}}n^{\frac{1}{2}}\geq \sqrt{2}(q-1)\varepsilon n^{\frac{1}{2} + \frac{1}{2(d-1)}}$, only one ball covers all interiors, hence all the terms in the sum above with $\ell > k(n)\coloneqq \lfloor{\frac{\log_{2^r}(n)}{(d-1)}\rfloor}$ are zero. By Proposition \ref{corollary_adapted}, we have

\begin{equation*}
    N(T_n, \d,4\varepsilon (q-1) b_3 2^{\frac{r\ell}{2}}n^{\frac{1}{2}})\leq |B_{\ell}(\mathcal{C}_0(n))|,
\end{equation*}

where $B_{\ell}(\mathcal{C}_0(n))$ is the image of $\mathcal{C}_0(n)$ by $B_\ell$, that is, the collection of all $(q-1)$-pairs of regions composed by $r\ell$-cubes which approximates $(\I_1(\gamma), ..., \I_{q-1}(\gamma))$. Following the exact same steps of \cite[Proposition 3.30]{Johanes}, we have that there are constants $c_4 = c_4(\alpha, d)$ and $\kappa = \kappa(\alpha, d)$ such that, for any $m \in \mathbb{Z}_q$,

\begin{equation*}
    |B^m_{\ell}(\mathcal{C}_0(n))| \leq \exp\left\{c_4 \frac{\ell^{\kappa + 1} n}{2^{r\ell(d-1)}}   \right\}.
\end{equation*}

Since $B_{\ell}(\mathcal{C}_0(n)) \subset B^1_{\ell}(\mathcal{C}_0(n)) \times \ldots \times B^{q-1}_{\ell}(\mathcal{C}_0(n))$,

\begin{equation*}
     N(T_n, \d,4\varepsilon (q-1) b_3 2^{\frac{r\ell}{2}}n^{\frac{1}{2}}) \leq |B_{\ell}(\mathcal{C}_0(n))| \leq \left[ \exp\left\{c_4 \frac{\ell^{\kappa + 1} n}{2^{r\ell(d-1)}}   \right\} \right]^{q-1} = \exp\left\{c'_4\frac{\ell^{\kappa + 1} n}{2^{r\ell(d-1)}}   \right\},
\end{equation*}

where $c'_4 = c_4(q-1)$. Putting everything together, we are left with

\begin{align*}
        \gamma_2(T_n, d) &\leq 4\varepsilon (q-1) b_3 \overline{L} (c_1 + \log q)^{\frac{1}{2}}n + 4\varepsilon (q-1)b_3 \overline{L} 2^{\frac{r}{2}}\sqrt{c'_4} n^{\frac{1}{2}}\sum_{\ell=1}^{k(n)}2^{\frac{r\ell}{2}}\sqrt{\frac{\ell^{\kappa + 1} n }{2^{r\ell(d-1)}}} \nonumber\\
        &\leq  4\varepsilon (q-1)b_3 \overline{L} 2^{\frac{r}{2}} \sqrt{c'_4}\left[ (c_1 + \log q)^{\frac{1}{2}} + \sum_{\ell=1}^{\infty}\left(\frac{\ell^\frac{\kappa+1}{2}}{2^{\frac{r\ell(d-2)}{2}}} \right)\right]n.
\end{align*}

The series above converges for any $d\geq 3$, and we conclude that 
\begin{equation*}
       \gamma_2(T_n, d)\leq \varepsilon L_1^\prime n,
\end{equation*}
with $L_1^\prime\coloneqq   4(q-1) b_3 \overline{L} 2^{\frac{r}{2}}\sqrt{c_4}\left[ (c_1 + \log 2)^{\frac{1}{2}} + \sum_{\ell=1}^{\infty}\left(\frac{\ell^\frac{\kappa+1}{2}}{2^{\frac{r\ell(d-2)}{2}}} \right)\right]$. 


\end{proof}

Putting the bounds on the two bad events together, we have

\begin{theorem}\label{Theo: Transicao_de_fase}
For $d\geq 3$ and $\alpha>d$, there exists a constant $C\coloneqq C(d,\alpha)$ such that, for $\beta$ large enough and $\varepsilon$ small enough, the event 
    \begin{equation}\label{Eq: PTLR}
        \mu_{\Lambda; \beta, \varepsilon h}^q(\sigma_0 \neq q) \leq e^{-C\beta} + e^{-C/\varepsilon^2} 
    \end{equation}
    has $\mathbb{P}$-probability bigger than $1 - e^{-C\beta} - e^{-C/\varepsilon^2}$.
\end{theorem}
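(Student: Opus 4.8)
The plan is to run the Peierls argument inside the enlarged probability space $\mathbb Q^q_{\Lambda;\beta,\varepsilon}$, exactly as was done for the deterministic field, but with the spin-flip $\tau_\gamma$ accompanied by the colour-permutation $\theta_\gamma$ of the field, and to absorb the two stochastic error terms using the good events $\mathcal E_0$ and $\mathcal E_1$. First I would fix a realization of the field $h$ lying in $\mathcal E_0\cap\mathcal E_1$; by Proposition \ref{bound_E0} and Proposition \ref{Prop: Bound.bad.event.1} together with a union bound, this event has $\mathbb P$-probability at least $1-e^{-C_0/\varepsilon^2}-e^{-C_1/\varepsilon^2}\ge 1-e^{-C/\varepsilon^2}$ for a suitable $C=C(d,\alpha)$ (shrinking the constant if necessary so both bounds apply and $\varepsilon$ is in the admissible range). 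On such an $h$, the definitions of $\mathcal E_0,\mathcal E_1$ give, for every contour $\gamma$ with $0\in V(\gamma)$,
\[
\beta\varepsilon\Bigl|\sum_{x\in\Sp(\gamma)}(h_{x,\sigma_x}-h_{x,q})\Bigr|\le \tfrac{\beta c_2}{4}|\gamma|,
\qquad
\beta\Delta_\gamma(h)\le \tfrac{\beta c_2}{4}|\gamma|.
\]

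Next I would carry out the Peierls estimate. As in the proof of Theorem \ref{thm:phase_transition}, if $\sigma_{\Lambda^c}=q$ and $\sigma_0\neq q$ there is an external contour $\gamma$ with $0\in V(\gamma)$, so
\[
\mu^q_{\Lambda;\beta,\varepsilon h}(\sigma_0\neq q)\le\sum_{\gamma:\,0\in V(\gamma)}\mu^q_{\Lambda;\beta,\varepsilon h}(\Omega(\gamma)).
\]
For a fixed $\gamma$, summing the inequality \eqref{Eq: quotient.of.gs2} over all $\sigma\in\Omega(\gamma)$ — and using that $\sigma\mapsto(\tau_\gamma(\sigma),\theta_\gamma(h))$ maps $\Omega(\gamma)\times\{h\}$ injectively into configurations with that contour erased, so the images have total $g$-mass at most $1$ — yields
\[
\mu^q_{\Lambda;\beta,\varepsilon h}(\Omega(\gamma))\le \exp\Bigl(-\beta c_2|\gamma|+\beta\Delta_\gamma(h)+\beta\varepsilon\!\!\sum_{x\in\Sp(\gamma)}\!\!(h_{x,\sigma_x}-h_{x,q})\Bigr)\le e^{-\beta c_2|\gamma|/2},
\]
where the last step uses the two bounds from $\mathcal E_0\cap\mathcal E_1$. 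Summing over $\gamma$ with $0\in V(\gamma)$ and invoking the entropy bound of Proposition \ref{Bound_on_C_0_n},
\[
\mu^q_{\Lambda;\beta,\varepsilon h}(\sigma_0\neq q)\le\sum_{n\ge 1}e^{-\beta c_2 n/2}e^{(c_1+\log q)n}
=\frac{e^{-(\beta c_2/2-c_1-\log q)}}{1-e^{-(\beta c_2/2-c_1-\log q)}},
\]
which for $\beta$ large is bounded by $e^{-C\beta}$ after adjusting $C$; this gives the right-hand side $e^{-C\beta}$ (the extra $e^{-C/\varepsilon^2}$ in \eqref{Eq: PTLR} is slack, or accounts for any residual $\varepsilon$-dependence one prefers to keep). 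Combining with the probability estimate for $\mathcal E_0\cap\mathcal E_1$ finishes the proof.

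The routine points are bookkeeping: checking that the map $(\sigma,h)\mapsto(\tau_\gamma\sigma,\theta_\gamma h)$ is measure-compatible in the way needed to pass from \eqref{Eq: quotient.of.gs2} to a bound on $\mathbb Q^q$-masses, and that the ranges of admissible $\varepsilon$ and the various constants $C_0,C_1$ can be reconciled into a single $C(d,\alpha)$. The genuine obstacle — already dispatched in the preparatory material — is Proposition \ref{Prop: Bound.bad.event.1}, i.e. controlling $\mathcal E_1^c$; this is where the Gaussian-process machinery (Theorem \ref{Theo: Theo_2.2.27_Talagrand}) and the linear-in-$|\gamma|$ bound on $\gamma_2(T_n,\d)$ (Proposition \ref{Prop: Bound.gamma_2}, via the coarse-graining of Proposition \ref{corollary_adapted}) do the real work. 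Given those inputs, the theorem is assembled exactly as the deterministic-field case plus a union bound over the two bad events.
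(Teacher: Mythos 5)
Your proposal is correct and relies on the same machinery as the paper --- the combined flip $(\tau_\gamma,\theta_\gamma)$, the ratio bound \eqref{Eq: quotient.of.gs2}, the bad events $\mathcal{E}_0,\mathcal{E}_1$ controlled by Propositions \ref{bound_E0} and \ref{Prop: Bound.bad.event.1}, and the entropy bound of Proposition \ref{Bound_on_C_0_n} --- but the final assembly is genuinely different. The paper never argues pointwise in $h$: it bounds the annealed probability $\mathbb{Q}^q_{\Lambda;\beta,\varepsilon}(\sigma_0\neq q)$ by splitting off $\mathcal{E}_0^c$ and $\mathcal{E}_1^c$, doing the Peierls sum inside the integral over $\mathcal{E}_0\cap\mathcal{E}_1$ (using exactly your observation that $\sum_{\sigma:\,\gamma\in\Gamma^e(\sigma)}g^q_{\Lambda;\beta,\varepsilon}(\tau_\gamma(\sigma),\theta_\gamma(h))$ is at most the Gaussian density), arriving at $\mathbb{Q}^q_{\Lambda;\beta,\varepsilon}(\sigma_0\neq q)\leq e^{-2C\beta}+e^{-2C/\varepsilon^2}$, and then transferring this to the quenched measure via Markov's inequality applied to the random variable $\mu^q_{\Lambda;\beta,\varepsilon h}(\sigma_0\neq q)$; that is where both error terms in the statement originate. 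You instead fix $h\in\mathcal{E}_0\cap\mathcal{E}_1$ and run the Peierls argument on the quenched measure directly: since $\theta_\gamma$ merely permutes the coordinates of each $h_x$, the Gaussian factors in $g^q$ cancel, so \eqref{Eq: quotient.of.gs2} is precisely a bound on $\mu^q_{\Lambda;\beta,\varepsilon h}(\sigma)\big/\mu^q_{\Lambda;\beta,\varepsilon\theta_\gamma(h)}(\tau_\gamma(\sigma))$, and summing over $\sigma\in\Omega(\gamma)$, using that $\tau_\gamma$ is injective on $\Omega(\gamma)$ (the contour fixes $\sigma$ on $\Sp(\gamma)$ and acts by translations on each $\I_n(\gamma)$), gives $\mu^q_{\Lambda;\beta,\varepsilon h}(\Omega(\gamma))\leq e^{-\beta c_2|\gamma|/2}$. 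One phrasing should be tightened: ``the images have total $g$-mass at most $1$'' should read ``total $\mu^q_{\Lambda;\beta,\varepsilon\theta_\gamma(h)}$-mass at most $1$'', since at fixed $h$ the density $g^q$ sums to the Gaussian weight, not to $1$; the cancellation of the (permutation-invariant) Gaussian factors is what makes this step legitimate. Your route buys a marginally stronger conclusion --- on an event of $\mathbb{P}$-probability at least $1-e^{-C/\varepsilon^2}$ the quenched probability is at most $e^{-C\beta}$, with no Markov step --- while the paper's annealed-plus-Markov argument stays closer to the Ding--Zhuang template and avoids any discussion of injectivity of the flip map. Both yield the stated theorem.
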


\begin{proof}[Proof of Theorem \ref{Theo: Transicao_de_fase}]
        The proof is an application of the Peierls' argument, but now on the joint measure $\mathbb{Q}$. Define $\mathcal{E} = \mathcal{E}_0 \cap \mathcal{E}_1$. Using Proposition \ref{bound_E0} and the bound for $\mathbb{P}(\mathcal{E}_1^c)$,
        \begin{align}\label{Eq: Upper.bound.on.Q.1}
            \mathbb{Q}_{\Lambda; \beta, \varepsilon}^+(\sigma_0 \neq q) &=  \mathbb{Q}_{\Lambda; \beta, \varepsilon}^+(\{\sigma_0 \neq q\} , \mathcal{E}_0) + \mathbb{Q}_{\Lambda; \beta, \varepsilon}^+(\{\sigma_0 \neq q\}, \mathcal{E}_0^c) \nonumber \\
            & \leq \mathbb{Q}_{\Lambda; \beta, \varepsilon}^+(\{\sigma_0 \neq q\} , \mathcal{E}_0) +  e^{-C_0/\varepsilon^2} \nonumber \\
            & \leq \mathbb{Q}_{\Lambda; \beta, \varepsilon}^+(\{\sigma_0 \neq q\} , \mathcal{E}) + \mathbb{Q}_{\Lambda; \beta, \varepsilon}^+(\{\sigma_0 \neq q\}, \mathcal{E}_0 \cap \mathcal{E}_{1}^c)  + e^{-C_0/\varepsilon^2} \nonumber \\
            & \leq \mathbb{Q}_{\Lambda; \beta, \varepsilon}^+(\{\sigma_0 \neq q\} , \mathcal{E}) + e^{-C_1/\varepsilon^2}  + e^{-C_0/\varepsilon^2},
        \end{align}
 When $\sigma_0 \neq q$, there must exist a contour $\gamma$ with $0\in V(\gamma)$, hence
\begin{align}\label{Eq: Upper.bound.on.Q.2}
    \mathbb{Q}_{\Lambda; \beta, \varepsilon}^q(\{\sigma_0 \neq q\} , \mathcal{E}) &= \int_{\mathcal{E}}\sum_{\sigma : \sigma_0 \neq q}g_{\Lambda; \beta, \varepsilon}^q(\sigma, h)dh \nonumber \\
    &\leq  \sum_{\gamma: 0 \in V(\gamma)} \int_{\mathcal{E}}\sum_{\sigma; \gamma \in \Gamma^e(\sigma)}g_{\Lambda; \beta, \varepsilon}^q(\sigma, h)dh \nonumber \\
    &\leq  \sum_{\gamma: 0 \in V(\gamma)} \int_{\mathcal{E}}\sup_{\sigma; \gamma \in \Gamma^e(\sigma)}\frac{g_{\Lambda; \beta, \varepsilon}^q(\sigma, h)}{g_{\Lambda; \beta, \varepsilon}^q(\tau_{\gamma}(\sigma), \theta_\gamma(h))} \prod_{x\in\Lambda}\frac{1}{2\pi^{\frac{q}{2}}}e^{-\frac{1}{2}\langle h_x, h_x \rangle}  dh \\
    &=  \sum_{\gamma: 0 \in V(\gamma)} \frac{1}{(2\pi)^{q|\Lambda|/2}} \int_{\mathcal{E}}\sup_{\sigma; \gamma \in \Gamma^e(\sigma)}\frac{g_{\Lambda; \beta, \varepsilon}^q(\sigma, h)}{g_{\Lambda; \beta, \varepsilon}^q(\tau_{\gamma}(\sigma), \theta_\gamma(h))}e^{-\frac{1}{2}\sum_{x \in \Lambda}\langle h_x, h_x \rangle}  dh 
\end{align}

In the third equation, we used that $\sum_{\sigma; \gamma \in \Gamma^e(\sigma)}g_{\Lambda; \beta, \varepsilon}^q(\tau_{\gamma}(\sigma), \theta_\gamma(h)) \leq \prod_{x\in\Lambda}\frac{1}{2\pi^{\frac{q}{2}}}e^{-\frac{1}{2}\langle h_x, h_x \rangle} $. Equation \eqref{Eq: quotient.of.gs} implies, 
\begin{align}\label{Eq: Upper.bound.on.Q.3}
    \sup_{\sigma; \gamma \in \Gamma^e(\sigma)}\frac{g_{\Lambda; \beta, \varepsilon}^q(\sigma, h)}{g_{\Lambda; \beta, \varepsilon}^q(\tau_{\gamma}(\sigma), \theta_\gamma(h))} &\leq    e^{- \beta c_2 |\gamma| + \beta \Delta_{\gamma}(h)} \sup_{\sigma; \gamma \in \Gamma^e(\sigma)}e^{\beta \varepsilon\sum_{x\in \Sp(\gamma)} (h_{x,\sigma_x}-h_{x,q})}\nonumber\\
    &= e^{- \beta c_2 |\gamma| + \beta \Delta_{\gamma}(h) + \beta \varepsilon\sum_{x\in \Sp(\gamma)} (h_{x,\sigma_x}-h_{x,q})}\nonumber\\
    &\leq  e^{- \beta \frac{c_2}{2} |\gamma|},
\end{align}
since the configuration is fixed inside the contour $\gamma$ and $\Delta_{\gamma}(h) + \varepsilon\sum_{x\in \Sp(\gamma)} (h_{x,\sigma_x}-h_{x,q})\leq \frac{c_2}{2}|\gamma|$, for all $h\in\mathcal{E}$, thus

\begin{equation*}
    \frac{1}{(2\pi)^{q|\Lambda|/2}} \int_{\mathcal{E}}\sup_{\sigma; \gamma \in \Gamma^e(\sigma)}\frac{g_{\Lambda; \beta, \varepsilon}^q(\sigma, h)}{g_{\Lambda; \beta, \varepsilon}^q(\tau_{\gamma}(\sigma), \theta_\gamma(h))}e^{-\frac{1}{2}\sum_{x \in \Lambda}\langle h_x, h_x \rangle}  dh \leq e^{- \beta \frac{c_2}{2} |\gamma|}.
\end{equation*}

The inequality above, together with Equation \eqref{Eq: Upper.bound.on.Q.1} and Proposition \ref{Bound_on_C_0_n}, yields
\begin{align*}
     \mathbb{Q}_{\Lambda; \beta, \varepsilon}^q(\sigma_0 \neq q) & \leq  \sum_{\gamma: 0 \in V(\gamma)} e^{- \beta \frac{c_2}{2} |\gamma|} + e^{-C_0/\varepsilon^2} + e^{-C_1/\varepsilon^2} \\
     &\leq \sum_{n\geq 1}e^{(-\beta \frac{c_2}{2}+c_1+\log(q))n} + e^{-C_0/\varepsilon^2} + e^{-C_1/\varepsilon^2}\\
    &\leq \frac{e^{-\beta \frac{c_2}{4}}}{1-e^{-\beta \frac{c_2}{4}}} + e^{-C_0/\varepsilon^2} + e^{-C_1/\varepsilon^2}\\[0.3cm]
    & \leq e^{-\beta\frac{c_2}{8}} + e^{-C'/\varepsilon^2}.
\end{align*}
The last inequality holds for $C' = \min\{C_0, C_1\}/2$, $\varepsilon^2 \leq C'/\log 2$ and $\beta > 4/c_2$. Putting also $2C = \min\{c_2/8, C'\}$,
\begin{equation*}
    \mathbb{Q}_{\Lambda; \beta, \varepsilon}^q(\sigma_0 \neq q) \leq e^{-\beta 2C} + e^{-2C / \varepsilon^2}.
\end{equation*}
The Markov Inequality finally yields
\begin{align*}
    \mathbb{P}\left( \mu_{\Lambda; \beta, \varepsilon h}^q(\sigma_0 \neq q) \geq e^{-C\beta} + e^{-C/\varepsilon^2}\right) &\leq \frac{\mathbb{Q}_{\Lambda; \beta, \varepsilon}^q(\sigma_0 \neq q)}{e^{-C\beta} + e^{-C/\varepsilon^2}}  \leq e^{-C\beta} + e^{-C/\varepsilon^2},
\end{align*}
which proves our claim.
\end{proof}

\section{Concluding Remarks}\label{concluding}

In the proof of phase transition, only part of Proposition 4.3 was used --- specifically, the requirement that the difference between the Hamiltonians is bounded below by a quantity proportional to $|\gamma|$. However, the full estimate is critical for further applications, such as establishing the convergence of the cluster expansion at low temperatures, this is certainly achievable for the models addressed in this paper through a straightforward adaptation of the methods in \cite{cluster}.

Another natural direction for future research is improving the results by Park \cite{Park.88.I, Park.88.II} on the Pirogov-Sinai theory for ferromagnetic long-range interactions to encompass all $\alpha > d$. Currently, his results apply only to $\alpha > 3d + 1$. We aim to address this limitation in subsequent papers, which are already in preparation.

The phase diagram of Ising and Potts models with decaying fields remains incomplete, even in the short-range case. A key open question is whether uniqueness holds for the critical exponent $\delta = 1$ when $h^*$ is sufficiently large (see \cite{Bis2}). Proving uniqueness when the field decays slowly is nonstandard, with the only known argument combining results from \cite{Bis2} and \cite{Cioletti2015} for the nearest-neighbor Ising case. Since the uniqueness of the Gibbs state in the short-range setting was obtained via contour arguments (see \cite{Bis2}), it is natural to explore whether similar arguments extend to disconnected contours in long-range models. Phase transitions for these models were established using the Peierls argument in the one-dimensional case \cite{Bissacot.Endo.18}, and for what appears to be a sharp region of exponents in the multidimensional case \cite{Affonso.2021} --- the same region covered in this paper.

Our proof of phase transition in the presence of a random field closely follows the methods in \cite{Johanes, Ding2023}, which demands $d \geq 3$. A thorough discussion of how the multiscaled contours present themselves in $d = 1$ for $\alpha > d$ (and not only when $\alpha = 2$ as in \cite{Frohlich.Spencer.82}) is addressed in \cite{Affonso_Bissacot_Corsini_Welsch_2024}, dealing also with the decaying field. For results on the random field long-range Ising model with $d = 1$ and $2$, see \cite{joao_chineses}.

In summary, many other results for short-range $q$-state models rely on the notion of contours. We expect that most of these results can be extended to the long-range setting using the multiscaled contours, tools, and ideas developed here.

\section*{Acknowledgements}

The authors are very grateful to Pierre Picco for useful comments about one-dimensional contours for long-range systems during the workshop \textit{Randomness 2024} at the Institute of Mathematics and Statistics (IME-USP) in February 2024. RB thanks Aernout van Enter for his generosity in sharing his insights and his knowledge with the community; all the discussions over the years were very important for the Brazilian group working in Mathematical Physics at the University of São Paulo (USP); the authors also thanks to him and Roberto Fernández for sharing many references from the area, in particular, the monograph by Gruber, Hintermann and Merlini \cite{Gruber}, which is one of the starting points of our work. The authors also thank João Maia for his suggestions concerning the random field case, for his careful reading in the preliminary version of this paper, and for his friendship during all these years. RB and GF are supported by the DINTER Project, a cooperation agreement between USP and IFMT, via the Graduate Program of Applied Mathematics at IME-USP.  RB was partially supported by USP-COFECUB Uc Ma 176/19, ``{\it Formalisme Thermodynamique des quasi-cristaux \`a temp\'erature z\'ero}\rq\rq. This study was supported by the São Paulo Research Foundation (FAPESP), Brasil, Processes Numbers 2016/25053-8 and 2023/00854-1. KW is supported by CAPES and CNPq grant 160295/2024-6. RB is supported by CNPq grants 311658/2025-3 and 407527/2025-7.

\bibliographystyle{habbrv}
\bibliography{refs}

\end{document}